\documentclass[a4paper,UKenglish,cleveref, thm-restate,autoref]{lipics-v2021}

\hideLIPIcs  

\usepackage{tikz}
\usepackage{mathtools}

\newcommand{\CE}{\textnormal{\textsc{Cluster Editing}}}

\newcommand{\CEwFD}{\textnormal{\textsc{Cluster Editing with Few Deletions}}}

\newcommand{\HFEwFD}{\textnormal{\textsc{$H$-Free Editing with Few Deletions}}}
\newcommand{\MCC}{\textnormal{\textsc{Multicolored Clique}}}
\newcommand{\fvs}{\operatorname{fvs}}
\newcommand{\fes}{\operatorname{fes}}
\newcommand{\pw}{\operatorname{pw}}
\newcommand{\proofparagraph}[1]{\smallskip\noindent\textbf{#1.}}
\newcommand{\mods}{\operatorname{mod}}
\newcommand{\miss}{\operatorname{miss}}
\newcommand{\yes}{\textup{\texttt{yes}}}
\newcommand{\no}{\textup{\texttt{no}}}
\newcommand{\lb}{\operatorname{lb}}

\newtheorem{branchingrule}[theorem]{Branching Rule}
\newtheorem{reductionrule}[theorem]{Reduction Rule}
\declaretheorem[style=definition,name=Construction,qed=$\diamond$]{construction}

\Crefname{observation}{Observation}{Observations}
\Crefname{branchingrule}{Branching Rule}{Branching Rules}
\Crefname{reductionrule}{Reduction Rule}{Reduction Rules}
\Crefname{claim}{Claim}{Claims}

\title{Parameterizing Cluster Editing\\by the Number of Deletions}

\titlerunning{Parameterizing Cluster Editing by the Number of Deletions} 

\author{Vincent Froese}{Technische Universität Berlin, Algorithmics and Computational Complexity Research Group, Berlin, Germany}{vincent.froese@tu-berlin.de}{https://orcid.org/0000-0002-8499-0130}{}

\author{Leon Kellerhals}{Technische Universität Clausthal, Clausthal-Zellerfeld, Germany}{leon.kellerhals@tu-clausthal.de}{https://orcid.org/0000-0001-6565-3983}{}

\author{Frank Sommer}{Institute of Computer Science, Friedrich Schiller University Jena, Germany}{frank.sommer@uni-jena.de}{https://orcid.org/0000-0003-4034-525X}{}

\author{Manuel Sorge}{TU Wien, Institute of Computer Science, Algorithms and Complexity Group, Vienna, Austria}{manuel.sorge@ac.tuwien.ac.at}{https://orcid.org/0000-0001-7394-3147}{}

\author{Ondřej {Suchý}}{Faculty of Information Technology, Czech Technical University in Prague, Prague, Czech Republic}{ondrej.suchy@fit.cvut.cz}{https://orcid.org/0000-0002-7236-8336}{}

\authorrunning{Vincent Froese, Leon Kellerhals, Frank Sommer, Manuel Sorge, Ondřej {Suchý}} 

\Copyright{Vincent Froese, Leon Kellerhals, Frank Sommer, Manuel Sorge, Ondřej {Suchý}} 

\ccsdesc[500]{Theory of computation~Parameterized complexity and exact algorithms}
\ccsdesc[500]{Theory of computation~Graph algorithms analysis}

\keywords{Cluster Editing, parameterized complexity, solution-based parameters, above-guarantee parameterization, kernelization, W[1]-hardness}

\category{} 

\relatedversion{} 

\acknowledgements{This work was initiated at a research retreat around Rolf Fest held in Jena, Germany, in July 2026. We thank Falk Hüffner for interesting discussions.}

\nolinenumbers 

\EventEditors{John Q. Open and Joan R. Access}
\EventNoEds{2}
\EventLongTitle{42nd Conference on Very Important Topics (CVIT 2016)}
\EventShortTitle{CVIT 2016}
\EventAcronym{CVIT}
\EventYear{2016}
\EventDate{December 24--27, 2016}
\EventLocation{Little Whinging, United Kingdom}
\EventLogo{}
\SeriesVolume{42}
\ArticleNo{23}

\usepackage{booktabs}
\usepackage[disable]{todonotes}

\usepackage{etoolbox}
\def\ShortVersion{1} 
\ifdefined\ShortVersion
\newcommand{\sv}[1]{#1}
\newcommand{\lv}[1]{}
\newcommand{\appendixText}{}
\newcommand{\toappendix}[1]{\gappto{\appendixText}{{#1}}}
\else
\newcommand{\sv}[1]{}
\newcommand{\lv}[1]{#1}
\newcommand{\appendixText}{}
\newcommand{\toappendix}[1]{#1}
\fi

\newcommand{\appmark}{$\star$}

\begin{document}

\maketitle

\begin{abstract}
In \CE{}, we are given a graph~$G$ and an integer~$k$ and ask whether at most~$k$ edge insertions and deletions turn~$G$ into a disjoint union of cliques.
Komusiewicz and Niedermeier asked in 2012 at MFCS whether \CE{} is fixed-parameter tractable with respect to the number~$\ell$ of edge deletions, a parameter based on the solution rather than on the structure of the input graph.
We answer this question negatively: \CE{} with at most~$k$ edits, at most~$\ell$ of which are deletions, is W[1]-hard with respect to~$\ell$, even when combined with the pathwidth and the feedback edge set number of the input graph.
The hardness extends to editing towards $H$-free graphs for every graph~$H$ that is a clique minus one edge on at least four vertices.
On the positive side, we show that the problem becomes fixed-parameter tractable when $\ell$ is combined with the excess~$\mu$ of~$k$ over the lower bound given by a packing of modification-disjoint induced paths on three vertices, a lower bound that state-of-the-art solvers use.
Each of the two parameters alone leads to intractability, but together they admit an $O(4^{\mu+\ell}\cdot(n+m+\mu))$-time algorithm and a problem kernel with $O((\ell+\mu)^2)$ vertices.
Both results extend to packings of stars and the algorithm extends to \textsc{$H$-free editing}.
\end{abstract}


\clearpage

\section{Introduction}
\looseness=-1
In \CE\ we are given a graph $G$ and an integer $k$ and want to know whether at most $k$ edges can be toggled (deleted if present, added if not) such that $G$ becomes a \emph{cluster graph}, a disjoint union of cliques (equivalently, a $P_3$-free graph).
It is one of the proving grounds for parameterized algorithmics, serving as touchstone for kernelization~\cite{GrammGHN05,ProttiSS09,ChenM12}, subexponential algorithms and matching running time lower bounds~\cite{FominKPPV14,KomusiewiczU12}, fixed-parameter enumeration~\cite{Damaschke10}, structural and other alternative parameterizations~\cite{
  KomusiewiczU12,A17}, above-guarantee parameterizations~\cite{BevernFK18,LiPS24}, and algorithm engineering~\cite{BockerBK11,HartungH15,BlasiusFGHHSWW22}, including the PACE challenge~2021~\cite{KellerhalsKNZ21}.
It was subject to a race improving the base of the exponential running-time part from $2.27$~\cite{GrammGHN05} 
via a series of improvements \cite{GrammGHN04,BockerBBT09,BockerD11}
to $1.62$~\cite{Bocker12}, and to a race decreasing the kernel size from $O(k^2)$ vertices~\cite{GrammGHN05} via $24k$~\cite{FellowsLRS07} and $4k$~\cite{Guo09} to $2k$ vertices~\cite{ChenM12,CaoC12}.

About 15 years ago, another race started for many parameterized problems; the race to prove tractability results for smaller and smaller parameters~\cite{Niedermeier10,FellowsJR13,KomusiewiczN12} (see also ref.s~\cite{Sorge26,HOPS,ISGCI}).
The most salient direction that researchers focused on here was to explore how graph problems behave under different graph-structural parameterizations.  
This has resulted in an immense range of papers and it is by now fairly well understood which structural graph parameters yield tractability for which classical problems; see, e.g., algorithmic meta-theorems~\cite{Courcelle90,CourcelleMR00,BonnetKTW22}, the studies introducing various tructural parameters~\cite{FellowsLMRS08,Ganian15,Lampis12,GajarskyLO13,DouchaK12}
, and systematic studies of individual problems~\cite{HartungKN15,MajumdarR18}.
The initial ideas were much broader, however.
For instance, Komusiewicz and Niedermeier~\cite{KomusiewiczN12} proposed studying solution-based parameters: parameterizing by features of the solution, including above- or below-guarantee parameterizations, but also other structural features of the solution.
While a number of case studies exist~\cite{ItoI09,KomusiewiczHMN09,KomusiewiczU12,LokshtanovM13,FominKPPV14,BevernFSS15,HuffnerKS15,ChechikJPP17,A17,FominGKK17,BevernFMMSS18,GolovachHLM20}, solution-based parameters have been studied far less systematically and are less well understood.
We contribute to closing this gap by studying a natural solution-based parameter for \CE.

In particular, Komusiewicz and Niedermeier asked whether \CE\ is fixed-parameter tractable with respect to the number $\ell$ of allowed edge deletions~\cite[p.~26]{KomusiewiczN12}:
Clearly, $\ell \leq k$ and the case $\ell = 0$ is trivial.
More generally, there is a trivial $O(n^{2\ell + 2})$-time algorithm.
In \cref{sec:exp} we show that on a large fraction of PACE Challenge 2021 instances, the median value of $\ell$ is $0.73$ times the optimal number of edits, which would make it an interesting replacement for $k$ in FPT algorithms.
Bounding the modifications \emph{locally} has been studied: Komusiewicz and Uhlmann~\cite{KomusiewiczU12} bound the number of edits incident to each vertex, and Abu-Khzam~\cite{A17} separately bounds the numbers of additions and deletions incident to each vertex, showing NP-hardness even when each vertex is incident to at most one deletion and two additions.
The global number~$\ell$ of deletions, however, has surprisingly not been studied so far; we aim to close this gap.
The bad news up front is that \CE\ parameterized by~$\ell$ is W[1]-hard.

\looseness=-1
Towards the good news, a second motivation for our research comes from the unreasonable 
effectiveness of practical implementations for \CE:
The most effective algorithms are based on fixed-parameter algorithms for $k$~\cite{KellerhalsKNZ21,BlasiusFGHHSWW22}.
But this parameter is relatively large in practical data (see \cref{sec:exp}), yet the algorithms are quite efficient.
What explains this?
One direction that researchers have explored is that the lower bounds employed by the implemented algorithms~\cite{GottesburenHSSW20,BlasiusFGHHSWW22} are quite close to the optimal solution (see also \cref{sec:exp}); in particular they use packings of modification-disjoint $P_3$s or stars, where modification disjointness means that each two packed $P_3$s do not overlap in two vertices.
However, parameterizing \CE\ by the number $\mu$ of edits beyond the lower bound given by a modification-disjoint packing of $P_3$s results in a para-NP-hard problem~\cite{LiPS24}. 
Hence, this is no theoretical explanation of the observed efficiency.
While parameterizing above a \emph{vertex-disjoint} packing of $P_3$s yields fixed-parameter tractability~\cite{BevernFK18}, this parameter is still often large in practice (\cref{sec:exp}).
Towards explaining this discrepancy, we in particular show that \CE\ becomes fixed-parameter tractable when parameterizing above a modification-disjoint $P_3$-packing (which is practically close to zero) together with the number $\ell$ of deletions (which is often significantly smaller than $k$, see \cref{sec:exp}).
It also allows for a polynomial-size problem kernel wrt. $\ell$ and $\mu$ combined and both these results transfer to the case where we replace $\mu$ by the excess $\sigma$ over a modification-disjoint star packing.

\looseness=-1
Towards a more general understanding, we then broaden our attention to \textsc{$H$-Free Editing}, where for some fixed graph $H$ we want to toggle at most $k$ edges in an input graph $G$ to ensure that the resulting graph is $H$-free, that is, it has no copy of $H$ as an induced subgraph.
Similar to \CE, \textsc{$H$-Free Editing} is NP-hard~\cite{AravindSS17} and FPT wrt.~$k$~\cite{Cai96}.
A well-known special case is \textsc{Diamond-free Editing}, which corresponds to $H$ being $K_4$ missing one edge, where even polynomial-size problem kernels are known~\cite{CRSY22}; their existence is settled for some other special cases of $H$ as well~\cite{DBLP:journals/jcss/MarxS22}.
When parameterizing \textsc{$H$-Free Editing} with the number of deletions $\ell$, the picture changes:
Aravind, Sandeep, and Sivadasan~\cite{AravindSS17} showed that adding edges to make a graph $H$-free is NP-hard whenever $H$ misses at least two edges.
This leaves \textsc{$H$-Free Editing} NP-hard for $\ell = 0$ in these cases.
Otherwise, if $H$ misses at most one edge, then \HFEwFD\ is trivially in XP by guessing the deletions and then adding the missing edge for each induced copy of $H$.
As we show, it is also W[1]-hard with respect to $\ell$.
On the positive side, we show that the above fixed-parameter algorithms for \CE\ with respect to $\ell$ combined with $\mu$ transfer to \textsc{$H$-Free Editing} when replacing the underlying packing of $\mu$ with a packing of copies of $H$.
We leave open whether polynomial-size problem kernels exist for this parameterization. 



\subparagraph*{Organization and results}
After formally defining \CEwFD{} and \HFEwFD{} and collecting notation in \cref{sec:preliminaries}, we prove our hardness results in \cref{sec:hardness}.
\Cref{thm:ce-hardness} shows that \CEwFD{} is W[1]-hard parameterized by~$\ell$, even combined with the pathwidth and the feedback edge set number of the input graph. 
This answers the question of Komusiewicz and Niedermeier~\cite[p.~26]{KomusiewiczN12} negatively.
The reduction is from \MCC{} and adapts a reduction of Enciso et al.~\cite{EncisoFGKRS09} for equitable connected partitions.
\Cref{thm:hfree-hardness} extends the hardness to \HFEwFD{} for every $H=K_\eta-e$ with $\eta\geq4$, with the feedback vertex set number in place of the feedback edge set number.
In \cref{sec:above-guarantees}, we combine $\ell$ with the excess~$\mu$ above a modification-disjoint $P_3$-packing.
\Cref{thm:fpt-mu-ell} solves \CEwFD{} in $O(4^{\mu+\ell}\cdot(n+m+\mu))$ time, \cref{thm:hfree-fpt-mu-ell} transfers fixed-parameter tractability to \HFEwFD{}, with the excess over a modification-disjoint packing of induced copies of~$H$ in place of~$\mu$, and \cref{thm:fpt-sigma-ell} replaces the $P_3$-packings in \cref{thm:fpt-mu-ell} by packings of stars with corresponding excess~$\sigma$.
We then give a kernel for \CEwFD\ with $O((\ell+\mu)^2)$ vertices and $O((\ell+\mu)^3)$ edges (\cref{thm:kernel-mu-ell}) and a polynomial kernel for $\ell+\sigma$ (\cref{thm:kernel-sigma-ell}).
\Cref{sec:discussion} concludes with open questions.
Proofs of results marked with~\appmark{} are deferred to the appendix, which also contains our experiments (\cref{sec:exp}).

\section{Preliminaries}\label{sec:preliminaries}

A \emph{cluster graph} is a graph whose connected components are cliques. This is equivalent to not containing any induced $P_3$ (a path on three vertices).
For an undirected graph $G=(V,E)$ and a set $F\subseteq\binom{V}{2}$, the graph
$G\triangle F:=(V,E\triangle F)$, where~$\triangle$ is the symmetric difference, is obtained by toggling every pair in $F$.
The pairs in $F\cap E$ are the \emph{edge deletions}, and the pairs in
$F\setminus E$ are the \emph{edge additions}.
Our study deals mainly with the following problem.
\begin{center}
\begin{tabularx}{\textwidth}{@{}lX@{}}
\multicolumn{2}{l}{\CEwFD}\smallskip\\
\textbf{Input:} & An undirected graph $G=(V,E)$ and non-negative integers $k$ and
$\ell$.\\
\textbf{Question:} & Is there a set $F\subseteq\binom{V}{2}$ such that
$|F|\leq k$, $|F\cap E|\leq\ell$, and $G\triangle F$ is a cluster graph?
\end{tabularx}
\end{center}
We also consider the following generalization.
Let $H$ be a graph.
A graph $G$ is $H$-free if it does not contain an induced subgraph isomorphic to $H$.
For example, the $P_3$-free graphs are exactly the cluster graphs.
\begin{center}
\begin{tabularx}{\textwidth}{@{}lX@{}}
\multicolumn{2}{l}{\textsc{$H$-Free Editing with Few Deletions}}\smallskip\\
\textbf{Input:} & An undirected graph $G=(V,E)$ and non-negative integers $k$ and
$\ell$.\\
\textbf{Question:} & Is there a set $F\subseteq\binom{V}{2}$ such that
$|F|\leq k$, $|F\cap E|\leq\ell$, and $G\triangle F$ is $H$-free?
\end{tabularx}
\end{center}
Our algorithms use lower bounds given by packings of induced subgraphs.
For an induced subgraph~$Q$ of~$G$, we call $\mods(Q)\coloneqq\binom{V(Q)}{2}$ the \emph{modification set} of~$Q$: toggling a pair in $\mods(Q)$ changes $G[V(Q)]$, whereas toggling a pair outside $\mods(Q)$ does not.
A collection of induced subgraphs of~$G$ is \emph{modification disjoint} if their modification sets are pairwise disjoint, that is, if any two of them share at most one vertex.
If $Q$ is isomorphic to $K_h-e$ for some $h\geq3$, for instance if $Q$ is a~$P_3$, then we call the unique pair of nonadjacent vertices of~$Q$ its \emph{missing pair} $\miss(Q)$.

For a graph $G$, let $\pw(G)$ denote its pathwidth, let $\fvs(G)$ denote the
minimum size of a vertex set whose removal makes $G$ a forest, and let
$\fes(G)$ denote the minimum size of an edge set whose removal makes $G$ a
forest.  We write $[p]:=\{1,\ldots,p\}$ for a positive integer $p$.


\looseness=-1\subparagraph*{Parameterized complexity.}
We use standard notions of parameterized complexity~\cite{CyganFKLMPPS15,DowneyF13,Niedermeier06}.
A \emph{decision problem} is a set~$\Pi$ whose elements are called \emph{instances}, together with a subset of~$\Pi$ whose elements are called the \emph{yes-instances}. 
The remaining instances are the \emph{no-instances}.
A \emph{parameterized problem} is a pair $(\Pi,\kappa)$ of a decision problem~$\Pi$ and a computable function $\kappa\colon\Pi\to\mathbb{N}$, the \emph{parameterization}; for an instance~$I$, the number $\kappa(I)$ is its \emph{parameter}.
The problem is \emph{fixed-parameter tractable} (FPT) if there is an \emph{FPT-algorithm}, that decides every instance~$I$ in $f(\kappa(I))\cdot|I|^{O(1)}$ time for some computable function~$f$.
Showing that a parameterized problem is \emph{W[1]-hard} shows that it is presumably not fixed-parameter tractable, since otherwise FPT${}={}$W[1], which is considered unlikely.
A \emph{problem kernel} is a polynomial-time algorithm that maps every instance~$I$ to an equivalent instance~$I'$, that is, $I$ is a yes-instance if and only if $I'$ is, such that $|I'|\leq h(\kappa(I))$ for some computable function~$h$, the \emph{size} of the kernel.
If $h$ is a polynomial, then the kernel is a \emph{polynomial kernel}.

One common technique for designing FPT-algorithms 
is \emph{branching}: 
A \emph{branching rule} for $\Pi$ is a partial map $R$ from $\Pi$ to the set of finite subsets of $\Pi$, and the instances on which $R$ is defined are the instances that $R$ \emph{applies to}.
The rule $R$ is \emph{safe} if every instance $I$ that it applies to satisfies:
$I$ is a yes-instance of $\Pi$ if and only if $R(I)$ contains a yes-instance of $\Pi$.

\section{Hardness}\label{sec:hardness}
\toappendix{\sv{\section{Missing Details for the Hardness Results}\label{app:hardness}}}

We resolve the open question of Komusiewicz and Niedermeier~\cite{KomusiewiczN12} in the negative by showing that \CEwFD{} is W[1]-hard with respect to~$\ell$ (\cref{sec:ce-hardness}).
Afterwards, we generalize the result to \HFEwFD{} (\cref{sec:hfree-hardness}) for all graphs $H$ that are a clique minus one edge, such as diamonds.
Both reductions share a common core, the \emph{choice graph}, which we introduce first (\cref{sec:choice-graph}).

The hardness is based on the following idea.
With only few deletions allowed, we may be forced to cut an input graph into connected components such that the remaining edit budget suffices to complete them into cliques.
Since completing a sparse graph requires roughly a quadratic number of edge additions,
the overall number of additions is minimized if the connected components after the deletions have pairwise equal size.
Indeed, Enciso et al.~\cite[Thm.~1]{EncisoFGKRS09} showed that partitioning a graph into $r$ connected parts whose sizes differ by at most one is W[1]-hard parameterized by~$r$.
We closely follow their reduction from \MCC, which encodes vertex and edge choices by the positions at which specific paths are cut.
Recall that an instance of \MCC{} consists of a graph $J$ together with a partition $U_1, \dots, U_p$ of its vertex set~$U$ into \emph{color classes}, and the parameter $p$; we refer to the index~$i$ of $U_i$ as a \emph{color}.
It is a \yes-instance if $J$ contains a clique containing exactly one vertex of every color class.
We assume without loss of generality that each color class contains $n_0$ vertices, say $U_i=\{u^i_1,\ldots,u^i_{n_0}\}$, and that there are at least $p \ge 3$ color classes.

\subsection{Choice Gadgets and the Choice Graph}\label{sec:choice-graph}
\toappendix{\sv{\subsection{Choice Gadgets and the Choice Graph}\label{app:choice-graph}}}

Both reductions are based on the choice graph which allows us to select the vertices and edges of the clique in $J$ in terms of so-called \emph{choices}.
The choice graph is built from so-called \emph{choice gadgets} which differ between the two reductions, but share the following structure.
Let $b$ be a positive integer and let $A=\{a_1, \ldots, a_T\}\subseteq[b]$ with $a_1 < \cdots < a_T$.
An \emph{$(A,b)$-choice gadget} from a vertex~$x$ to a vertex~$y$ is a graph~$Q$ with $x,y\in V(Q)$, the \emph{endpoints} of~$Q$,
together with $T$ \emph{options} $o_1,\ldots,o_T$, each of which is a vertex or an edge of~$Q$;
the option~$o_t$ has \emph{value}~$a_t$.
The vertices in $V(Q)\setminus\{x,y\}$ are the \emph{inner vertices} of~$Q$.

For integers $\lambda\ge1$ and $c\ge0$, we say that $Q$ has the \emph{$(\lambda,c)$-option property} if, for every $t\in[T]$,
removing~$o_t$ from~$Q$ leaves a connected component containing~$x$ with exactly $\lambda a_t+c$ inner vertices
and a different connected component containing~$y$ with exactly $\lambda(b-a_t)+c$ inner vertices.
That is, choosing an option of value~$a$ splits the gadget such that $x$ receives a share that is proportional to~$a$ and $y$ a share that is proportional to~$b-a$.

\begin{construction}[Choice graph]
  \label{constr:choice-graph}
  We are given a choice gadget construction and an instance of \MCC{} with above naming and assumptions.
  Let $m_0 \coloneqq |E(J)|$ and fix an arbitrary ordering of the edges such that $E(J) = \{e_1, \dots, e_{m_0}\}$.
  We call $\gamma$ the \emph{identifier} of the edge~$e_\gamma$.
  Set $z_0 \coloneqq 2m_0 + 10$ and, for distinct $i,j \in [p]$, define
  \begin{align*}
    A_i^0 &:= \{\theta z_0 \mid \theta\in[n_0]\},\\
    A_i^j &:= \{\theta z_0+\gamma \mid \theta\in[n_0],\ u\in U_j,\
    \{u^i_\theta,u\}\in E(J) \text{ such that edge } \{u^i_\theta,u\} \text{ has identifier } \gamma\}.
  \end{align*}
  For each ordered color pair $(i,j)$ with $i\neq j$, create two \emph{anchor roots} $N_i^j$ and $P_i^j$.
  For $i\in[p]$, let $\operatorname{succ}_i$ be the cyclic permutation of $[p]\setminus\{i\}$
  that maps every color to the next larger one, and the largest one to the smallest one.
  For every ordered color pair~$(i,j)$, add the following three new choice gadgets (see \cref{fig:choice-graph}):
  \begin{enumerate}
  \item a \emph{local gadget}, which is an $(A_i^j,n_0 z_0+m_0)$-choice gadget
        from $N_i^j$ to $P_i^j$;
  \item a \emph{successor gadget}, which is an $(A_i^0,n_0 z_0)$-choice gadget
        from $P_i^j$ to $N_i^{\operatorname{succ}_i(j)}$; and
  \item a \emph{mirror gadget}, which is an $([m_0],m_0)$-choice gadget from $P_i^j$
        to $N_j^i$.
  \end{enumerate}
  Let $\mathcal Q$ be the set of all these choice gadgets.
  The resulting graph is the \emph{choice graph}.
\end{construction}
\begin{figure}[t]
\centering
\begin{tikzpicture}[
  anchorroot/.style={draw, circle, fill=white, minimum size=7mm, inner sep=1pt,
                     font=\scriptsize},
  local/.style={black, line width=.8pt},
  successor/.style={blue!70!black, densely dotted, line width=1pt},
  mirror/.style={red!70!black, dashed, line width=.9pt},
  label/.style={font=\scriptsize}
]
  \coordinate (N12) at (-4.25,.65);
  \coordinate (P12) at (-2.95,.65);
  \coordinate (N13) at (-2.95,-.65);
  \coordinate (P13) at (-4.25,-.65);
  \coordinate (N21) at (-.65,.65);
  \coordinate (P21) at (.65,.65);
  \coordinate (N23) at (.65,-.65);
  \coordinate (P23) at (-.65,-.65);
  \coordinate (N31) at (2.95,.65);
  \coordinate (P31) at (4.25,.65);
  \coordinate (N32) at (4.25,-.65);
  \coordinate (P32) at (2.95,-.65);

  \draw[local] (N12) -- (P12) (N13) -- (P13);
  \draw[local] (N21) -- (P21) (N23) -- (P23);
  \draw[local] (N31) -- (P31) (N32) -- (P32);
  \draw[successor] (P12) -- (N13) (P13) -- (N12);
  \draw[successor] (P21) -- (N23) (P23) -- (N21);
  \draw[successor] (P31) -- (N32) (P32) -- (N31);
  \draw[mirror] (P12) -- (N21);
  \draw[mirror] (P21) .. controls (.2,1.45) and (-3.8,1.45) .. (N12);
  \draw[mirror] (P32) -- (N23);
  \draw[mirror] (P23) .. controls (-.2,-1.45) and (3.8,-1.45) .. (N32);
  \draw[mirror] (P31) .. controls (4,2) and (-2.7,2) .. (N13);
  \draw[mirror] (P13) .. controls (-4,-2) and (2.7,-2) .. (N31);

  \node[anchorroot] at (N12) {$N_1^2$};
  \node[anchorroot] at (P12) {$P_1^2$};
  \node[anchorroot] at (N13) {$N_1^3$};
  \node[anchorroot] at (P13) {$P_1^3$};
  \node[anchorroot] at (N21) {$N_2^1$};
  \node[anchorroot] at (P21) {$P_2^1$};
  \node[anchorroot] at (N23) {$N_2^3$};
  \node[anchorroot] at (P23) {$P_2^3$};
  \node[anchorroot] at (N31) {$N_3^1$};
  \node[anchorroot] at (P31) {$P_3^1$};
  \node[anchorroot] at (N32) {$N_3^2$};
  \node[anchorroot] at (P32) {$P_3^2$};
  \node[label] at (-3.6,0) {color 1};
  \node[label] at (0,0) {color 2};
  \node[label] at (3.6,0) {color 3};

  \draw[local] (-4.25,-2.0) -- (-3.55,-2.0)
    node[label, right] {local gadget};
  \draw[successor] (-.75,-2.0) -- (-.05,-2.0)
    node[label, right] {successor gadget};
  \draw[mirror] (2.55,-2.0) -- (3.25,-2.0)
    node[label, right] {mirror gadget};

  \begin{scope}[shift={(6.4,0)},
      spinev/.style={draw, circle, fill=white, minimum size=4mm, inner sep=0pt, font=\scriptsize},
      leafv/.style={draw, circle, fill=black, minimum size=1.6mm, inner sep=0pt}]
    \coordinate (v1) at (0,1.5);  \coordinate (v2) at (0,.5);
    \coordinate (v3) at (0,-.5);  \coordinate (v4) at (0,-1.5);
    \draw[local] (v1) -- node[label, left] {$o_1$: $2$} (v2)
                      -- node[label, left] {$o_2$: $3$} (v3)
                      -- node[label, left] {$o_3$: $5$} (v4);
    \foreach \v/\dy in {v1/.2, v1/-.2, v3/0, v4/0}
      \draw (\v) -- ++(.75,\dy) node[leafv] {};
    \node[spinev] at (v1) {$x$};
    \node[spinev] at (v2) {};
    \node[spinev] at (v3) {};
    \node[spinev] at (v4) {$y$};
  \end{scope}
\end{tikzpicture}
\caption{Left: the choice graph for $p=3$.  Each choice gadget is represented by
one edge.  For each color, the local and successor gadgets form a
four-cycle; the mirror gadgets join different color cycles.
Right: the $(\{2,3,5\},6)$-choice gadget from~$x$ to~$y$ of \cref{constr:ce-choice-gadget}; deleting its option~$o_t$ (labeled with its value) leaves exactly that many of its six inner vertices in the component of~$x$.}
\label{fig:choice-graph}
\end{figure}

The choice graph has
$r \coloneqq 2p(p-1)$ anchor roots and $q \coloneqq |\mathcal Q|=3p(p-1)$ choice gadgets,
which pairwise intersect at most in a common endpoint.
Every anchor root is an endpoint of exactly three choice gadgets:
$P_i^j$ of the three choice gadgets for~$(i,j)$, and
$N_i^j$ of the local gadget for~$(i,j)$, the successor gadget for~$(i,\operatorname{succ}_i^{-1}(j))$, and the mirror gadget for~$(j,i)$.
If all choice gadgets are connected, then so is the choice graph:
the local and successor gadgets of color~$i$ form a cycle through all color-$i$ anchor roots; the mirror gadgets connect these cycles.

A \emph{selection} chooses one option~$o_Q$ of every choice gadget $Q\in\mathcal Q$.
If $Q$ is an $(A_Q,b_Q)$-choice gadget from~$x$ to~$y$ and $o_Q$ has value~$a$,
then $Q$ \emph{contributes} $a$ to~$x$ and $b_Q-a$ to~$y$.
The selection is \emph{balanced} if every anchor root receives contributions summing to exactly
$  d \coloneqq n_0 z_0+m_0$
from its three incident choice gadgets.
Note that these notions only depend on the parameters and the options of the choice gadgets, but not on their structure.

\begin{restatable}[\appmark]{lemma}{balancedSelection}\label{lem:balanced-selection}
  For every choice gadget construction, the choice graph returned from \cref{constr:choice-graph}
  has a balanced selection
  if and only if
  the given instance of \MCC{} is a \yes-instance.
\end{restatable}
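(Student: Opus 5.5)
The plan is to turn the balance condition into a system of equations on the chosen option values, and then read a multicolored clique off the solutions (and conversely). For a selection, write $\alpha_i^j$, $\beta_i^j$, $\delta_i^j$ for the values of the options chosen in the local, successor and mirror gadget of the ordered pair $(i,j)$. By definition $\alpha_i^j\in A_i^j$, $\beta_i^j\in A_i^0$ and $\delta_i^j\in[m_0]$.

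First I collect the contributions at each anchor root, using the list of incident gadgets given after \cref{constr:choice-graph}.
\begin{itemize}
\item The root $P_i^j$ receives $(n_0z_0+m_0)-\alpha_i^j$ from its local gadget, $\beta_i^j$ from its successor gadget and $\delta_i^j$ from its mirror gadget. So balance at $P_i^j$ is equivalent to $\alpha_i^j=\beta_i^j+\delta_i^j$.
\item The root $N_i^j$ receives $\alpha_i^j$, then $n_0z_0-\beta_i^{j'}$ with $j'=\operatorname{succ}_i^{-1}(j)$, and $m_0-\delta_j^i$. So balance at $N_i^j$ is equivalent to $\alpha_i^j=\beta_i^{j'}+\delta_j^i$.
\end{itemize}
Hence a selection is balanced if and only if both families of equations hold for all ordered pairs.

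Next I decode these equations using $z_0=2m_0+10>m_0$. Every $\alpha_i^j$ has the form $\theta z_0+\gamma$ with $1\le\gamma\le m_0<z_0$, every $\beta$ is a multiple $\theta' z_0$, and every $\delta$ lies in $[1,m_0]$. Comparing quotient and remainder modulo $z_0$ in the $P$-equation gives two facts: $\beta_i^j$ has the same $\theta$ as $\alpha_i^j$, and $\delta_i^j=\gamma_i^j$, the remainder of $\alpha_i^j$. The $N$-equation similarly gives two facts: $\beta_i^{\operatorname{succ}_i^{-1}(j)}$ has the same $\theta$ as $\alpha_i^j$, and $\delta_j^i=\gamma_i^j$.

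Chaining the $\theta$-equalities around the cycle of color $i$ (local, successor, local, \dots), which visits all $j\ne i$, yields a single $\theta_i$ per color. Combining the two $\delta$ facts gives $\gamma_i^j=\delta_i^j=\delta_j^i\ldots$; more precisely $\gamma_i^j=\delta_j^i=\gamma_j^i$. By the definition of $A_i^j$, the value $\theta_iz_0+\gamma_i^j$ certifies that edge $e_{\gamma_i^j}$ joins $u^i_{\theta_i}$ to some vertex of $U_j$. Symmetrically, the same edge joins $u^j_{\theta_j}$ to a vertex of $U_i$. So $e_{\gamma_i^j}=\{u^i_{\theta_i},u^j_{\theta_j}\}$, and $\{u^i_{\theta_i}\mid i\in[p]\}$ is a multicolored clique.

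Conversely, given a multicolored clique $\{u^i_{\theta_i}\}$, let $\gamma_{ij}=\gamma_{ji}$ be the identifier of the edge $\{u^i_{\theta_i},u^j_{\theta_j}\}$. Set $\alpha_i^j=\theta_iz_0+\gamma_{ij}$, $\beta_i^j=\theta_iz_0$ and $\delta_i^j=\gamma_{ij}$. These are valid option values, and both families of equations hold, so the selection is balanced.

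The only delicate step is the unique decomposition modulo $z_0$. It needs $1\le\delta\le m_0<z_0$ and $\gamma\ge1$, so that no carry occurs between the multiple of $z_0$ and the remainder. Apart from that, the main care is getting the incidences and orientations (which endpoint receives $a$ and which receives $b-a$) right for each gadget type.
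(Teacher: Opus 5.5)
Your proposal is correct and follows essentially the same route as the paper. You reduce balance at $P_i^j$ and $N_i^j$ to the two demand equations and split each into its $z_0$-digit and unit digit using $1\le\gamma,\delta\le m_0<z_0$; the paper phrases this step as a multiple of $z_0$ equalling a number of absolute value below $z_0$. You then chain the vertex indices around the color-$i$ cycle, match the edge identifiers through the mirror gadgets, and prove the converse with the same explicit choice of options.
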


\looseness=-1
\sv{Reading~$d$ in base~$z_0$ explains the construction:
its $z_0$-digit~$n_0$ represents a vertex index, and its unit digit~$m_0$ represents an edge identifier.
Choosing the option of value $\theta z_0+\gamma$ in the local gadget for~$(i,j)$ thus chooses the vertex~$u^i_\theta$ together with an incident edge into~$U_j$ with identifier~$\gamma$.
The successor gadgets pass on vertex indices only, forcing all local gadgets of color~$i$ to choose the same vertex;
the mirror gadgets pass on edge identifiers only, forcing the local gadgets for~$(i,j)$ and~$(j,i)$ to choose the same edge.
\sv{We give the details in \cref{app:choice-graph}.}}

\toappendix{
\sv{In this section, we prove \cref{lem:balanced-selection}.}
We first describe in more detail how the choice graph works.
Since $m_0<z_0$, each element of $A_i^j$ can be written as $\theta z_0+\gamma$ with
$1\leq\gamma<z_0$ in exactly one way: the factor $\theta$ identifies the color-$i$
vertex $u^i_\theta$, and the remainder $\gamma$ is the identifier of an edge with
endpoints $u^i_\theta$ and a vertex of $U_j$.  A single number in $A_i^j$ thus
encodes a color-$i$ vertex together with an incident edge into $U_j$, whereas
the numbers in $A_i^0$ carry only a vertex index and those in $[m_0]$ only an
edge identifier.

Reading the demand $d=n_0z_0+m_0$ in base $z_0$ explains the three
kinds of choice gadgets.  Its $z_0$-digit $n_0$ is a budget for a vertex index and its
unit digit $m_0$ is a budget for an edge identifier.  Choosing an option of the local
gadget for $(i,j)$ contributes some $\theta z_0+\gamma$ to $N_i^j$ and the complementary
$(n_0-\theta)z_0+(m_0-\gamma)$ to $P_i^j$; it therefore encodes both a color-$i$
vertex $u^i_\theta$ and an edge from $u^i_\theta$ into $U_j$.
Choosing an option of the successor gadget
for $(i,j)$ contributes some $wz_0$ to $P_i^j$ and $(n_0-w)z_0$ to
$N_i^{\operatorname{succ}_i(j)}$, so it encodes a vertex index and nothing
else; choosing an option of the mirror gadget for $(i,j)$ contributes some $g$ to $P_i^j$ and
$m_0-g$ to $N_j^i$, so it encodes an edge identifier and nothing else.
Since $z_0>m_0$, an identifier never reaches the $z_0$-digit; this is what
lets us read both a vertex index and an edge identifier off a single
equation.

How the choice gadgets are wired then turns these local decisions into the two
consistency requirements of a multicolored clique.  The local and successor
gadgets of color $i$ alternate along the cycle through all $2(p-1)$ anchor
roots of color $i$, and a successor gadget passes on a vertex
index only; hence all $p-1$ local gadgets for color $i$ must choose one and the
same vertex of $U_i$.  A mirror gadget links the ordered pairs $(i,j)$ and
$(j,i)$ and passes on an edge identifier only; hence these two pairs must
choose the same edge identifier and therefore the same edge.  That edge then
joins the vertex chosen for color $i$ to the vertex chosen for color $j$, so
the $p$ chosen vertices are pairwise adjacent.
We now make this formal.

\sv{\balancedSelection*}
\begin{proof}
Suppose first that a balanced selection is given.  We introduce the
following quantities for every ordered pair $(i,j)$ with $i\neq j$.  The local
gadget from $N_i^j$ to $P_i^j$ contributes an element of $A_i^j$ to $N_i^j$;
write this element as $\theta_{ij}z_0+\gamma_{ij}$ with $1\leq\gamma_{ij}<z_0$,
which is possible in exactly one way, so that $\gamma_{ij}$ is the identifier
of an edge with endpoints $u^i_{\theta_{ij}}$ and a vertex of $U_j$.
Let the successor gadget leaving $P_i^j$ contribute $w_{ij}z_0$ to
$P_i^j$, and let the mirror gadget from $P_i^j$ to $N_j^i$ contribute
$g_{ij}$ to $P_i^j$; here $w_{ij}\in[n_0]$ and $g_{ij}\in[m_0]$,
because the contributions lie in $A_i^0$ and in $[m_0]$, respectively.
As the contributions of a choice gadget to its two endpoints sum to its parameter~$b_Q$,
each of these numbers also fixes the contribution to the other endpoint of its choice gadget.

The anchor root $N_i^j$ is an endpoint of the local gadget for $(i,j)$, of the
successor gadget for $(i,\operatorname{succ}_i^{-1}(j))$, and of the mirror
gadget
for $(j,i)$, and the anchor root $P_i^j$ is an endpoint of the three choice gadgets
for $(i,j)$.  As the selection is balanced, the three contributions at each
of these two anchor roots sum to the demand $d=n_0 z_0+m_0$.  Listing the
three summands in the order local, successor, mirror, these two \emph{demand
equations} read
\begin{align}
 \theta_{ij}z_0+\gamma_{ij}
   +(n_0-w_{i,\operatorname{succ}_i^{-1}(j)})z_0
   +(m_0-g_{ji})
     &= n_0 z_0+m_0
     && \text{(demand at $N_i^j$)},\label{eq:demand-at-n}\\
 (n_0 z_0+m_0-\theta_{ij}z_0-\gamma_{ij})
   +w_{ij}z_0+g_{ij}
     &= n_0 z_0+m_0
     && \text{(demand at $P_i^j$)}.\label{eq:demand-at-p}
\end{align}
Rearranged, \cref{eq:demand-at-n} reads
$(\theta_{ij}-w_{i,\operatorname{succ}_i^{-1}(j)})z_0=g_{ji}-\gamma_{ij}$ and
\cref{eq:demand-at-p} reads $(w_{ij}-\theta_{ij})z_0=\gamma_{ij}-g_{ij}$.  Each
right-hand side is a difference of two numbers in $[m_0]$ and thus has
absolute value at most $m_0-1<z_0$, whereas each left-hand side is a multiple
of $z_0$.  Both sides therefore vanish, which gives
\begin{align}\label{eq:consistency}
 \theta_{ij}=w_{ij}=w_{i,\operatorname{succ}_i^{-1}(j)}
 \quad\text{and}\quad
 \gamma_{ij}=g_{ij}=g_{ji}.
\end{align}
For fixed~$i$, the first equalities give
$\theta_{i,\operatorname{succ}_i(j)}=w_{ij}=\theta_{ij}$ for every $j\neq i$;
since $\operatorname{succ}_i$ is a cyclic permutation of $[p]\setminus\{i\}$,
all $\theta_{ij}$ with $j\neq i$ have the same value, which we denote by~$\theta_i$.
Consequently, all local gadgets for color $i$ choose the same vertex
$u^i_{\theta_i}$.  Applying the second equality in
\cref{eq:consistency} also with $(j,i)$ shows that the local gadgets for
$(i,j)$ and $(j,i)$ use the same edge identifier and hence the same edge,
which is therefore $\{u^i_{\theta_i},u^j_{\theta_j}\}$.  Hence, the chosen vertices
form a multicolored clique.

Conversely, suppose that $J$ has a multicolored clique with vertices $u^i_{\theta_i}$, $i\in[p]$, and let $\gamma_{ij}$ be the identifier of
$\{u^i_{\theta_i},u^j_{\theta_j}\}$.  For every ordered pair $(i,j)$, choose
the option of value $\theta_i z_0+\gamma_{ij}$ in the local gadget,
the option of value $\theta_i z_0$ in the successor gadget, and
the option of value $\gamma_{ij}$ in the mirror gadget for $(i,j)$.
These options exist because these values
belong to $A_i^j$, $A_i^0$, and $[m_0]$, respectively.
Since $\gamma_{ij}=\gamma_{ji}$, substitution in
\cref{eq:demand-at-n,eq:demand-at-p} gives $d$ at both $N_i^j$ and
$P_i^j$.  Thus, every anchor root receives its demand and the selection is balanced.
\end{proof}
}

In both reductions, a solution will cut every choice gadget by removing its chosen option,
and the option property translates contributions into numbers of vertices.

\begin{observation}\label{obs:selection-components}
  Suppose that every choice gadget of the choice graph has the $(\lambda,c)$-option property, and let $(o_Q)_{Q\in\mathcal Q}$ be a selection.
  After removing all options~$o_Q$, every anchor root~$v$ lies in its own connected component,
  which contains exactly $\lambda s_v+3c$ inner vertices of choice gadgets, where $s_v$ is the sum of the contributions of the three choice gadgets incident with~$v$.
  In particular, all these components contain exactly $\lambda d+3c$ inner vertices if and only if the selection is balanced.
\end{observation}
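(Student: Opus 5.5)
The plan is to look at one gadget at a time and then glue the pieces together at the anchor roots. The choice gadgets pairwise intersect at most in a common endpoint, so every edge of the choice graph lies in exactly one gadget. Every vertex other than an anchor root is an inner vertex of exactly one gadget. Removing the options $o_Q$ for all $Q\in\mathcal Q$ therefore acts on each gadget independently. Fix a gadget $Q$ from $x$ to $y$ whose chosen option $o_Q$ has value $a$. By the $(\lambda,c)$-option property, $Q-o_Q$ has a component $C_x$ containing $x$ with exactly $\lambda a+c$ inner vertices. It also has a different component $C_y$ containing $y$ with exactly $\lambda(b_Q-a)+c$ inner vertices. These numbers are $\lambda$ times the contributions of $Q$ to $x$ and to $y$, each plus $c$.

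The next step is to pin down the components of the whole graph. Every connected component of $Q-o_Q$ that contains neither $x$ nor $y$ has no vertex shared with any other gadget. It therefore stays a separate component of the whole graph and contains no anchor root. Any other component of the whole graph is a union of pieces $C_x$ and $C_y$ glued along anchor roots. Two pieces from different gadgets can only meet in a common anchor root, since inner vertices are private to their gadget. Within one gadget, $C_x$ and $C_y$ are distinct components of $Q-o_Q$, so $x$ and $y$ are disconnected inside $Q-o_Q$. Hence no path can run from one anchor root to another: such a path would have to cross some gadget from one endpoint to the other within $Q-o_Q$. So each anchor root $v$ lies in its own component $K_v$. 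This component is the union of the three pieces at $v$, one from each of the three gadgets incident with $v$.

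Counting inner vertices is then immediate. The three pieces have disjoint sets of inner vertices, because those vertices belong to different gadgets. So $K_v$ contains $\sum(\lambda\cdot\text{contribution}+c)=\lambda s_v+3c$ inner vertices. The final equivalence follows because $\lambda\ge1$: we have $\lambda s_v+3c=\lambda d+3c$ if and only if $s_v=d$, which holds for all $v$ exactly when the selection is balanced.

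The main obstacle is the careful argument that $K_v$ consists exactly of the three pieces at $v$ and contains no other anchor root. This needs the facts that $C_x\ne C_y$ and that gadgets overlap only in endpoints. One subtlety to check is the case where $o_Q$ is a vertex: $o_Q$ is removed, but it must not be an endpoint, which the option property implicitly guarantees.
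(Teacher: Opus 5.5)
Your proposal is correct and follows the paper's own argument. Because gadgets meet only in common endpoints, the component of $v$ is $v$ together with the three components of $Q-o_Q$ containing $v$. None of these reaches the other endpoint, by the option property, and summing $\lambda a+c$ over the three gadgets gives $\lambda s_v+3c$. You add details the paper leaves implicit: why no path can cross a gadget to another anchor root, and that $\lambda\ge 1$ gives the final equivalence.
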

\begin{proof}
  Since choice gadgets pairwise intersect at most in a common endpoint,
  the component of~$v$ consists of~$v$ and, for each of the three choice gadgets~$Q$ incident with~$v$, the component of $Q-o_Q$ containing~$v$.
  By the option property, the latter does not contain the other endpoint of~$Q$,
  and it contains exactly $\lambda a+c$ inner vertices if $Q$ contributes~$a$ to~$v$.
\end{proof}

\subsection{Hardness for Cluster Editing}\label{sec:ce-hardness}
\toappendix{\sv{\subsection{Hardness for Cluster Editing}\label{app:ce-hardness}}}

\begin{restatable}[\appmark]{theorem}{ceHardness}\label{thm:ce-hardness}
\CEwFD{} is \(\mathrm{W}[1]\)-hard parameterized by the deletion budget
$\ell$.  The hardness also holds for the combined parameter
\(
  \ell+\pw(G)+\fes(G).
\)
\end{restatable}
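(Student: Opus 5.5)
We reduce from \MCC{} parameterized by~$p$. The plan is to instantiate \cref{constr:choice-graph} with a concrete choice gadget for \CE, called \cref{constr:ce-choice-gadget} in \cref{fig:choice-graph}. An $(A,b)$-choice gadget from~$x$ to~$y$ is a path (spine) $x=v_0,v_1,\ldots,v_T=y$ whose options are the spine edges $o_t=\{v_{t-1},v_t\}$. Pendant leaves are attached to the spine vertices so that removing $o_t$ leaves exactly $\lambda a_t+c$ inner vertices on the $x$-side. We attach $\lambda(a_1)+c$ leaves in total to $x$ (inner vertices), $\lambda(a_{t+1}-a_t)$ to $v_t$ for $0<t<T$ (counting $v_t$ itself), and the rest near $y$ so that the $y$-side gets $\lambda(b-a_t)+c$. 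With $\lambda\ge1$ large and $c$ chosen to absorb the spine vertices, this gives the $(\lambda,c)$-option property. The choice graph is then a tree-like structure: each gadget is a caterpillar. Cycles arise only from the $q$ gadgets glued at $r$ anchor roots, so $\fes(G)\le q-r+1=O(p^2)$. Pathwidth is $O(p^2)$, since removing the $r$ anchor roots leaves caterpillars of pathwidth at most~$1$. We set $\ell:=q=3p(p-1)$. We let $k:=\ell+r\binom{N}{2}-|E(G)|+\ell$, where $N:=\lambda d+3c+1$ is the intended component size, i.e.\ exactly the additions needed to complete $r$ components of size $N$ after $q$ deletions.

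\emph{Forward direction.} Given a multicolored clique, \cref{lem:balanced-selection} yields a balanced selection. We delete the $q$ chosen option edges. By \cref{obs:selection-components} we get $r$ components of equal size~$N$, each a tree. Completing them costs exactly the budgeted additions.

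\emph{Backward direction (main obstacle).} Take a solution $F$ with at most $\ell$ deletions. After deletions the clusters are unions of components of $G-(F\cap E)$. Since $G$ is connected with $n=rN$ vertices and $n-1+\fes$ edges, deleting $\ell'\le\ell$ edges leaves at most $\ell'+1$ components. The cost is at least $\sum_C\binom{|C|}{2}-(|E|-\ell')+\ell'$, and this sum is strictly minimized only by $r$ equal parts. To force exactly $r$ components, hence deletions of exactly $q$ edges, one per gadget, I would make $\lambda$ polynomially large, e.g.\ $\lambda\ge n_0z_0\cdot p^{4}$. Then any partition into at most $q+1$ parts that is not exactly $r$ parts of size $N$ costs at least $\Omega(N^2/r^2)$ more, which exceeds the slack coming from the $O(\fes+\ell)$ non-tree edges and from discrepancies in the deletion count.

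It remains to argue that exactly one edge per gadget is cut, and on the spine, with each anchor root separated. A cut leaf edge isolates a single vertex, and a cluster must be close to size~$N$, so such a cut is wasteful. Two anchor roots in the same part would force a part of size about $2N$; a gadget left uncut joins two roots. With $q$ deletions, $r=q-\fes+1$ parts and one cut per gadget, the resulting parts are exactly the anchor-root components of \cref{obs:selection-components}. Equal sizes then mean the selection is balanced, and \cref{lem:balanced-selection} gives the clique.

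The delicate part is the counting inequality showing that unequal or differently structured partitions exceed $k$. This is where $\lambda$ must dominate all additive error terms of size $\mathrm{poly}(p)\cdot N$. The construction is polynomial and all parameters are $O(p^2)$, giving W[1]-hardness.
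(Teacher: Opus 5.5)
\textbf{There is a genuine gap in your backward direction.} You claim that $\sum_C\binom{|C|}{2}$ is strictly minimized by $r$ equal parts, and that every partition into at most $q+1$ parts other than $r$ parts of size $N$ costs $\Omega(N^2/r^2)$ more. This is false. Since $q+1>r$, your count permits more than $r$ parts, and such partitions are \emph{cheaper}: for instance, $r+1$ parts of (nearly) equal size give $\sum_C\binom{|C|}{2}<r\binom{N}{2}$. Taking $\lambda$ large does not change the sign, because the gain and all costs scale with $N^2$.

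Whether more than $r$ parts can occur is a structural question about $G$, not a counting one. Deleting an edge set $D$ leaves exactly $r-(q-|D|)+\fes(G-D)$ components, so more than $r$ parts means keeping a cycle of gadgets intact. You would have to show that every such configuration is over budget, including several kept cycles in one part and several merged anchor roots. Your proposal does not do this.

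Likewise, your assertion that two anchor roots in one part force a part of size about $2N$ is unjustified. Two roots joined by an uncut mirror gadget, which has only about $\lambda m_0$ inner vertices, can share a part of size about $N$ if their other gadgets are cut close to them. Excluding this, possibly combined with a second cut in another gadget to keep exactly $q$ deletions, needs an exact-size argument that you do not supply.

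\textbf{The missing ingredient is the paper's heavy anchor blocks.} The paper attaches $h-1$ private leaves, with $h=9rd$, to every anchor root, so each root is the center of a large induced star, and it sets $k^\star=r\binom{h+d}{2}-m+2q$. Then:
\begin{itemize}
\item By \cref{lem:stars-not-joined}, no cluster of a solution within budget contains two anchor roots.
\item Hence the two endpoints of every choice gadget lie in different clusters, so every gadget has a deleted spine edge.
\item Spine-edge sets are disjoint and $\ell^\star=q$, so exactly one spine edge per gadget is deleted and nothing else.
\item This leaves exactly $r$ components, one per anchor root, and these must be precisely the clusters.
\item The edge count together with \cref{obs:equal-size-clusters} forces all clusters to have size $h+d$.
\item So the cuts form a balanced selection by \cref{obs:selection-components}, and \cref{lem:balanced-selection} yields the multicolored clique.
\end{itemize}
With these blocks, no delicate inequality about alternative partitions is needed. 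Without them, or a complete structural analysis replacing them, your backward direction does not go through.
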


For \CEwFD{}, the choice gadgets are paths with attached leaves, and their options are edges.

\begin{construction}[Choice gadget for \cref{thm:ce-hardness}]
  \label{constr:ce-choice-gadget}
  For $A=\{a_1, \ldots, a_T\}\subseteq[b]$ with $a_1 < \cdots < a_T$,
  an \emph{$(A,b)$-choice gadget} from a vertex $x$ to a vertex $y$ is constructed as follows.
  Create a path $(v_1, \ldots, v_{T+1})$ with $v_1=x$ and $v_{T+1}=y$.
  Attach
  $a_1$ new leaves to $v_1$,
  $a_t-a_{t-1}-1$ new leaves to $v_t$ for $t\in\{2,\ldots,T\}$,
  and $b-a_T$ new leaves to $v_{T+1}$.
  We call $v_1,\ldots,v_{T+1}$ the \emph{spine vertices} and the
  edges of $(v_1,\ldots,v_{T+1})$ the \emph{spine edges} of the choice gadget.
  The options are the spine edges, where $o_t\coloneqq\{v_t,v_{t+1}\}$ has value~$a_t$ for $t\in[T]$.
\end{construction}

The choice gadget has $b$ inner vertices;
deleting option~$o_t$ leaves $a_t$ of them in the component of~$x$ and the other $b-a_t$ in the component of~$y$.
So it has the $(1,0)$-option~property.

The choice graph alone does not quite do the trick; we need to ensure that merging two anchor roots is unaffordable
by attaching many leaves to each anchor root, making it part of a large star.
The unaffordability is captured by \cref{lem:stars-not-joined} below.

\begin{restatable}[\appmark]{lemma}{starsNotJoined}\label{lem:stars-not-joined}
  \looseness=-1
  Let $G$ be a graph and let $\ell$, $r$, and $d$ be integers such that
  $\ell \le rd$ and let $h = 9rd$.
  Suppose that $G$ contains $r$ pairwise vertex-disjoint induced stars, each with $h$ vertices.
  Let $\tilde G$ be a cluster graph obtained from $G$ by at most $k = \ell + r \binom{h + d}{2}$ edge edits,
  of which at most $\ell$ are deletions.
  Then no cluster of $\tilde G$ contains more than one center of the stars.
\end{restatable}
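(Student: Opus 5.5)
The plan is a global counting argument on edge additions: merging the clusters of two star centers forces far more additions than the budget~$k$ allows. I read ``$r$ pairwise vertex-disjoint induced stars'' as saying that the disjoint union $S_1\cup\dots\cup S_r$ of the stars is an induced subgraph of~$G$. In particular, no edges of~$G$ join vertices of different stars. Some such reading is \emph{necessary}. Take two stars whose leaves are completely joined to each other and whose centers are adjacent. Putting both stars into one cluster then costs only about $2\binom{h}{2}$ additions and no deletions, which is within~$k$. So each star being induced only individually is not enough for the claim.

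\textbf{Step 1: retained vertices.} Let $c_i$ be the center of~$S_i$, and let $X_i\subseteq V(S_i)$ consist of $c_i$ and those leaves of~$S_i$ that lie in the cluster of~$c_i$ in~$\tilde G$. Put $s_i\coloneqq|X_i|$. Every leaf of~$S_i$ outside~$X_i$ costs the deletion of its edge to~$c_i$, and these deletions are distinct for different leaves and different stars. Hence
\[
\sum_i (h-s_i)\le \ell\le rd=h/9,
\]
and in particular $s_i\ge 8h/9$ for every~$i$.

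\textbf{Step 2: counting additions.} Every pair of vertices inside a cluster of~$\tilde G$ is an edge of~$\tilde G$. So every such pair that is a non-edge of~$G$ is an addition. Consider a cluster~$C$, and let $I_C$ be the set of stars whose centers lie in~$C$. The set $\bigcup_{i\in I_C}X_i$ is contained in~$C$. Because the union of the stars is induced, $G$ has exactly $\sum_{i\in I_C}(s_i-1)$ edges on this set. Summing over all clusters, the number of additions is at least
\[
\sum_C\Bigl[\binom{\sum_{i\in I_C}s_i}{2}-\sum_{i\in I_C}(s_i-1)\Bigr].
\]
Since $\binom{a+b}{2}=\binom a2+\binom b2+ab$, this equals
\[
\sum_i\Bigl[\binom{s_i}{2}-(s_i-1)\Bigr]+\sum_C\;\sum_{\{i,j\}\subseteq I_C}s_is_j .
\]
If some cluster contains two centers $c_i$ and~$c_j$, the second sum is at least $s_is_j\ge(8h/9)^2$.

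\textbf{Step 3: comparison with the budget.} This is the only delicate step, because the constants must come out right. First, using $\sum_i(h-s_i)\le h/9$, I bound
\[
\sum_i\binom{s_i}{2}\;\ge\; r\binom h2-h\sum_i(h-s_i)\;\ge\; r\binom h2-\tfrac{h^2}{9}.
\]
Together with $\sum_i(s_i-1)\le rh$, the number of additions is then at least
\[
r\binom h2-\tfrac{h^2}{9}-rh+\tfrac{64h^2}{81}.
\]
Second, the additions number at most $k=\ell+r\binom{h+d}{2}\le r\binom h2+rhd+rd^2+\ell$. Substituting $rd=h/9$, $rd^2\le h^2/81$ and $\ell\le h/9$ gives at most $r\binom h2+\tfrac{h^2}{9}+\tfrac{h^2}{81}+\tfrac h9$. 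For a contradiction it remains to show
\[
\tfrac{64h^2}{81}-\tfrac{2h^2}{9}-\tfrac{h^2}{81} \;>\; rh+\tfrac h9 .
\]
The left side equals $\tfrac{37}{81}h^2$, so this inequality says $\tfrac{37}{81}h>r+\tfrac19$. This holds because $h=9rd\ge 9r$ whenever $d\ge1$; the case $d=0$ would be handled separately, and it forces $\ell=0$, where the same argument applies with even more slack. Hence no cluster contains two centers.
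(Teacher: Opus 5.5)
Your argument is correct and essentially the paper's. Both charge each retained star part $\binom{s_i}{2}-(s_i-1)$ additions, then add a cross term of order $s_is_j\ge(8h/9)^2$ once two centers share a cluster (the paper uses $(h-\ell)^2-1$), and compare with $k$. Two small differences: you bound $\sum_i\binom{s_i}{2}$ via $\binom{s}{2}\ge\binom{h}{2}-h(h-s)$ where the paper expands around the mean, and your left-hand side in Step~3 is actually $\frac{45}{81}h^2$, not $\frac{37}{81}h^2$, which only helps.

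You are also right that the literal statement needs an extra hypothesis. The paper's proof tacitly assumes the only possible edge between two stars joins their centers (``in the worst case, the edge between $c_j$ and $c_{j'}$ exists in $G$''). That is weaker than your ``union is induced'' reading, and it is what the cluster-editing reduction needs, since a choice gadget with a single option makes its two anchor roots adjacent. Your count handles this case by replacing $s_is_j$ with $s_is_j-1$.
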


\toappendix{
\sv{\starsNotJoined*}
\begin{proof}
  Denote the stars by $S_1, \ldots, S_r$ with centers $c_1, \ldots, c_r$.
  As each star has $h-1 = 9rd - 1 > rd \ge \ell$ leaves, not all edges between center and leaves can be deleted.
  Let~$I$ be the set of leaves whose incident edge is deleted.
  Then, each $V(S_j)\setminus I$ appears within a cluster of $\tilde G$, and we need to add edges within the set to make it a clique.
  Let $s_j \coloneqq |V(S_j)\setminus I|$ for each $j \in [r]$.
  Then the number of edges added to the cluster containing $c_j$ is
  \[
    \textstyle\binom{s_j}{2} - (s_j-1).
  \]
  Since these edges are a disjoint set for every $j \neq j' \in [r]$,
  we know that any solution removing the edges indicent to $I$, but no further edges incident to any center $c_j$
  adds at least
  \( \sum_{j \in [r]} \binom{s_j}{2} - (s_j - 1) \)
  edges.
  Moreover, since
  \[
    \sum_{j \in [r]} s_j = rh - |I| \ge rh - \ell \ge r (h - d),
  \]
  there are $x_1, \dots, x_j \in \mathbb R$ such that $\sum_{j \in [r]} x_j = 0$ and
  \[
    s_j = \frac{1}{r}(rh-|I|) + x_j \ge (h-d) + x_j.
  \]
  Let $y \coloneqq h - |I|/r \ge h-d$.
  Then, the number of added edges is at least
  \begin{align*}
    \sum_{j \in [r]} \binom{s_j}{2} - (s_j - 1)
    &= \frac{1}{2} \sum_{j \in [r]} \big( s_j^2 - 3s_j + 2 \big)\\
    &= \frac{1}{2} \sum_{j \in [r]} \big( y^2 + 2x_jy + x_j^2 - 3(y+x_j) + 2 \big)\\
    &= \frac{1}{2} r(y^2 - 3y + 2) + \frac{2y-3}{2} \sum_{j \in [r]} x_j + \frac{1}{2}\sum_{j \in [r]} x_j^2\\
    &= \frac{1}{2} r(y-1)(y-2) + \frac{1}{2}\sum_{j \in [r]} x_j^2\\
    &\ge \frac{1}{2} r(h-d-1)(h-d-2) + \frac{1}{2}\sum_{j \in [r]} x_j^2\\
    &\ge \frac{1}{2} r\big( (h-d)^2 - 3(h-d) + 2 \big)\\
    &= \frac{1}{2} r\big( h^2 - 2hd -3h + d^2 + 3d + 2\big)\\
    &\ge \frac{1}{2} \big( h^2 - 2hd - 3h\big)\\
    &\ge \frac{1}{2} \big( h^2 - 5hd\big)\\
    &\ge \frac{1}{2} rh^2 - \frac{45}{2}r^2d^2.
  \end{align*}

  Suppose now that there are two different indices $j,j' \in [r]$ such that $c_j$ and $c_{j'}$ belong to the same cluster of $\tilde G$.
  Then, $S_j \setminus I$ and $S_{j'}\setminus I$ are contained in this cluster of $\tilde G$ and all edges between them need to be added.
  In the worst case, the edge between $c_j$ and $c_{j'}$ exists in $G$.
  Thus, we have at least $(h-\ell)(h-\ell)-1 \ge (9rd-rd)\cdot (9rd-rd)-1 = 64r^2d^2-1$ added edges. 
  This yields a contradiction since
  \begin{align*}
    k&=\ell + r\binom{h+d}{2}\\
     &\le \frac12 r (h+d)^2+rd\\
     &\le \frac12 r h^2 + rhd +rd^2+rd\\
     &= \frac12 r h^2 + 9r^2d^2 +rd^2+rd\\
     &\le \frac12 r h^2 + 11r^2d^2\\
     &< \Big(\frac12rh^2 - \frac{45}{2}r^2d^2\Big) + 64r^2d^2-1.\qedhere
  \end{align*}
\end{proof}
}

Turning the anchor roots into big stars finishes our construction.

\begin{construction}
  \label{constr:ce-hardness}
  Given an instance of \MCC{} with graph $J$ and a partition of its vertices into $U_1, \dots, U_p$, each of size $n_0$, with $m_0 \coloneqq |E(J)|$,
  we construct an instance $(G^\star, k^\star, \ell^\star)$ of \CEwFD{} as follows.

  Apply \cref{constr:choice-graph} using \cref{constr:ce-choice-gadget} for the choice gadgets and the above instance.
  Define $h \coloneqq 9rd$.
  To each anchor root, we attach $h-1$ new leaves, its \emph{private leaves}; together with the anchor root, they form a star with $h$ vertices, the \emph{anchor block} of the anchor root.
  Call the resulting graph $G^\star$ with $n \coloneqq |V(G^\star)|$ vertices and $m \coloneqq |E(G^\star)|$ edges.
  Finally, set
  $\ell^\star \coloneqq q$ and $k^\star \coloneqq r\textstyle\binom{h+d}{2} - m + 2 \ell^\star$.
\end{construction}

Let us give some intuition for the correctness.
We have $rh$ vertices in anchor blocks.
Each local, successor, and mirror gadget has, respectively, $n_0z_0 + m_0$, $n_0z_0$, and $m_0$ non-endpoint vertices;
this sums to $p(p-1) \cdot 2d = rd$.
Therefore, we have $n = r(h+d)$ vertices.

After using the deletion budget, we have $m-\ell^\star$ remaining edges.
By \cref{lem:stars-not-joined}, each anchor root is in its own cluster, so we have at least $r$ clusters.
The remaining addition budget thus suffices to obtain a cluster graph only if after the deletions, we have exactly $r$ size-$(h+d)$ components which in turn is only possible if the deletion budget was used to obtain a balanced selection, allowing us to invoke \cref{lem:balanced-selection}.

\toappendix{
\begin{restatable}[\appmark]{observation}{equalSizeClusters}\label{obs:equal-size-clusters}
  A cluster graph with at most $k$ clusters and $ky$ vertices has at least $k\binom{y}{2}$ edges, with equality if and only if it has exactly $k$ clusters of size $y$.
\end{restatable}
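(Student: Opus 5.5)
The plan is to reduce the statement to a convexity (Jensen-type) inequality for the function $t\mapsto\binom{t}{2}$. Let the cluster graph have clusters of sizes $t_1,\dots,t_s$ with $s\le k$ and $\sum_{j}t_j = ky$. Since every cluster is a clique, the number of edges is exactly $\sum_{j\in[s]}\binom{t_j}{2}$. Padding with $k-s$ empty clusters of size~$0$ (which contribute $\binom{0}{2}=0$) lets us assume there are exactly $k$ summands $t_1,\dots,t_k\ge 0$ with sum $ky$. The task is then to show that $\sum_j\binom{t_j}{2}\ge k\binom{y}{2}$, with equality if and only if all $t_j=y$.

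For this I will use the same quadratic expansion as in the proof of \cref{lem:stars-not-joined}. Write $t_j=y+x_j$ with $\sum_j x_j=0$. Then
\[
\sum_{j}\binom{t_j}{2}=\frac12\sum_j\bigl(t_j^2-t_j\bigr)=\frac12\sum_j\bigl(y^2-y+(2y-1)x_j+x_j^2\bigr)=k\binom{y}{2}+\frac12\sum_j x_j^2 .
\]
This gives the inequality immediately. Equality holds exactly when every $x_j=0$, that is, when every padded size $t_j$ equals $y$.

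The only point needing care is the equality case. We must rule out that some padded (empty) cluster appears in an equality situation. If all $t_j=y$ and $y\ge1$, then no padded cluster exists, so $s=k$ and all real clusters have size $y$. If $y=0$, the graph is empty and has no clusters, so the statement holds trivially (or vacuously, depending on how one counts clusters). The converse direction, that $k$ clusters of size $y$ give exactly $k\binom{y}{2}$ edges, is immediate.

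I do not expect any real obstacle. The main subtlety is handling fewer than $k$ clusters, which the padding argument resolves.
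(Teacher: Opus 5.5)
Your proposal is correct and follows essentially the same route as the paper: the substitution $t_j=y+x_j$ with $\sum_j x_j=0$, which yields $k\binom{y}{2}+\frac12\sum_j x_j^2$ and the equality case from all $x_j=0$. Your explicit padding with size-$0$ clusters handles the ``at most $k$ clusters'' case rigorously; the paper covers it only implicitly by naming the clusters $C_1,\dots,C_k$.
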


\begin{proof}
Assume that the clusters are $C_1, \ldots, C_k$ with sizes $y+x_1, \ldots, y+x_k$, respectively, where $x_1, \ldots, x_k$ are integers (possibly negative).
Since $\sum_{i \in [k]} (y+x_i) = |V(G)|= ky$, we have $\sum_{i \in [k]} x_i =0$.
Therefore, the number of edges is
\begin{align*}
 \sum_{i \in [k]} \binom{y+x_i}{2}
        &= \frac12\sum_{i \in [k]} (y+x_i)(y+x_i-1) \\
        &= \frac12\sum_{i \in [k]} (y^2+2yx_i+x_i^2-y-x_i) \\
        &= \frac12(ky^2 + 2y\sum_{i \in [k]} x_i + \sum_{i \in [k]} x_i^2 -ky - \sum_{i \in [k]} x_i)\\
        &= \frac12(ky^2 -ky + \sum_{i \in [k]} x_i^2).
\end{align*}
Since $\sum_{i \in [k]} x_i^2$ is non-negative, this proves the lower bound.
Moreover, we have equality if and only if each $x_i$ is 0.
\end{proof}
}

\toappendix{

\sv{\ceHardness*}
\begin{proof}
  We give a parameterized reduction from \MCC, which is $\mathrm{W}[1]$-hard parameterized by the number of colors~\cite{FellowsHRV09}.
  Instances with $p<3$ can be decided in polynomial time and mapped to the fixed yes-instance consisting of one isolated vertex or to the fixed no-instance consisting of a three-vertex path, in both cases with zero budgets.
  We output the same fixed no-instance if some pair of color classes has no edge between them.
  Hence we may assume $p\geq3$ and that every pair of color classes has an edge.
  In particular, all sets used in the choice gadgets below are nonempty.
  We may further assume that all color classes have the same size $n_0:=\max_{i\in[p]}|U_i|$: adding a new isolated vertex to $U_i$ leaves $E(J)$ unchanged and creates no multicolored clique, since a multicolored clique has $p\geq3$ vertices and hence contains no isolated vertex.
  Padding every color class up to $n_0$ vertices in this way raises $|U|$ to $pn_0$.
  Every color class then supplies the same number of vertices, which is what lets us give every anchor root the same demand below.

  We apply \cref{constr:ce-hardness} to obtain an instance $(G^\star, k^\star, \ell^\star)$ of \CEwFD{}.
  This takes polynomial time:
  we have $r,q\in O(p^2)$, $z_0=2m_0+10\in O(m_0)$, and $d=n_0z_0+m_0\in O(n_0m_0)$, so $h=9rd\in O(p^2n_0m_0)$.
  Thus, $n=r(h+d)\in O(p^4n_0m_0)$ and $m=n+q-r$ (see the parameter bounds below) are polynomial in $n_0+m_0+p$,
  and the budgets $k^\star$ and $\ell^\star$ have polynomial encoding length.
  Next, we prove the equivalence of the two instances, before finally proving the parameter bounds.

\proofparagraph{Forward direction}
Suppose that $J$ has a multicolored clique.  Use the balanced selection
from \cref{lem:balanced-selection}, and delete its chosen option, which is a spine edge, in
each choice gadget. These $q$ deletions split every choice gadget into two connected
endpoint components, each attached to one endpoint.  Every vertex other than
an anchor root is in one such component or is a private leaf of an anchor root,
and every
choice gadget between its two anchor roots has been cut.  The resulting graph
therefore has exactly $r$ connected parts, one for each anchor root.  The
component at an anchor root consists of its $h$-vertex anchor block and the
$d$ non-endpoint vertices of choice gadgets allocated to it, so every part has size
$h+d$. Add all missing edges inside every part.  The number of additions is
\[
 r\binom{h+d}{2}-(m-q),
\]
because the $m-q$ undeleted edges all lie inside the parts.  The resulting
edit set has $q=\ell^\star$ deletions and
$r\binom{h+d}{2}-m+2q=k^\star$ edits in total.
Thus, the constructed instance is a \yes-instance.

\proofparagraph{Backward direction: anchor separation}
For the converse, let $F$ be an edit set for the constructed instance, and
let $\mathcal C$ be the partition of $V(G^\star)$ into the clusters of
$G^\star\triangle F$.  Put $\delta:=|F\cap E(G^\star)|\leq q$.

Every anchor block is a star with $h$ vertices whose center is its anchor
root, and the $r$ anchor blocks are pairwise disjoint.  We apply
\cref{lem:stars-not-joined} to $G^\star$ with these $r$ stars.  Its two numerical hypotheses
hold:  First, $q\leq2r$ by our construction and $d\geq
z_0\geq2$, so the deletion budget satisfies $\ell^\star=q\leq2r\leq rd$.
Second, the $q$ choice gadgets have pairwise disjoint and nonempty spine-edge sets,
so $m\geq q$, and hence
\[
 k^\star=r\binom{h+d}{2}-m+2\ell^\star
   \leq r\binom{h+d}{2}+\ell^\star.
\]
As $|F|\leq k^\star$ and
$\delta\leq\ell^\star$, \cref{lem:stars-not-joined} shows that no cluster of
$G^\star\triangle F$ contains two anchor roots.

\proofparagraph{Backward direction: cuts and equal cluster sizes}
The endpoints of every choice gadget are distinct anchor roots, and hence belong to
different clusters.  Every choice gadget therefore has a deleted spine edge.  The
choice gadgets have disjoint spine-edge sets, so $\delta\geq q$.  Since
$\delta\leq\ell^\star=q$, equality holds: exactly one spine edge of every
choice gadget is deleted, and no other edge of $G^\star$ is deleted.

After its unique deleted spine edge is removed, each choice gadget consists of two
connected components, one attached to each endpoint.  No other edge
is deleted, so every vertex other than an anchor root belongs to one of these
components or is a private leaf attached to its anchor root.  Hence
$G^\star-(F\cap E(G^\star))$ has exactly $r$ connected components and every component
contains exactly one anchor root.  Every
such component lies in one cluster, because none of its edges is deleted.  A
cluster cannot contain two components, since it would then contain their two
anchor roots.  Thus these $r$ components are precisely the clusters in
$\mathcal C$.  In particular, every cluster induces a connected subgraph of
$G^\star$.

The cluster graph $G^\star\triangle F$ arises from $G^\star$ by deleting $\delta=q$ edges
and adding $|F|-\delta$ edges, so it has
\[
 m-\delta+(|F|-\delta)=m+|F|-2q\leq m+k^\star-2q=r\binom{h+d}{2}
\]
edges.  It has $r$ clusters and $r(h+d)$ vertices.  By \Cref{obs:equal-size-clusters}, every cluster has size
exactly $h+d$: if the sizes differed, the cluster graph would have more than
$r\binom{h+d}{2}$ edges.  The cuts consequently
give exactly the intended allocations: no edge incident with a private
leaf was deleted, so the cluster of an anchor root contains its anchor block
of $h$ vertices and therefore contains exactly $d$ non-endpoint vertices
from its three incident choice gadgets.  For each choice gadget, the
number on the anchor-root side is the allocation determined by its unique cut.
Hence, the cuts form
a balanced selection, and
\cref{lem:balanced-selection} yields a multicolored clique in $J$. This
proves the equivalence between the source and target instances.

\proofparagraph{Parameter bounds}
It remains to bound the target parameter. The deletion budget is
$\ell^\star=q=3p(p-1)$.  Removing the $r$ anchor roots from
$G^\star$ leaves a disjoint union of paths with attached leaves and isolated
vertices, so $\fvs(G^\star)\leq r$.  A width-one path decomposition of this
remainder has a bag $\{u,v\}$ for each spine edge $\{u,v\}$, with these bags
ordered along the path.  For every edge $\{u,w\}$ to a leaf~$w$, insert the bag
$\{u,w\}$ next to the consecutive spine-edge bags containing $u$; at a path
endpoint, insert it before or after its incident spine-edge bag.  If a spine
consists of a single vertex $u$, list all bags $\{u,w\}$ consecutively, or use
the singleton bag $\{u\}$ if no leaf is attached to~$u$.  Bags for all other
isolated vertices are also singleton bags, and the decompositions of distinct
components can be concatenated.  Adding all $r$ anchor roots to every bag gives
$\pw(G^\star)\leq r+1$.
To bound the feedback edge number, observe that $G^\star$ is connected as it is the choice graph plus some attached leaves.
Moreover, since each anchor block has $h-1$ edges and each choice gadget has one more edge than it has non-endpoint vertices,
we have
\[
  m = r(h-1) + rd + q = n + q - r.
\]
Therefore, $\fes(G^\star) = m-n+1 = q-r+1$.

Substituting the bounds gives
\[
 \ell^\star+\pw(G^\star)+\fvs(G^\star)+\fes(G^\star)
 \leq q+(r+1)+r+(q-r+1)
 =2q+r+2=8p(p-1)+2.
\]
This completes the parameterized reduction.
\end{proof}
}

\subsection{Hardness for \boldmath$H$-free Editing}\label{sec:hfree-hardness}
\toappendix{\sv{\subsection{Hardness for \texorpdfstring{$H$}{H}-free Editing}\label{app:hfree-hardness}}}

We now turn to \HFEwFD{}, $H$ being a clique minus one edge.

\begin{restatable}[\appmark]{theorem}{hfreeHardness}\label{thm:hfree-hardness}
For every fixed $\eta\ge4$ and $H=K_\eta-e$, \HFEwFD{} is
$\mathrm{W}[1]$-hard parameterized by the deletion budget~$\ell$.  The
hardness also holds for the combined parameter
\(
 \ell+\pw(G)+\fvs(G).
\)
\end{restatable}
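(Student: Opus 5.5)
We would follow the same template as for \cref{thm:ce-hardness}: reduce from \MCC{} via the choice graph of \cref{constr:choice-graph}, with choice gadgets designed so that every induced copy of $H=K_\eta-e$ can only be destroyed cheaply by adding its missing pair. Deletions are then forced only where an addition would be unaffordable.

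The new ingredient is the building block. We replace each vertex of the \CE{} construction by a clique $K_{\eta-2}$, which we call a \emph{bag}. Two bags $B,B'$ are joined by making them adjacent, and an option is such a join. Any $K_\eta-e$ on two adjacent bags plus a further vertex can then only be resolved by filling in, or by cutting the join entirely. Concretely, the spine of an $(A,b)$-choice gadget is a sequence of bags $X_1,\dots,X_{T+1}$, where consecutive bags are joined through a single vertex $s_t$. This makes a cut cost only $O(1)$ deletions: deleting the edges between $s_t$ and $X_{t+1}$ costs $\eta-2$ deletions. The leaves are replaced by small pendant bags. With suitable constants this gives the $(\lambda,c)$-option property with $\lambda$ equal to the number of vertices per unit.

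The key property to verify is that an edited graph which is a disjoint union of cliques is $H$-free, and that in our graph the cheapest way to become $H$-free is again to complete each connected part into a clique. For the anchor roots we attach large cliques-minus-matchings, or large bags of star-like structure. These play the role of the stars in \cref{lem:stars-not-joined}: every connected dense-but-not-complete region contains many induced copies of $H$. Completing one region costs roughly $\binom{\text{size}}{2}$, and merging two anchors is unaffordable. We would prove an analogue of \cref{lem:stars-not-joined} and then repeat the counting via \cref{obs:equal-size-clusters}.

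Here lies the main obstacle. Unlike cluster graphs, an $H$-free graph need not be a union of cliques: a sparse connected component may well be $H$-free. So the analogue of "each component must become a clique" is false in general. We would handle this by building all gadgets from bags so densely that every pair of nonadjacent vertices at distance two within a component lies in an induced $H$. Adding edges then propagates, closure forces completion of each component, and the budget $k^\star$ is set to exactly the completion cost. To make this rigorous, we would show an invariant: as long as $\ell^\star$ deletions leave a component that is not a clique, it contains an induced $H$ that no affordable set of additions can fix without completing it. We expect the delicate part to be exhibiting such an $H$ after arbitrary additions.

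Finally, the parameter bounds mirror the \CE{} case. We have $\ell^\star=O(q)$. Deleting all anchor vertices and all separator vertices would not bound $\fvs$, since the bags contain cycles. Instead we bound $\fvs$ by deleting from each bag all but one vertex, which should be compensated by choosing constant-size bags and parameter-many anchors. If this fails, we would argue that $\fvs$ is bounded by $O(r)$ via a different gadget in which only the anchor roots carry cliques. The pathwidth bound follows as before, from a bounded-width decomposition of the gadget paths with the anchors added to every bag.
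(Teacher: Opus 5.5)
Your plan has a genuine gap, and it sits at the step you defer (``exhibiting such an $H$ after arbitrary additions''). Nothing in your construction forces an affordable $H$-free graph to consist of cliques on the intended parts, and the gadget you describe actually rules this out.

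\textbf{Your density requirement fails at every junction.} Consecutive bags are linked only through the single vertex $s_t$. Hence $x\in X_t$ and $x'\in X_{t+1}$ are nonadjacent at distance two but have only one common neighbor. An induced $H$ with missing pair $\{x,x'\}$ needs $\eta-2\ge2$ common neighbors, so no such $H$ exists.

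\textbf{There is a cheap solution with no deletions.} The induced copies of $H$ that do exist are local, e.g.\ $X_t\cup\{s_{t-1},s_t\}$, which is fixed by adding the single pair $\{s_{t-1},s_t\}$. Complete every such local spot, and treat pendant bags and anchor structures in the same local way. Afterwards any two maximal cliques share at most one vertex. Such a graph is $H$-free: if $R$ induced $H$ with missing pair $\{a,b\}$, the maximal cliques containing $R\setminus\{a\}$ and $R\setminus\{b\}$ would share the $\eta-2\ge2$ vertices of $R\setminus\{a,b\}$. They would therefore coincide, making $\{a,b\}$ an edge. This costs far less than completing each part into a clique, which must, e.g., join every anchor vertex to every gadget vertex of its part. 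So with $k^\star$ set to that completion cost, the instance is a yes-instance for every $J$, and the convexity argument of \cref{obs:equal-size-clusters} never applies.

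\textbf{Densifying the gadgets is blocked by the $\fvs$ requirement.} Each $\{s_t\}\cup X_{t+1}$ is a clique on $\eta-1\ge3$ vertices. A mirror gadget, with its $m_0$ options, contains $m_0$ pairwise disjoint such cliques, so $\fvs(G^\star)\ge m_0$ is unbounded in $p$. Deleting all but one vertex per bag is unbounded for the same reason. More generally, every induced copy of $H$ meets every feedback vertex set in at least $\eta-3\ge1$ vertices. So if $\fvs$ is bounded in $p$, all copies of $H$ that are meant to force completions must pass through a bounded set of hub vertices, and the gadgets themselves must stay tree-like.

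\textbf{The missing idea is exactly such a hub.} The paper keeps the choice gadgets as trees (paths with pendant leaves). It adds a clique $Z$ on $\eta-3$ vertices, complete to the graph $G_A$ formed by the choice graph and the anchor blocks. Then every induced $P_3$ of $G_A$ together with $Z$ induces $H$. So as long as no edge between $Z$ and $G_A$ is deleted, $G_A$ must become a cluster graph and the counting of \cref{thm:ce-hardness} carries over. Meanwhile $\pw$ and $\fvs$ grow only by amounts bounded in $p$.

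The price is a new one-deletion cut: \emph{detaching} a vertex $x$ by deleting an edge $\{z,x\}$ with $z\in Z$. The paper controls this in three ways:
\begin{itemize}
\item a clique $W$ on $2\ell^\star+\eta$ vertices complete to $Z$, so that no edge inside $Z$ can be deleted;
\item $\ell^\star+1$ shields $K_{\eta-2}$ at every anchor root, so that anchor roots cannot be detached, by \cref{lem:hfree-deletion-properties}(iii);
\item $\beta$ pendant leaves at every option vertex, so that a surplus-and-convexity count shows that detaching exactly one option vertex per choice gadget is the only affordable way to cut.
\end{itemize}
None of these ingredients, nor a substitute for them, appears in your proposal.
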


We again use the choice graph from \cref{sec:choice-graph} and follow the ideas of \cref{thm:ce-hardness}.
Lifting them to $H=K_\eta-e$ requires the following ingredients (see \cref{fig:hfree-construction}).
\begin{enumerate}
\item \emph{A universal clique.}
  We add a clique~$Z$ on $\eta-3$ vertices that is complete to the graph~$G_A$ consisting of the choice graph and the anchor blocks.
  As long as none of these edges are deleted, every induced~$P_3$ in~$G_A$ forms an induced~$H$ together with~$Z$;
  hence, $G_A$ has to be turned into a cluster graph, just as for \CEwFD{}.
\item \emph{Protecting~$Z$.}
  A further clique~$W$ on $2\ell^\star+\eta$ vertices that is complete to~$Z$ makes it impossible to delete edges inside~$Z$.
\item \emph{Detaching as a new way of cutting.}
  A solution may also delete edges between~$Z$ and a vertex~$x$ of~$G_A$.
  Such a \emph{detached} vertex no longer needs to be in a cluster with its neighbors in~$G_A$,
  so a single deletion may cut a choice gadget at~$x$ (details after \cref{constr:hfree-hardness}).
  We design the choice gadgets such that detaching one of their designated option vertices is the only affordable way to cut them.
\item \emph{Shields.}
  Detaching an anchor root would cut all of its choice gadgets at once.
  To prevent this, every anchor root receives $\ell^\star+1$ disjoint cliques on $\eta-2$ vertices, called \emph{shields},
  such that detaching it requires a deletion in every shield, which would be too expensive.
\item \emph{Back to cluster editing.}
  With these ingredients, the attached vertices induce a cluster graph, and the detached vertices interact only in a controlled way with its clusters (\cref{lem:hfree-deletion-properties}).
  Isolating the detached vertices yields a cluster graph that is close to~$G_A$ and to which \cref{lem:stars-not-joined} applies as in \cref{thm:ce-hardness}.
\end{enumerate}

\begin{figure}[t]
\centering
\begin{tikzpicture}[
  anchorroot/.style={draw, circle, fill=white, minimum size=7mm, inner sep=1pt,
                     font=\scriptsize},
  option/.style={draw, rectangle, fill=black!65, minimum size=2.6mm, inner sep=0pt},
  optiond/.style={draw, rectangle, fill=black!12, minimum size=2.6mm, inner sep=0pt},
  fil/.style={draw, circle, fill=white, minimum size=2.2mm, inner sep=0pt},
  leaf/.style={draw, circle, fill=black!65, minimum size=1.8mm, inner sep=0pt},
  leafd/.style={draw, circle, fill=black!12, minimum size=1.8mm, inner sep=0pt},
  zed/.style={gray!55, densely dotted, line width=.35pt},
  lab/.style={font=\scriptsize}
]
 \node[lab] at (-0.9,1.1) {(a)};
 \node[anchorroot] (x) at (0,0) {$N_i^j$};
 \node[anchorroot] (y) at (12,0) {$P_i^j$};
 \foreach \i/\p in {1/1.0,2/1.5,3/2.0}{\node[fil] (f\i) at (\p,0) {};}
 \node[option] (u1) at (2.9,0) {};
 \foreach \i/\p in {4/3.8,5/4.3,6/4.8}{\node[fil] (f\i) at (\p,0) {};}
 \node[optiond] (u2) at (5.7,0) {};
 \foreach \i/\p in {7/6.6,8/7.1,9/7.6}{\node[fil] (f\i) at (\p,0) {};}
 \node[option] (u3) at (8.5,0) {};
 \foreach \i/\p in {10/9.4,11/9.9,12/10.4,13/11.0}{\node[fil] (f\i) at (\p,0) {};}
 \draw (x) -- (f1) -- (f2) -- (f3);
 \draw[densely dashed] (f3) -- (u1);
 \draw (u1) -- (f4) -- (f5) -- (f6);
 \draw[densely dashed] (f6) -- (u2);
 \draw (u2) -- (f7) -- (f8) -- (f9);
 \draw[densely dashed] (f9) -- (u3);
 \draw (u3) -- (f10) -- (f11) -- (f12);
 \draw[densely dashed] (f12) -- (f13);
 \draw (f13) -- (y);
 \foreach \i/\x in {1/2.9,3/8.5}{
   \foreach \j/\dx in {a/-.3,b/0,c/.3}{
     \node[leaf] (t\i\j) at (\x+\dx,.8) {};
     \draw (\x,0) -- (\x+\dx,.8);
   }
 }
 \foreach \j/\dx in {a/-.3,b/0,c/.3}{
   \node[leafd] (t2\j) at (5.7+\dx,.8) {};
   \draw (5.7,0) -- (5.7+\dx,.8);
 }
 \node[lab] at (2.9,1.25) {$\beta$};
 \node[lab] at (5.7,1.25) {$\beta$};
 \node[lab] at (8.5,1.25) {$\beta$};
 \node[lab] at (2.9,-.45) {$u_1$};
 \node[lab] at (5.4,-.35) {$u_2$};
 \node[lab] at (8.5,-.45) {$u_3$};
 \node[lab] at (1.5,.35) {$c_0$};
 \node[lab] at (4.3,.35) {$c_1$};
 \node[lab] at (7.1,.35) {$c_2$};
 \node[lab] at (10.2,.35) {$c_3$};
 \node[anchorroot] (z) at (6,-2.0) {$z_1$};
 \foreach \n in {x,f1,f2,f3,u1,f4,f5,f6,f7,f8,f9,u3,f10,f11,f12,f13,y}{
   \draw[zed] (z) -- (\n);}
 \foreach \n in {t1a,t1b,t1c,t2a,t2b,t2c,t3a,t3b,t3c}{
   \draw[zed] (z) -- (\n);}
 \draw[red!70!black, line width=.9pt] (z) -- (u2);
 \node[red!70!black] at (5.85,-1.0) {\large$\times$};
 \node[lab, red!70!black, align=left, anchor=west] at (8.6,-1.55)
   {detaching $u_2$ costs one deletion,\\ cuts the path, and takes
    $\beta+1$\\ vertices out of the clusters};
\begin{scope}[yshift=-4.6cm, xshift=5.0cm,
  root/.style={draw, circle, fill=white, minimum size=6.5mm, inner sep=1pt,
               font=\scriptsize},
  sleaf/.style={draw, circle, fill=white, minimum size=2.2mm, inner sep=0pt},
  blob/.style={draw, rounded corners=2pt, fill=black!8, inner sep=2.5pt,
               font=\scriptsize, align=center},
  cjoin/.style={double, line width=.4pt, double distance=1.3pt}]
  \node[lab] at (-4.1,1.1) {(b)};
  \node[root] (v) at (0,0) {$v$};
  \foreach \i/\x in {1/-1.45,2/-0.95,3/-0.45,4/0.45,5/0.95,6/1.45}{
    \node[sleaf] (l\i) at (\x,1.2) {};
    \draw (v) -- (l\i);
  }
  \node[lab] at (0,1.2) {$\cdots$};
  \draw (-1.65,1.5) -- (-1.65,1.6) -- (1.65,1.6) -- (1.65,1.5);
  \node[lab] at (0,1.82) {$L-1$ leaves};
  \foreach \a in {150,180,210}{
    \draw[line width=.8pt] (v) -- ++(\a:0.95);
    \node[lab] at (\a:1.3) {$\cdots$};
  }
  \node[lab, align=center] at (-2.75,0) {to the three\\ choice gadgets\\ of $v$};
  \node[blob] (s1) at (3.0,1.2) {$K_{\eta-2}$};
  \node[blob] (s2) at (3.0,0.45) {$K_{\eta-2}$};
  \node[lab] at (3.0,-0.1) {$\vdots$};
  \node[blob] (s3) at (3.0,-0.6) {$K_{\eta-2}$};
  \foreach \s in {s1,s2,s3}{\draw[cjoin] (v) -- (\s);}
  \node[lab, anchor=west, align=left] at (3.65,0.3) {$\sigma=\ell^\star+1$\\ shields};
  \node[blob] (Z) at (0.2,-1.65) {$Z\cong K_{\eta-3}$};
  \node[blob] (W) at (3.0,-1.65) {$W\cong K_{2q+\eta}$};
  \draw[cjoin] (Z) -- (W);
  \foreach \t in {v,l1,l6,s1,s3}{\draw[zed] (Z) -- (\t);}
  \node[lab, anchor=east] at (-0.75,-1.65) {universal};
\end{scope}
\end{tikzpicture}
\caption{(a)~An $(A,b)$-choice gadget for \HFEwFD{} with~$T=3$ option vertices (squares), its filler
segments (small circles, with the number of vertices indicated), and a
vertex $z_1$ of the clique~$Z$ (dotted edges).
Deleting the
edge $\{z_1,u_2\}$ detaches the grey option vertex~$u_2$: among the attached vertices, the leaves of~$u_2$ become isolated
and the two sides of the choice gadget fall apart;
the left side then contains $\lambda a_2+1$~inner vertices and the
right side $\lambda(b-a_2)+1$~inner vertices.
(b)~The anchor block of an anchor root $v$: a
star with $L-1$ leaves together with $\sigma=\ell^\star+1$ pairwise
disjoint shields (cliques on~$\eta-2$ vertices, each completely
joined to~$v$).
Dotted edges lead to the
clique~$Z$, which is complete to $V(G_A)$ and to the
clique~$W$; the latter keeps the edges inside $Z$ undeletable.}
\label{fig:hfree-construction}
\end{figure}

\begin{construction}[Choice gadget for \cref{thm:hfree-hardness}]
  \label{constr:hfree-choice-gadget}
  Let $\beta \coloneqq \max\{1, \eta-4\}$ and $\lambda \coloneqq \beta+2$.
  For $A=\{a_1, \ldots, a_T\}\subseteq[b]$ with $a_1 < \cdots < a_T$,
  an \emph{$(A,b)$-choice gadget} from a vertex $x$ to a vertex $y$ is constructed as follows (see \cref{fig:hfree-construction}(a)).
  Introduce $T$ \emph{option vertices} $u_1, \dots, u_T$.
  For each $t \in \{0, \dots, T\}$, create a \emph{filler segment} $P_t$, which is a path on $c_t$ \emph{filler vertices} with
  \begin{align}\label{eq:hfree-fillers}
    c_0:=\lambda a_1+1,\quad c_t:=\lambda(a_{t+1}-a_t-1)+1\ \ (t\in[T-1]),\quad c_T:=\lambda(b-a_T)+1.
  \end{align}
  Create a path, the \emph{spine}, by concatenating $x, P_0, u_1, P_1, \ldots, u_T, P_T, y$.
  Its vertices other than $x$ and~$y$ are the \emph{inner spine vertices}.
  Finally, attach $\beta$ new leaves to each option vertex~$u_t$.
  The options are the option vertices, where $o_t\coloneqq u_t$ has value~$a_t$ for $t\in[T]$.
\end{construction}
Recall that the inner vertices of the choice gadget are all its vertices other than $x$ and~$y$, including the leaves.
Removing option $o_t$ leaves $P_0,u_1,\ldots,u_{t-1},P_{t-1}$ and the leaves of $u_1,\ldots,u_{t-1}$ in the component of~$x$,
that is, $\lambda a_t+1$ inner vertices by \cref{eq:hfree-fillers}.
Symmetrically, the component of~$y$ has $\lambda(b-a_t)+1$ inner vertices.
So the choice gadget has the $(\lambda,1)$-option property.
Unlike for \CEwFD{}, $u_t$ and its $\beta$ leaves belong to neither component.
By \cref{obs:selection-components}, the anchor roots should thus receive
$\hat d\coloneqq\lambda d+3=\lambda(n_0z_0+m_0)+3$ inner vertices each.

\toappendix{
We start with some properties of the choice gadget for \HFEwFD{} (\cref{constr:hfree-choice-gadget}) and the resulting choice graph.
Since the elements of~$A$ are distinct integers in~$[b]$, we have~$c_t\ge1$ for all~$t\in\{0,\ldots,T\}$.
Consequently, no two option vertices are adjacent, no option vertex is
adjacent to $x$ or to~$y$, and $x$ and $y$ are not adjacent.

\begin{restatable}[\appmark]{observation}{hfreeChoiceGadget}\label{obs:hfree-choice-gadget}
Let $Q$ be an $(A,b)$-choice gadget from $x$ to $y$.  
Then the following hold.
\begin{romanenumerate}
\item\label{it:hfree-cg-size} $Q$ has exactly $\lambda(b+1)+1$ inner vertices.
\item\label{it:hfree-cg-option} For every $t\in[T]$, the graph $Q-u_t$ consists
      of (1)~the $\beta$ leaves of $u_t$ as isolated vertices, (2)~a connected
      component containing~$x$ with exactly $\lambda a_t+1$ inner
      vertices, and (3)~a connected component containing $y$ with exactly
      $\lambda(b-a_t)+1$ inner vertices.
\end{romanenumerate}
\end{restatable}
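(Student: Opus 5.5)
Both parts follow by direct counting from \cref{constr:hfree-choice-gadget} and \cref{eq:hfree-fillers}; no earlier lemma is needed.

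For part~\eqref{it:hfree-cg-size}, I count the inner vertices by type. The filler vertices number $\sum_{t=0}^{T} c_t$. There are $T$ option vertices and $T\beta$ leaves. Summing \cref{eq:hfree-fillers}, the middle terms telescope:
\[
\lambda a_1+\lambda\sum_{t=1}^{T-1}(a_{t+1}-a_t-1)+\lambda(b-a_T)+(T+1)=\lambda(b-(T-1))+T+1 .
\]
Adding $T(\beta+1)=T(\lambda-1)$, which uses $\lambda=\beta+2$, gives
\[
\lambda b-\lambda T+\lambda+T+1+\lambda T-T=\lambda(b+1)+1,
\]
as claimed.

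For part~\eqref{it:hfree-cg-option}, I use the structure of $Q$: it is the spine path with pendant leaves attached only at option vertices. So $Q$ is a tree whose spine is the path $x,P_0,u_1,\dots,u_T,P_T,y$, and $u_t$ is an internal spine vertex. Removing $u_t$ has three effects.
\begin{itemize}
\item Its $\beta$ leaves become isolated vertices.
\item The spine splits into the prefix $x,P_0,u_1,\dots,P_{t-1}$ and the suffix $P_t,u_{t+1},\dots,P_T,y$.
\item Each remaining leaf stays attached to its own option vertex, which lies in the prefix or in the suffix.
\end{itemize}
The prefix and suffix are connected and disjoint, so $x$ and $y$ end up in different components.

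The prefix component contains the inner vertices of $P_0,\dots,P_{t-1}$, of $u_1,\dots,u_{t-1}$, and of their leaves. Telescoping as before, the fillers contribute $\lambda a_1+\lambda\sum_{s=1}^{t-1}(a_{s+1}-a_s-1)+t=\lambda(a_t-t+1)+t$. The $t-1$ option vertices and their leaves contribute $(t-1)(\lambda-1)$. The total is $\lambda a_t+1$.

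For the suffix I see two routes. I can compute it symmetrically. Alternatively, I can subtract from part~\eqref{it:hfree-cg-size}:
\[
\lambda(b+1)+1-(\lambda a_t+1)-(\beta+1)=\lambda(b-a_t)+1 .
\]
I expect no real obstacle here; the only points needing care are the exact telescoping indices and the identity $\beta+1=\lambda-1$.
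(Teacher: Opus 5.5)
Your proof is correct and is essentially the paper's argument: you telescope the filler sizes $c_t$ and use $\beta+1=\lambda-1$, and the $y$-side can be obtained by the same reversal symmetry. The only difference is the order. The paper proves (ii) first and gets (i) by adding both sides plus the $\beta+1$ vertices of $u_t$ and its leaves; you compute (i) directly, which lets you get the $y$-side of (ii) by subtraction instead of invoking the symmetry.
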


\begin{proof}
  Let us first prove (ii).
Deleting $u_t$ from the spine leaves two subpaths: one from $x$
through $P_0,u_1,\ldots,u_{t-1},P_{t-1}$, and one from $P_t$ through
$u_{t+1},\ldots,P_T$ to $y$.  
The leaves of~$u_t$ lose their only neighbor,
and every other leaf stays attached to its option vertex~$u_{t'}\ne u_t$.  
Hence, the connected component of~$x$ contains as inner vertices the segments
$P_0,\ldots,P_{t-1}$ and the option vertices $u_1,\ldots,u_{t-1}$ with their
leaves, that is,
\begin{align*}
 \sum_{t'=0}^{t-1}c_{t'}+(t-1)(1+\beta)
 &=\Bigl(\lambda a_1+1+\sum_{t'=1}^{t-1}\bigl(\lambda(a_{t'+1}-a_{t'}-1)+1\bigr)
   \Bigr)+(t-1)(\lambda-1)\\
 &=\lambda a_t+1
\end{align*}
vertices, where we used $\lambda=\beta+2$, \cref{eq:hfree-fillers}, and
that the sum telescopes to $\lambda a_t-\lambda(t-1)+t$.

For counting the number of vertices in the connected component containing $y$,
observe that an $(A, b)$-choice gadget from $x$ to $y$ is the same as an $(A', b)$-choice gadget
from $y$ to $x$, where $A' = \{b-a_T, b-a_{T-1}, \dots, b - a_1\}$, and the removed vertex is at the $t$-th \emph{last} option. 
Therefore, by symmetry, the component containing $y$ contains $\lambda(b-a_t)+1$ vertices.

Adding the $1+\beta$ vertices of $u_t$ and its leaves
gives $\lambda a_t+1+\lambda(b-a_t)+1+\lambda-1=\lambda(b+1)+1$ inner
vertices in total, which proves (i).
\end{proof}

By \cref{it:hfree-cg-size} of \cref{obs:hfree-choice-gadget}, the number of inner vertices of each local, successor, and mirror gadget is
$\lambda(n_0z_0 + m_0 + 1)+1$,
$\lambda(n_0z_0+1)+1$,
$\lambda(m_0+1)+1$, respectively,
which sums up to
\begin{align}\label{eq:hfree-inner-vertices}
  p(p-1) \cdot \big(\lambda(2n_0z_0+2m_0 + 3)+3\big) = r\lambda(n_0z_0+m_0) + q(\lambda+1).
\end{align}
Moreover, by \cref{obs:selection-components}, removing the options of a balanced selection leaves every anchor root in a component with exactly
$\hat d$ inner vertices, while the $q(\beta+1)$ removed option vertices and their leaves belong to none of these components.
}

\begin{construction}
  \label{constr:hfree-hardness}
  Given an instance of \MCC{} with graph $J$ and a partition of its vertices into $U_1, \dots, U_p$, each of size $n_0$, with
  $m_0 \coloneqq |E(J)|$,
  we construct an instance $(G^\star, k^\star, \ell^\star)$ of \HFEwFD{} as follows (see \cref{fig:hfree-construction}(b)).

  Apply \cref{constr:choice-graph} using \cref{constr:hfree-choice-gadget} for the choice gadgets and the above instance.
  Let $L \coloneqq 9r(\hat d + (q + 1)(\eta - 2))$.
  To each anchor root $v$, we add $L-1$ new leaves, each adjacent to $v$.
  Moreover, we add $\sigma \coloneqq q+1$ so-called \emph{shields}: pairwise disjoint cliques on $\eta-2$ new vertices that are all adjacent to $v$.
  The \emph{anchor block} of $v$ consists of $v$, its leaves and its shields.
  Hence, the anchor block has $h \coloneqq L + \sigma(\eta-2)$ vertices.
  Let $G_A$ be the graph formed by the choice graph plus the anchor blocks, and let $n_A$ and $m_A$ be the number of its vertices and edges.
  To $G_A$, add a set $Z$ of $\eta-3$ new vertices that form a clique and are all adjacent to all vertices in $G_A$.
  Further, add a set $W$ of $\omega \coloneqq 2q + \eta$ new vertices that form a clique and are all adjacent to all vertices in $Z$ (but have no neighbor in~$G_A$).
  Call the resulting graph $G^\star$.
  Finally,
  set 
  $\ell^\star \coloneqq q$
  and
  $k^\star \coloneqq r \textstyle\binom{h+\hat d}{2} - m_A + (\beta+3)q$.
\end{construction}

We now give details to Ingredient~3.
Consider a solution $F$ respecting the bounds $k^\star$ and $\ell^\star$ and let $\tilde G \coloneqq G^\star \triangle F$.
If $F$ contains no deletion incident with a vertex of $Z$, then $Z$ is a clique that is complete to $V(G_A)$ in $\tilde G$,
and $\tilde G$ is $H$-free if and only if $\tilde G[V(G_A)]$ is a cluster graph.
The solution may however delete edges between $Z$ and $G_A$ (and this is the intended mechanism in our reduction).
Hence, we record for each vertex $x \in V(G_A)$ the set of neighbors in $Z$ it loses in a solution,
that is, we let $Z_x \coloneqq \{ z \in Z \mid \{z, x\} \notin E(\tilde G)\}$.
We call a vertex $x \in V(G_A)$ \emph{detached} if $Z_x \ne \emptyset$, i.e., at least one edge between $x$ and $Z$ was removed,
and \emph{attached} otherwise.
Denote by $X$ the set of detached and by $Y \coloneqq V(G_A) \setminus X$ the set of attached vertices in $G_A$.
Solutions with at most $\ell^\star$ deletions have the following properties.

\begin{restatable}[\appmark]{lemma}{hfreeDeletionProperties}
  \label{lem:hfree-deletion-properties}
  Let $F \subseteq \binom{V(G^\star)}{2}$ have at most $\ell^\star$ deletions.
  If $\tilde G \coloneqq G^\star \triangle F$ is $H$-free, then, with $X$, $Y$, and the sets~$Z_x$ defined with respect to~$\tilde G$,
  \begin{romanenumerate}
    \item $Z$ induces a clique in $\tilde G$;
    \item $\tilde G[Y]$ is a cluster graph; and
    \item every detached vertex $x \in X$ is adjacent to at most $|Z_x|$ vertices of a cluster of $\tilde G[Y]$.
  \end{romanenumerate}
\end{restatable}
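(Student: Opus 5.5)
We prove the three items in order, each time exhibiting an induced copy of $H=K_\eta-e$ in $\tilde G$ if the claim fails. Throughout we use that $F$ contains at most $\ell^\star=q$ deletions, while $|W|=\omega=2q+\eta$ and $|Z|=\eta-3$.

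For (i), suppose some pair $\{z,z'\}\subseteq Z$ is non-adjacent in $\tilde G$. Every vertex $w\in W$ loses an edge to $Z$ only through a deletion in $F$. Each deletion touches at most one vertex of $W$, so at least $\omega-q\ge q+\eta$ vertices of $W$ remain adjacent to all of $Z$ in $\tilde G$. Among these, deletions inside $W$ destroy at most $q$ edges. A greedy argument therefore gives $\eta-2$ of them that are pairwise adjacent in $\tilde G$: repeatedly pick a vertex and discard its at most few non-neighbours, which works since $q+\eta$ vertices and at most $q$ missing edges leave an independent-set complement of size at least $\eta$. Together with $z,z'$ these $\eta-2$ vertices induce $K_\eta$ minus the edge $zz'$, which is $H$; we also need $\eta-2\ge 2$. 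This is a contradiction, so $Z$ stays a clique.

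For (ii), take an induced $P_3$ $(a,b,c)$ in $\tilde G[Y]$. All three vertices are attached, so by definition each is adjacent in $\tilde G$ to every vertex of $Z$. By (i), $Z$ is a clique on $\eta-3$ vertices. Hence $Z\cup\{a,b,c\}$ induces $K_\eta$ minus the single edge $ac$, that is, a copy of $H$, a contradiction.

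For (iii), let $x\in X$ and let $C$ be a cluster of $\tilde G[Y]$. Suppose $x$ has more than $|Z_x|$ neighbours in $C$; we aim for a copy of $H$. Pick $z\in Z_x$, so that $x$ and $z$ are non-adjacent. The vertices of $C$ are pairwise adjacent and complete to $Z$. Take $|Z_x|+1$ neighbours of $x$ in $C$ and add the vertices of $Z\setminus Z_x$, all of which are adjacent to $x$. This gives a clique $K$ of size $(\eta-3-|Z_x|)+|Z_x|+1=\eta-2$ that is complete to both $x$ and $z$, with $z$ adjacent to all of $K$ by attachment and (i). Then $K\cup\{x,z\}$ induces $K_\eta$ minus the edge $xz$, a copy of $H$, a contradiction.

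The main obstacle is (i): the greedy selection inside $W$ must be checked against the budget, namely that at most $q$ deletions can touch $W$, so $2q+\eta$ vertices leave enough room. Items (ii) and (iii) are then direct counting.
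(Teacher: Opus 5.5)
Your proof is correct and follows the paper's argument item by item. It uses the same witnesses: $Z\cup\{a,b,c\}$ for (ii), and $(Z\setminus Z_x)\cup\{z,x\}$ together with $|Z_x|+1$ cluster vertices for (iii). The only difference is cosmetic and lies in (i). There the paper directly takes $\eta-2$ vertices of $W$ incident to no deletion at all, of which at least $\omega-2\ell^\star=\eta$ exist. This does your two-step count in one step and avoids your literally inaccurate remark that each deletion touches at most one vertex of $W$: a deletion inside $W$ touches two, although you only use the remark for $W$--$Z$ deletions.
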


\toappendix{
  \sv{\hfreeDeletionProperties*}
  \begin{proof}
    Let $F^-$ be the set of deleted edges.
    (i) is clear for $\eta=4$.
    For $\eta \ge 5$, suppose there are $z, z' \in Z$ that are not adjacent in $\tilde G$.
    Recall that $W$ is a clique on $\omega = 2q + \eta = 2\ell^\star + \eta$ vertices.
    Since each edge in $F^-$ is incident with at most two vertices of $W$, at least $\omega - 2\ell^\star = \eta$ vertices of $W$ are incident with no edge in $F^-$.
    Let $W_0$ consist of $\eta-2$ such vertices.
    Then $W_0 \cup \{z, z'\}$ induces $H$, a contradiction.

    For (ii), suppose that $u, v, w \in Y$ induce a $P_3$ with center $v$.
    By definition of $Y$, $\tilde G$ contains all edges between $Y$ and $Z$;
    therefore, $Z \cup \{u, v, w\}$ induces $H$, a contradiction.

    For (iii), suppose that $x \in X$ is adjacent to at least $|Z_x|+1$ vertices of a cluster of $\tilde G[Y]$.
    Let $T$ be the set containing $x$ and any (exactly) $|Z_x|+1$ vertices of the cluster.
    Pick an arbitrary $z \in Z_x$ and let $Z' \coloneqq \{z\} \cup (Z \setminus Z_x)$.
    Then $|Z' \cup T| = (\eta - 2) - |Z_x| + |Z_x| + 2 = \eta$.
    In $\tilde G$, both $Z'$ and $T$ induce cliques, every edge between $Z'$ and $T\cap Y$ exists, and every edge between $x$ and $Z\setminus Z_x$ exists.
    The only non-edge in $\tilde G[Z' \cup T]$ thus is $\{z, x\}$; therefore $Z' \cup T$ induces $H$, a contradiction.
  \end{proof}
}

With this, the correctness follows the lines of \cref{thm:ce-hardness}\sv{; see \cref{app:hfree-hardness} for the full proof}.
First, no anchor root~$v$ is detached:
otherwise, as $v$ has more than $\ell^\star$ shields, some shield of~$v$ is not incident with any deletion,
and $v$ is adjacent to all of its $\eta-2>|Z_v|$ vertices, which are attached and lie in one cluster, contradicting \cref{lem:hfree-deletion-properties}(iii).
Second, isolating the detached vertices yields a cluster graph~$G^\dagger$;
as all detached vertices have small degree, $G^\dagger$ arises from~$G_A$ by few deletions,
and \cref{lem:stars-not-joined} shows that no cluster of~$G^\dagger$, and hence of~$\tilde G[Y]$, contains two anchor roots.
Thus, every choice gadget has to be cut, by detaching one of its inner spine vertices or by deleting a spine edge,
and as $\ell^\star=q$, every choice gadget is cut exactly once.
Third, cutting at a filler vertex or at a spine edge leaves more vertices in the components of the anchor roots than detaching an option vertex,
which also cuts off its $\beta$ leaves; by convexity, the addition budget then does not suffice.
Hence, the detached vertices are option vertices, one per choice gadget, and, as for \CEwFD{}, the budget forces all components of anchor roots to have the same size~$h+\hat d$.
By \cref{obs:selection-components}, the detached option vertices thus form a balanced selection,
and \cref{lem:balanced-selection} yields a multicolored clique.

\toappendix{
\sv{\hfreeHardness*}

\begin{proof}
  We give a parameterized reduction from \MCC.
  As in the proof of \cref{thm:ce-hardness}, we may assume that $p\ge3$, that every pair of color classes has an edge between them, and that all color classes have the same size~$n_0$;
  otherwise, we decide the instance in polynomial time or pad the color classes with isolated vertices,
  and in the former case we output a fixed \yes-instance (a single vertex) or a fixed \no-instance ($H$ itself), in both cases with zero budgets.
  We apply \cref{constr:hfree-hardness} to obtain an instance $(G^\star, k^\star, \ell^\star)$ of \HFEwFD{}.
  Since $\eta$ is a constant, this takes polynomial time:
  as before, $r,q\in O(p^2)$ and $z_0\in O(m_0)$;
  now $\hat d=\lambda(n_0z_0+m_0)+3\in O(n_0m_0)$ and $h=L+\sigma(\eta-2)=9r(\hat d+(q+1)(\eta-2))+(q+1)(\eta-2)\in O(p^2n_0m_0+p^4)$.
  Hence, by \cref{eq:hfree-inner-vertices}, $G_A$ has $n_A=rh+r\lambda d+q(\lambda+1)=r(h+\hat d)+q(\beta+1)$ vertices and $G^\star$ has $n_A+|Z|+|W|=n_A+O(p^2)$ vertices, which is polynomial in $n_0+m_0+p$,
  and the budgets have polynomial encoding length.
  Next, we prove the equivalence of the two instances, before finally proving the parameter bounds.

\proofparagraph{Forward direction}
Suppose $J$ has a multicolored clique with vertices~$u^i_{\theta_i}$, $i\in[p]$.
Then, by \cref{lem:balanced-selection}, the choice graph admits a balanced selection $(u_Q)_{Q \in \mathcal Q}$,
where $\mathcal Q$ is the set of choice gadgets and $u_Q$ is the selected option vertex in choice gadget $Q$.  
Fix an arbitrary vertex $z \in Z$.
Let $F^- \coloneqq \{ \{z, u_Q\} \mid Q \in \mathcal Q \}$,
let $X$ be the set of the $q$~selected option vertices (these are detached),
and let $Y \coloneqq V(G_A) \setminus X$ be the attached vertices.
For each anchor root $v$, let $C_v$ be the connected component of $G_A-X$ containing $v$.
Since the selection is balanced, we have $|C_v|=h + \hat d$ by \cref{obs:selection-components}.
Our edit set $F$ comprises $F^-$ and all non-adjacent pairs inside the sets $C_v$.
Let $\tilde G \coloneqq G^\star \triangle F$.

Clearly, $F^-$ consists of $\ell^\star=q$~edges.
For the number of additions, recall that~$G_A$ has~$m_A$ edges.
Every edge of~$G_A$ lies inside some set~$C_v$, except for the $q(\beta+2)$ edges incident with the option vertices in~$X$ (two spine edges and $\beta$ edges to leaves each);
note that the leaves of the option vertices in~$X$ form singleton components of $G_A-X$.
Hence, $F$ contains exactly $r\binom{h+\hat d}{2}-m_A+q(\beta+2)$~additions, and $|F|=q+r\binom{h+\hat d}{2}-m_A+q(\beta+2)=k^\star$.
It remains to verify that~$\tilde{G}$ is $H$-free.

Observe that~$\tilde{G}[Y]$ is a cluster graph, and that the neighborhood of every~$x\in X$ in~$\tilde{G}[V(G_A)]$ is an independent set.
Clearly, any forbidden induced subgraph~$R$ (equal to~$H$) cannot contain a vertex of~$W$ and of~$V(G_A)$ since these vertices are pairwise non-adjacent and their common neighborhood is~$Z$ and~$|Z|=\eta-3$.
Moreover, since~$Z\cup W$ induces a clique, we conclude that~$R$ contains only vertices of~$Z\cup V(G_A)$.
Clearly, $R$ needs to contain at least 3~vertices of~$V(G_A)$.
By~$S$ we denote the set of these vertices.

If~$S\subseteq Y$, the unique non-adjacent vertex pair of~$R$ has to have both endpoints in~$S$, and these two vertices together with any further vertex of~$S$ form a~$P_3$, a contradiction to the fact that~$\tilde{G}[Y]$ is a cluster graph.
Otherwise, $S$ contains a vertex~$x\in X$.
Suppose that $|S|\ge4$.
Since $R$ has only one non-adjacent pair, $x$ is adjacent to at least $|S|-2\ge2$ vertices of~$S$.
These vertices are pairwise non-adjacent, since the neighborhood of~$x$ in~$\tilde G[V(G_A)]$ is an independent set;
hence, they contain the only non-adjacent pair of~$R$, and $x$ is adjacent to all of $S\setminus\{x\}$.
But then $x$ has at least three pairwise non-adjacent neighbors in~$R$, a contradiction.
Thus, $|S|=3$, and since $|Z|=\eta-3$, $R$ contains all of~$Z$, in particular~$z$.
As $\{x,z\}$ is a non-adjacent pair, it is the only one; so $x$ is adjacent to both vertices of~$S\setminus\{x\}$,
which are therefore non-adjacent, a contradiction. 
Thus, $\tilde{G}$ is $H$-free.

\proofparagraph{Backward direction}
Let $F$ be a set of at most $k^\star$ edits with at most $\ell^\star=q$
deletions such that $\tilde G:=G^\star\triangle F$ is $H$-free.
We write~$F^-:=F\cap E(G^\star)$ and $F^+:=F\setminus E(G^\star)$.
By \cref{lem:hfree-deletion-properties}(i), $Z$ is a clique of $\tilde G$.
Hence, $X$, $Y$, and the sets $Z_x$ are defined.
Put $F_A:=F^-\cap\binom{V(G_A)}2$.
Every $x\in X$ contributes a deletion $\{z,x\}$ with $z\in Z_x$.
Note that these are distinct for distinct $x$ and distinct from the pairs of $F_A$, so
\begin{align}\label{eq:hfree-deletions}
 |X|+|F_A|\le|F^-|\le q .
\end{align}
By \cref{lem:hfree-deletion-properties}(ii), $\tilde G[Y]$ is a cluster graph.
Let~$\mathcal C$ be its set of clusters.

\smallskip
\emph{Step 1: no anchor root is detached.}
Suppose the anchor root $v$ lies in $X$. 
Then, $1\le|Z_v|\le\eta-3$.  
Observe that, since $v$ has more than $\ell^\star$ shields attached to it,
there is at least one shield that is not incident to any edge in $F^-$.
But then every vertex in the shield is attached; thus $v$ is adjacent to $\eta-2 > |Z_v|$ vertices of a cluster of $\tilde G[Y]$,
which is a contradiction to \cref{lem:hfree-deletion-properties}(iii).

\smallskip
\emph{Step 2: no cluster contains two anchor roots.}
Let $G^\dagger$ be the cluster graph on $V(G_A)$ consisting of
$\tilde G[Y]$ and the vertices of $X$ as isolated vertices.  It arises
from $G_A$ by deleting all edges incident with $X$ and all edges of
$F_A$ and adding the pairs of $F^+$ inside $Y$.
Every vertex of~$G_A$ other than an anchor root has degree at most~$\beta+2$ in~$G_A$:
leaves have degree~$1$, filler vertices degree~$2$, option vertices degree~$\beta+2$,
and shield vertices degree~$\eta-2\le\beta+2$, as $\beta\ge\eta-4$.
Hence, by Step~1 and
\cref{eq:hfree-deletions} the number of deletions is at most
\[
 \sum_{x\in X}\deg_{G_A}(x)+|F_A|\le q(\beta+2)+q=q(\beta+3)=:\ell^\dagger,
\]
and the number of edits is at most $\ell^\dagger+|F^+|\le
\ell^\dagger+k^\star\le\ell^\dagger+r\binom{h+\hat d}2$.
Here, the last inequality uses $m_A\ge(\beta+3)q$, which holds since already the edges to the $L-1$ leaves of any anchor root give
$m_A\ge L-1\ge 9r\hat d-1\ge 9r(\beta+3)-1\ge\frac32r(\beta+3)=(\beta+3)q$, using $\hat d\ge\lambda z_0\ge\beta+3$.

Set~$B:=\hat d+\sigma(\eta-2)$. 
Our definition of $L$ yields $L=9rB$, and $h+\hat d=L+\sigma(\eta-2)+\hat d=L+B$.
Now, we apply \Cref{lem:stars-not-joined} to $G_A$, with the $r$~stars formed by the anchor roots and their $L-1$ leaves,
where $B$, $L$, and $\ell^\dagger$ take the roles of the lemma's parameters $d$, $h$, and $\ell$, respectively.  
The hypotheses of \Cref{lem:stars-not-joined} hold: the stars
are vertex-disjoint and have $L$ vertices each, their centers are pairwise non-adjacent, because every choice gadget has
an inner vertex at each end, and $L+B=h+\hat d$. 
Hence, the edit bound is $\ell^\dagger+r\binom{L+B}2$, and $\ell^\dagger\le rB$, because
$\ell^\dagger=q(\beta+3)=\frac32r(\beta+3)\le3r(\beta+2)$ and
$B\ge \hat d\ge12(\beta+2)$.  
\Cref{lem:stars-not-joined} yields that no cluster of
$G^\dagger$ contains two anchor roots.  
Since the clusters of $\mathcal{C}$ are clusters of $G^\dagger$, the same holds for $\mathcal{C}$.

\smallskip
\emph{Step 3: exactly one break per choice gadget, and nothing else.}
To satisfy Step~2, we need to \emph{break} each choice gadget~$Q$, that is, it must contain a detached inner spine vertex or a deleted spine edge.
Since there are~$q=\ell^\star$ choice gadgets and our deletion budget is exactly~$\ell^\star$, we conclude that every choice gadget is broken exactly once, by exactly one deletion, and
no other edge is deleted.  
In particular every detached vertex~$x$ is an
inner spine vertex detached from a single vertex of $Z$ implying~$|Z_x|=1$.

Let $\kappa$, $\pi$ and $\delta_1$ be the number of choice gadgets broken at an option
vertex, at a filler vertex and at a spine edge, respectively. 
 Since every choice gadget is broken exactly once, a detached vertex has degree~$\beta+2$ if it is an option vertex and degree~$2$ if it is a filler vertex. 
 Moreover, we have
\[
 \kappa+\pi+\delta_1=q,\qquad |X|=\kappa+\pi,\qquad
 \sum_{x\in X}\deg_{G_A}(x)=\kappa(\beta+2)+2\pi .
\]
Removing the breaks splits $G_A[Y]$ into (a)~one component $C_v$ per anchor
root $v$, consisting of the anchor block of $v$ and the parts of its
three choice gadgets that stay connected to $v$, and (b)~the isolated leaves of the
$\kappa$ detached option vertices.  
An option vertex break removes $\beta+1$~vertices from the anchor components, a
filler break removes one, and an edge break removes none.  
Using
$n_A=r(h+\hat d)+q(\beta+1)$ (see \cref{eq:hfree-inner-vertices}), the
anchor components therefore contain
\[
 \sum_v|C_v|=n_A-\kappa(\beta+1)-\pi
 =r(h+\hat d)+(\beta+1)(q-\kappa)-\pi
 =r(h+\hat d)+(\beta+1)\delta_1+\beta\pi
\]
vertices, that is, $\tau:=(\beta+1)\delta_1+\beta\pi\ge0$ vertices more
than the $r(h+\hat d)$ of the intended solution.

Note that a positive surplus contradicts the edit budget:  On the one
hand, the anchor components lie in $r=2p(p-1)$~distinct clusters (Step~2) with
at least $r(h+\hat d)+\tau$ vertices in total, so by convexity completing them
requires at least $r\binom{h+\hat d+\tau/r}2\ge r\binom{h+\hat d}2+\tau(h+\hat d-\frac12)$~edges.  
On the other hand, $\tilde G[Y]$ has at most
$m_A-\sum_{x\in X}\deg_{G_A}(x)+k^\star-q
=r\binom{h+\hat d}2+(\beta+2)\delta_1+\beta\pi\le r\binom{h+\hat d}2+\frac32\tau$~edges, since the remaining budget for additions is $k^\star-q$.  
Hence, $\tau(h+\hat d-\frac12)\le\frac32\tau$, and as $h+\hat d>2$ this forces $\tau=0$.
Consequently, we have~$\delta_1=\pi=0$ and thus, every choice gadget is broken at
one of its option vertices.
In other words, the $q$~deletions are exactly the edges that join one option vertex~$u_Q$ of every choice gadget~$Q$ to some vertex of~$Z$.

\smallskip
\emph{Step~4: every cluster containing an anchor root has size $h + \hat d$.}  
By Step~3, the deletions are exactly the
$q$ edges joining an option vertex~$u_Q$ to a vertex of~$Z$ (one per choice gadget).
Hence, $\sum_v|C_v|=r(h+\hat d)$ and~$G_A[Y]$ has
$m_A-q(\beta+2)$~edges. 
Since at most
$k^\star-q=r\binom{h+\hat d}{2}-m_A+q(\beta+2)$ edges are added, we have~$|E(\tilde G[Y])|\le r\binom{h+\hat d}{2}$.
Moreover, each $C_v$ is connected in $\tilde G[Y]$ and thus
contained in a cluster $C^\ast_v$, and these clusters are distinct by
Step~2.  By the convexity of $x\mapsto\binom x2$ and
$\sum_v|C_v|=r(h+\hat d)$, we have
\[
 |E(\tilde G[Y])|\ \ge\ \sum_v\binom{|C^\ast_v|}2
 \ \ge\ \sum_v\binom{|C_v|}2\ \ge\ r\binom{h+\hat d}2 .
\]
So equality holds throughout: $C^\ast_v=C_v$, all $C_v$ have equal size yielding~$|C_v|=h + \hat d$, for every anchor root $v$.

\smallskip
\emph{Step 5: reading off the multicolored clique.}
By Step~3, $C_v$ contains the whole anchor block of $v$, which has $h$
vertices, so the remaining $|C_v|-h=\hat d$ vertices of $C_v$ are inner
vertices of choice gadgets which belong to the three choice gadgets at $v$.  
By Step~3 each of these choice gadgets
$Q$ is broken by detaching one of its option vertices $u_Q$. 
Hence, $(u_Q)_{Q\in\mathcal Q}$ is a selection, and $C_v$ is the component of~$v$ after removing its options.
As $C_v$ contains exactly $\hat d=\lambda d+3$ inner vertices of choice gadgets for every anchor root~$v$,
the selection is balanced by \cref{obs:selection-components}, and
\cref{lem:balanced-selection} yields a multicolored clique in~$J$.

\proofparagraph{Parameter bounds}
Recall that the deletion budget is~$\ell^\star=q=3p(p-1)$.

For the feedback vertex number, delete from $G^\star$ the $\eta-3$
vertices of $Z$, all but two vertices of $W$, the $r$ anchor roots, and
$\eta-4$ vertices of each of the $r\sigma$ shields.  What remains of $W$
is an edge and also what remains of any shield is an edge with no further incident edge, since its anchor root is gone.
Moreover, the leaves attached to the anchor roots become isolated. 
Finally, the choice graph becomes a disjoint union
of paths with attached leaves.  
Thus,
\[
 \fvs(G^\star)\le(\eta-3)+(\omega-2)+r+r\sigma(\eta-4)
 =r(\ell^\star+1)(\eta-4)+r+2\ell^\star+2\eta-5 .
\]

For the pathwidth, first build a path decomposition of $G_A$ in which
every bag contains all $r$ anchor roots.  For the anchor block of $v$
take one bag $\{u\}$ per leaf $u$ and, for every shield, one bag containing its vertices. 
For a choice gadget take the bag $\{u,w\}$ of every spine edge, ordered along the
spine, with the bags of the edges to the leaves of an option vertex inserted between
the two spine-edge bags at that vertex.  Concatenate all groups in any
order.  
This yields a path decomposition of~$G_A$ with bags of size at most
$r+\max\{\eta-2,2\}=r+\eta-2$.  
Add the $\eta-3$ vertices of $Z$ to every
bag and append one bag $Z\cup W$. 
The result is a path decomposition of
$G^\star$ with bags of size at most $\max\{r+2\eta-5,\ \omega+\eta-3\}
=2\ell^\star+2\eta-3$, using $2\ell^\star=3r$.  Hence
\[
 \pw(G^\star)\le2\ell^\star+2\eta-4=6p(p-1)+2\eta-4.
\]
Since $\eta$ is a constant, and~$r=2p(p-1)$, we observe that~$\ell^\star+\pw(G^\star)+\fvs(G^\star)\in O(p^4)$.
\end{proof}
}

\section{FPT-algorithms wrt. Deletions \boldmath$\ell$ Combined with Above-guarantees}\label{sec:above-guarantees}
\toappendix{\sv{\section{Missing Details for the FPT-Algorithms}\label{app:fpt}}}

\subsection{Cluster Editing with Few Deletions}
We now prove the following.

\begin{restatable}[\appmark]{theorem}{fptMuEll}\label{thm:fpt-mu-ell}
  An instance of \CEwFD{} given together with a modification-disjoint
 collection $\mathcal P$ of induced $P_3$'s of the input graph can be decided in $O(4^{\mu+\ell}\cdot(n+m+\mu))$ time, where $\mu=k-|\mathcal P|$.
\end{restatable}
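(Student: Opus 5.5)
We plan a bounded search tree whose measure is $\mu+\ell$, where $\mu=k-|\mathcal P|$ is the excess over the packing. Both quantities start nonnegative. Every branch will strictly decrease $\mu+\ell$ and will use at most four branches, which gives $4^{\mu+\ell}$ leaves. Each node will do linear work: it finds an induced $P_3$, updates the packing locally, and keeps edited pairs marked permanent so they are never toggled twice. This yields the claimed $O(4^{\mu+\ell}\cdot(n+m+\mu))$ bound. Throughout we maintain a modification-disjoint packing $\mathcal P$ of induced $P_3$'s of the current graph and the invariant $|\mathcal P|\le k$. If $\mu<0$ or $\ell<0$ we reject. If the graph is a cluster graph we accept.

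\textbf{Case 1: an induced $P_3$ $Q=(u,v,w)$ modification-disjoint from all of $\mathcal P$.} Then $\mathcal P\cup\{Q\}$ is a larger packing, so $\mu$ drops by one at no cost. To keep the branching honest we instead branch on $Q$ directly. One branch adds the missing pair $\{u,w\}$; it lowers $k$ by one, so $\mu$ drops by one. The other two branches delete $\{u,v\}$ or $\{v,w\}$; each lowers $k$ and $\ell$ by one, so $\mu+\ell$ drops by at least one.

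\textbf{Case 2: every induced $P_3$ shares a pair with some packed $P_3$.} Pick an induced $P_3$ $Q$ that is not itself in $\mathcal P$, if one exists. Since $Q$ is not in $\mathcal P$, it shares exactly one pair $e$ with some $P\in\mathcal P$. A solution must resolve $Q$ by one of its three pairs. Toggling a pair of $Q$ other than $e$ spends one edit that is not charged to $P$. The packing $\mathcal P$ still certifies $|\mathcal P|$ further edits, except possibly for packed $P_3$'s destroyed as a side effect. We therefore rebuild $\mathcal P$ by removing the $P_3$'s whose pairs were touched. The key accounting claim is that in every branch either $\ell$ drops, or $k$ drops while $|\mathcal P|$ stays the same, so $\mu$ drops. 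Additions are the dangerous case, because they do not reduce $\ell$. An addition of $\miss(Q)$ lies outside every packed $P_3$ except possibly $P$. If it lies in $P$, it is exactly $\miss(P)$, and it resolves $P$ as well. In that case we replace $P$ by nothing and $k$ drops by one, so $\mu$ is unchanged.

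\textbf{Main obstacle.} The hard situation is a packed $P=(a,b,c)$ branched on itself. Adding $\miss(P)$ decreases both $k$ and $|\mathcal P|$, so neither parameter moves. To handle it we plan the following four-way branch for a packed $P$: delete $\{a,b\}$, delete $\{b,c\}$, add $\{a,c\}$ and \emph{mark $P$ as resolved by addition}, or the fourth branch described below. After an addition, a new $P_3$ must appear, or the vertices $a,b,c$ now form a triangle. We then apply exhaustively the rule that if no $P_3$ overlaps any packed $P_3$ other than via its missing pair, every packed $P_3$ is resolved by adding its missing pair. This is safe: in that case every packed $P_3$ can be fixed by its missing pair alone, the total cost is then exactly $|\mathcal P|$, and the resulting graph must be checked for being a cluster graph. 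If it is not, some induced $P_3$ appears after these additions. That $P_3$ is modification-disjoint from the remaining packing or forces a deletion, so the branching has to spend from $\mu$ or $\ell$. The fourth branch covers solutions in which the edit on $P$ is an addition but the created triangle conflicts with a neighbouring $P_3$ that shares pair $\{a,b\}$ or $\{b,c\}$; it deletes that shared pair.

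Proving that this exhaustive rule is correct, i.e.\ that whenever it does not immediately succeed some branch must drop $\mu$ or $\ell$, is the step I expect to require the most care. I would first try to show it via a charging argument: in any optimal solution, each packed $P_3$ either receives only its missing pair, or contributes excess to $\mu$ or a deletion to $\ell$.
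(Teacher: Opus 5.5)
\textbf{There is a genuine gap.} Your search tree branches on induced $P_3$'s of the current graph $G$, and this is where it breaks. Suppose $Q=(u,v,w)$ is such a $P_3$ with $\{u,w\}=\miss(P)$ for some $P\in\mathcal P$. This is the situation in your Case~2, and also when you branch on a packed $P$ itself. Then the branch that adds $\{u,w\}$ lowers $k$ and $|\mathcal P|$ together, so $\mu+\ell$ does not decrease. Since this branch has progressing siblings, the tree is bounded only in terms of $k$, not $\mu+\ell$.

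You identify this as the main obstacle, but the remedy does not resolve it:
the fourth branch is never specified;
the ``exhaustive rule'' commits to all additions in $\miss(\mathcal P)$ without a safety argument;
and the claim that any $P_3$ appearing afterwards forces the branching to spend from $\mu$ or $\ell$ is exactly what needs proof.
So neither the decrease of the measure in every branch nor the bound of four branches is established. A minor point: in Case~2, a non-packed $Q$ shares at most one pair with each packed $P_3$, but it may share pairs with up to three of them. So ``exactly one pair $e$ with some $P$'' is not right.

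\textbf{The missing idea} is to treat the additions $\miss(\mathcal P)$ as a \emph{default that branches may undo}, and to branch on induced $P_3$'s of $G^{\mathcal P}=(V,E(G)\cup\miss(\mathcal P))$ rather than of $G$.

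Accept if $G^{\mathcal P}$ is a cluster graph; this costs $|\mathcal P|\le k$ additions and no deletion. Otherwise pick an induced $P_3$ of $G^{\mathcal P}$ with edges $\{u,v\},\{v,w\}$. A solution either adds $\{u,w\}$ or keeps some $e\in\{\{u,v\},\{v,w\}\}$ out. The three kinds of branches then all make progress:
\begin{itemize}
\item Every $\mods(Q)$ with $Q\in\mathcal P$ consists of edges of $G^{\mathcal P}$. Hence $\{u,w\}$ lies in no modification set, and adding it lowers $\mu$ while $\mathcal P$ is unchanged.
\item If $e\in E(G)$, deleting $e$ lowers $\ell$, and $|\mathcal P|$ drops by at most one.
\item If $e\notin E(G)$, then $e=\miss(Q_e)$ for a unique $Q_e\in\mathcal P$. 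A solution that does not add $e$ must still destroy the induced $P_3$ $Q_e$ of $G$, so it deletes one of the two edges of $Q_e$. Branching on these two deletions lowers $\ell$ and removes $Q_e$ from the packing, which undoes the default addition of $e$.
\end{itemize}
This gives at most five branches, each decreasing $\mu+\ell$.

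To get down to four branches, you also need the following fact: if $G^{\mathcal P}$ is not a cluster graph, it has an induced $P_3$ with at least one edge in $G$. To see it, suppose both edges of the chosen $P_3$ are packed missing pairs, and let $x$ and $y$ be the third vertices of the corresponding packed $P_3$'s. By modification-disjointness, $u,v,w,x,y$ are distinct. Then either $\{u,v,y\}$ or $\{u,y,w\}$ induces the desired $P_3$, depending on whether $\{u,y\}\in E(G^{\mathcal P})$.
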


The basic strategy is to first tentatively add all missing edges of the $P_3$s in the packing.
If this gives a cluster graph, we have found a solution.
Otherwise, we can find an induced $P_3$, say $Q$, in the augmented graph.
As we will see, there is always a solution that either adds the missing edge, yielding excess over the packing, or performs an edge-deletion in the neighborhood of $Q$.

For an induced $P_3$~$Q$ of~$G$ with edges $\{p,q\}$ and $\{q,r\}$, the definitions in \cref{sec:preliminaries} give $\miss(Q)=\{p,r\}$ and $\mods(Q)=\{\{p,q\},\{q,r\},\{p,r\}\}$; these are precisely the three vertex pairs whose toggling destroys~$Q$.
In the following we assume that, along with the instance, we are given a modification-disjoint collection $\mathcal P$ of induced $P_3$'s of the input graph, and we define the above-guarantee parameter $\mu\coloneqq k-|\mathcal P|$.
Clearly, if $\mu<0$, then we face a no-instance and we thus assume $\mu\geq0$.
We may also assume that $\mathcal P$ is inclusion-maximal: further induced $P_3$'s whose modification sets are disjoint from those of the current members can be added greedily, each decreasing~$\mu$ by one.

For a modification-disjoint collection $\mathcal P$ of induced $P_3$'s of $G$,
let $\miss(\mathcal P)\coloneqq\{\miss(Q)\mid Q\in\mathcal P\}$ collect the
missing pairs of the members of $\mathcal P$ and let
\begin{align}\label{eq:completed-graph}
 G^{\mathcal P}\coloneqq(V(G),E(G)\cup \miss(\mathcal P))
\end{align}
be the graph obtained from $G$ by adding all of them.
Distinct members of $\mathcal P$ have disjoint modification sets and hence distinct missing pairs, so $|{\miss(\mathcal P)}|=|\mathcal P|$.
Adding $\miss(\mathcal P)$ is thus a cheapest way of destroying all members of $\mathcal P$, and it is the only such way that spends no deletion.
We treat it as the default and deviate from it only when forced to.

We call a tuple $(G,k,\ell,\mathcal P)$ a \emph{packed instance} if $(G,k,\ell)$ is an instance of \CEwFD{} and $\mathcal P$ is a modification-disjoint collection of induced $P_3$'s of $G$, and we call it a \emph{yes-instance} if $(G,k,\ell)$ is one.

\begin{figure}[t]
\centering
\begin{tikzpicture}[
  vtx/.style={draw, circle, fill=white, minimum size=4.5mm, inner sep=0pt,
              font=\scriptsize},
  gedge/.style={black, line width=.8pt},
  added/.style={blue!70!black, densely dashed, line width=.9pt},
  newpair/.style={red!70!black, line width=1.1pt},
  gone/.style={black!40, line width=.8pt},
  xmark/.style={midway, inner sep=.5pt, fill=white, font=\scriptsize},
  lbl/.style={font=\scriptsize},
  flow/.style={black!55, line width=.7pt, -latex}
]
  \newcommand{\panelnodes}{%
    \node[vtx] (u) at (0,0) {$u$};
    \node[vtx] (x) at (.4,.7) {$x$};
    \node[vtx] (v) at (.8,0) {$v$};
    \node[vtx] (w) at (1.6,0) {$w$};}

  \begin{scope}[shift={(-2.9,1)}]
    \draw[gedge] (0,0) -- (.4,.7) -- (.8,0) (.8,0) -- (1.6,0);
    \draw[added] (0,0) -- (.8,0);
    \panelnodes
    \node[lbl] at (.8,-.5) {$G^{\mathcal P}$};
  \end{scope}

  \draw[flow] (-.9,1.35) -- (-.35,1.35);
  \draw[gedge, black!55, line width=.7pt] (-.35,1.35) -- (8.75,1.35);
  \foreach \cx in {.8,3.45,6.1,8.75} \draw[flow] (\cx,1.35) -- (\cx,1.0);

  \begin{scope}[shift={(0,0)}]
    \draw[gedge] (0,0) -- (.4,.7) -- (.8,0) (.8,0) -- (1.6,0);
    \draw[added] (0,0) -- (.8,0);
    \draw[newpair] (0,0) to[bend right=38] (1.6,0);
    \panelnodes
    \node[lbl] at (1.3,.75) {\cref{item:rule-add}};
  \end{scope}

  \begin{scope}[shift={(2.65,0)}]
    \draw[gedge] (0,0) -- (.4,.7) -- (.8,0);
    \draw[gone] (.8,0) -- (1.6,0) node[xmark] {$\times$};
    \draw[added] (0,0) -- (.8,0);
    \panelnodes
    \node[lbl] at (1.3,.75) {\cref{item:rule-delete}};
  \end{scope}

  \begin{scope}[shift={(5.3,0)}]
    \draw[gone] (0,0) -- (.4,.7) node[xmark] {$\times$};
    \draw[gedge] (.4,.7) -- (.8,0) (.8,0) -- (1.6,0);
    \panelnodes
    \node[lbl] at (1.3,.75) {\cref{item:rule-added-pair}};
  \end{scope}
  \begin{scope}[shift={(7.95,0)}]
    \draw[gedge] (0,0) -- (.4,.7) (.8,0) -- (1.6,0);
    \draw[gone] (.4,.7) -- (.8,0) node[xmark] {$\times$};
    \panelnodes
    \node[lbl] at (1.3,.75) {\cref{item:rule-added-pair}};
  \end{scope}

  \begin{scope}[shift={(0,3.4)}]
    \draw[gedge] (.5,-1.5) -- (1.1,-1.5) node[lbl, right] {edge of $G$};
    \draw[added] (3.2,-1.5) -- (3.8,-1.5)
    node[lbl, right] {pair in $\miss(\mathcal P)$};
    \draw[newpair] (6.5,-1.5) -- (7.1,-1.5) node[lbl, right] {added};
    \draw[gone] (8.2,-1.5) -- (8.8,-1.5) node[xmark] {$\times$};
    \node[lbl, right] at (8.8,-1.5) {deleted};
  \end{scope}
\end{tikzpicture}
\caption{\Cref{brule:destroy-p3} applied to the induced $P_3$ on $u$, $v$, $w$
of $G^{\mathcal P}$, in the case that $\{u,v\}$ is the missing pair of the
member $Q$ of $\mathcal P$ on $u$, $x$, $v$, and that $\{v,w\}$ is an edge of
$G$.  Keeping $\{u,v\}$ out removes $Q$ from the collection, so the pair is no
longer added and $Q$ has to be destroyed by one of its own two edges; this is
the item that produces two packed instances rather than one.}
\label{fig:branching-rule}
\end{figure}

\looseness=-1
The rule below destroys a fixed induced $P_3$ of $G^{\mathcal P}$ with edges
$\{u,v\}$ and $\{v,w\}$.  A cluster graph contains no induced $P_3$, so a
solution adds $\{u,w\}$ or keeps one of $\{u,v\}$ and $\{v,w\}$ out of the
cluster graph, and the rule branches on these three possibilities.  Adding
$\{u,w\}$ is an edit of $G$ outside $\mathcal{P}$ and is charged as one towards $\mu$.  For the other two the pair
kept out may or may not be an edge of $G$, and the case in which it is not does not immediately reduce the parameter.
To deal with that, say we keep out the nonedge $\{u,v\}$ of $G$.
There is an induced $P_3$ in $G$ that was responsible for $\{u,v\}$ being added to $G^{\mathcal{P}}$.
Keeping $\{u,v\}$ out now means that we also have to delete one of the other edges of this $P_3$, giving an additional deletion which decreases parameter $\ell$.
\Cref{fig:branching-rule} shows the resulting packed instances in this case.

Each produced packed instance thus records one edit, a toggled pair $f$, and it is given the collection $\mathcal P-f\coloneqq\{Q\in\mathcal P\mid f\notin\mods(Q)\}$ of those members that this edit leaves intact.

\begin{branchingrule}\label{brule:destroy-p3}
 Let $(G,k,\ell,\mathcal P)$ be a packed instance such that $G^{\mathcal P}$ is
 not a cluster graph, and let $u$, $v$, $w$ induce a $P_3$ in $G^{\mathcal P}$
 with edges $\{u,v\}$ and $\{v,w\}$.  Produce the following packed instances.
 \begin{enumerate}
  \item\label{item:rule-add} $(G+\{u,w\},\,k-1,\,\ell,\,\mathcal P)$.
  \item\label{item:rule-delete} For each $e\in\{\{u,v\},\{v,w\}\}$ with $e\in
   E(G)$, the packed instance $(G-e,\,k-1,\,\ell-1,\,\mathcal P-e)$.
  \item\label{item:rule-added-pair} For each $e\in\{\{u,v\},\{v,w\}\}$ with
   $e\notin E(G)$, let $Q_e$ denote the member of $\mathcal P$ with
   $\miss(Q_e)=e$ and produce, for each of the two edges
   $f\in\mods(Q_e)\cap E(G)$, the packed instance
   $(G-f,\,k-1,\,\ell-1,\,\mathcal P-f)$.
 \end{enumerate}
\end{branchingrule}

We next verify that the rule is well defined, decreases $\mu+\ell$, and
that it is safe.

\begin{restatable}[\appmark]{lemma}{bruleWellFormed}\label{lem:brule-well-formed}
  \cref{brule:destroy-p3} is well defined and produces at most five packed instances.
  Every produced packed instance $(G',k',\ell',\mathcal P')$ satisfies $(k'-|\mathcal P'|)+\ell'\leq(k-|\mathcal P|)+\ell-1$.
\end{restatable}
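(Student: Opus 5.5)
The proof checks each of the three items of \cref{brule:destroy-p3} separately. Throughout it uses that $\mathcal P$ is modification disjoint and that $u,v,w$ induce a $P_3$ in $G^{\mathcal P}$ with edges $\{u,v\}$, $\{v,w\}$ and non-edge $\{u,w\}$.

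\textbf{Well-definedness.} Each produced tuple must be a packed instance, meaning its collection consists of pairwise modification-disjoint induced $P_3$'s of the new graph. A subcollection of a modification-disjoint collection is still modification disjoint. The modification set of a member $Q$ is exactly the set of pairs whose toggling changes $G[V(Q)]$, so every member of $\mathcal P-f$ is still an induced $P_3$ of $G\triangle\{f\}$.

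\begin{itemize}
\item Item~\ref{item:rule-add}: the pair $\{u,w\}$ is a non-edge of $G^{\mathcal P}$, hence not in $\miss(\mathcal P)$. It can also not be an edge of any member $Q$, since such an edge lies in $E(G)\subseteq E(G^{\mathcal P})$. So $\{u,w\}$ lies in no modification set and $\mathcal P$ is unchanged and valid.
\item Item~\ref{item:rule-delete}: this is covered by the general $\mathcal P-f$ argument.
\item Item~\ref{item:rule-added-pair}: if $e\notin E(G)$, then $e\in E(G^{\mathcal P})$ forces $e\in\miss(\mathcal P)$. Missing pairs are distinct, so $Q_e$ is unique. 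Since $Q_e$ is a $P_3$, $\mods(Q_e)\cap E(G)$ consists of exactly two edges.
\end{itemize}

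The instance count is $1$ from item~\ref{item:rule-add}. Each of the two pairs $e$ gives either one instance (item~\ref{item:rule-delete}) or two (item~\ref{item:rule-added-pair}), so each pair gives at most $2$. The total is at most $1+2+2=5$.

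\textbf{Measure decrease.} Every produced instance has $k'=k-1$, and $\mathcal P'\subseteq\mathcal P$ gives $|\mathcal P'|\le|\mathcal P|$.
\begin{itemize}
\item Item~\ref{item:rule-add}: $\ell'=\ell$ and $\mathcal P'=\mathcal P$, so $k'-|\mathcal P'|$ drops by exactly one.
\item Items~\ref{item:rule-delete} and~\ref{item:rule-added-pair}: $\ell'=\ell-1$. The toggled pair $f$ lies in at most one modification set by disjointness, so $|\mathcal P'|\ge|\mathcal P|-1$ and $k'-|\mathcal P'|\le k-|\mathcal P|$. Together with $\ell'=\ell-1$ this gives the claimed decrease by at least one.
\end{itemize}

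The only step needing some care is the first bullet of well-definedness, that $\{u,w\}$ touches no modification set. The remainder is bookkeeping.
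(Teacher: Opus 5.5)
Your proof is correct and follows essentially the same route as the paper's proof. The shared steps are: existence and uniqueness of $Q_e$ via $E(G^{\mathcal P})\setminus E(G)=\miss(\mathcal P)$; preservation of the packed-instance property because members of $\mathcal P-f$ avoid the toggled pair (for item~1, every $\mods(Q)$ lies in $E(G^{\mathcal P})$ while $\{u,w\}$ does not); the $1+2+2$ count; and the fact that a deleted edge lies in at most one modification set, so $|\mathcal P'|\geq|\mathcal P|-1$ while $k$ and $\ell$ each drop by one.
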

\toappendix{
  \sv{\bruleWellFormed*}
\begin{proof}
 We first check that the rule is well defined, that is, that $Q_e$ in
 \cref{item:rule-added-pair} exists and is unique.  If $e\in\{\{u,v\},\{v,w\}\}$
 and $e\notin E(G)$, then $e\in E(G^{\mathcal P})\setminus E(G)=\miss(\mathcal P)$
 by \cref{eq:completed-graph}, so some member of $\mathcal P$ has missing pair
 $e$; this member is unique because distinct members have disjoint modification
 sets and hence distinct missing pairs.

 Next we check that each produced tuple is a packed instance, that is, that its
 fourth component is a modification-disjoint collection of induced $P_3$'s of
 its first component.  Modification disjointness is inherited by subcollections,
 so only the induced-$P_3$ property needs an argument.  For a tuple with first
 component $G-f$ and fourth component $\mathcal P-f$, let $Q\in\mathcal P-f$.
 By the definition of $\mathcal P-f$ we have $f\notin\mods(Q)$, so $G-f$ and $G$
 agree on the three pairs in $\mods(Q)$ and $Q$ is an induced $P_3$ of $G-f$.
 For the tuple of \cref{item:rule-add}, let $Q\in\mathcal P$.  The two edges of
 $Q$ lie in $E(G)$ and its missing pair lies in $\miss(\mathcal P)$, so
 $\mods(Q)\subseteq E(G^{\mathcal P})$.  Since $u$, $v$, $w$ induce a $P_3$ in
 $G^{\mathcal P}$, we have $\{u,w\}\notin E(G^{\mathcal P})$ and hence
 $\{u,w\}\notin\mods(Q)$.  Thus $G+\{u,w\}$ and $G$ agree on $\mods(Q)$ and $Q$
 is an induced $P_3$ of $G+\{u,w\}$.

 The rule produces one packed instance in \cref{item:rule-add} and, for each of
 the two pairs $\{u,v\}$ and $\{v,w\}$, either one in \cref{item:rule-delete} or
 two in \cref{item:rule-added-pair}, hence at most five in total.

 It remains to bound $(k'-|\mathcal P'|)+\ell'$.  In \cref{item:rule-add} we
 have $k'=k-1$, $\mathcal P'=\mathcal P$, and $\ell'=\ell$, so this quantity is
 $(k-|\mathcal P|)+\ell-1$.  In \cref{item:rule-delete,item:rule-added-pair} an
 edge $f\in E(G)$ is deleted and thus $k'=k-1$ and $\ell'=\ell-1$.  If
 $f\in\mods(Q)$ for some $Q\in\mathcal P$, then this $Q$ is unique by
 modification disjointness, so $|\mathcal P'|=|\mathcal P|-1$ by
 the definition of $\mathcal P-f$ and the quantity is $(k-|\mathcal P|)+\ell-1$.
 Otherwise $|\mathcal P'|=|\mathcal P|$ and it is $(k-|\mathcal P|)+\ell-2$.  In
 either case it is at most $(k-|\mathcal P|)+\ell-1$.
\end{proof}
}

\begin{lemma}\label{lem:brule-safe}
 \cref{brule:destroy-p3} is safe.
\end{lemma}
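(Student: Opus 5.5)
We have to show two directions: if some produced packed instance is a yes-instance then $(G,k,\ell)$ is one, and conversely. Throughout, a solution of a produced instance is combined with the recorded edit, and a solution of $(G,k,\ell)$ is split into one edit and a solution of a produced instance.

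\emph{Soundness.} Each produced instance has the form $(G\triangle\{f\},k-1,\ell',\mathcal P')$. Here $f$ is an addition with $\ell'=\ell$ in \cref{item:rule-add}, and a deletion of an edge of $G$ with $\ell'=\ell-1$ in the other two items. Let $F'$ be a solution of the produced instance. We take $F\coloneqq F'\triangle\{f\}$ and check the budgets. If $f\in F'$, then $F$ is even smaller and the deletion count does not increase. Otherwise $|F|\le k$, and $F$ contains at most $\ell$ deletions of $G$, since the pair $f$ counts as a deletion exactly in the items where $\ell'=\ell-1$. Finally, $G\triangle F=(G\triangle\{f\})\triangle F'$ is a cluster graph. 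The fourth component plays no role in this direction.

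\emph{Completeness.} Let $F$ be a solution for $(G,k,\ell)$ and let $\tilde G=G\triangle F$. Because $u,v,w$ induce a $P_3$ in $G^{\mathcal P}$, and $\tilde G$ is a cluster graph, one of the following holds: $\{u,w\}\in E(\tilde G)$, or some $e\in\{\{u,v\},\{v,w\}\}$ is a non-edge of $\tilde G$.
\begin{itemize}
\item If $\{u,w\}\in E(\tilde G)$, then, as $\{u,w\}\notin E(G^{\mathcal P})\supseteq E(G)$, we have $\{u,w\}\in F$. Hence $F\setminus\{\{u,w\}\}$ solves the instance of \cref{item:rule-add}, which has $k-1$ edits and the same number of deletions.
\item If $e\in E(G)$ and $e\notin E(\tilde G)$, then $e\in F$ is a deletion, and $F\setminus\{e\}$ solves the instance of \cref{item:rule-delete}.
\item If $e\notin E(G)$, then $e=\miss(Q_e)$ and $e\notin E(\tilde G)$. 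Since $\tilde G$ has no induced $P_3$ on $V(Q_e)$, and the missing pair of $Q_e$ is not added, $F$ must delete one of the two edges $f\in\mods(Q_e)\cap E(G)$. Otherwise $Q_e$ survives as an induced $P_3$, because $F$ does not touch $e$ and touches neither edge. Then $F\setminus\{f\}$ solves the instance $(G-f,k-1,\ell-1,\mathcal P-f)$ of \cref{item:rule-added-pair}.
\end{itemize}
In every case, the produced tuple is a packed instance by \cref{lem:brule-well-formed}. Being a yes-instance depends only on $(G',k',\ell')$, so this completes completeness.

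\emph{Main obstacle.} The only delicate step is the case $e\notin E(G)$: we must argue that, although $e$ is merely a pair from $\miss(\mathcal P)$, keeping it out of $\tilde G$ forces a deletion inside $Q_e$. This uses that $Q_e$ is an induced $P_3$ of $G$ whose three pairs form exactly $\mods(Q_e)$, and that $\tilde G$ agrees with $G$ on every pair of $\mods(Q_e)$ not in $F$. The remaining bookkeeping, that removing one pair from $F$ lowers $|F|$ by one and lowers the deletion count by one exactly when that pair is an edge of $G$, is routine.
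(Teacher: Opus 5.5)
Your proposal is correct and follows essentially the same route as the paper: you write every produced instance as $(G\triangle\{f\},k-1,\ell',\mathcal P')$ and use $F'\triangle\{f\}$ to lift a solution of a produced instance back to $(G,k,\ell)$. For the converse you use the same three-way case analysis (adding $\{u,w\}$; deleting an original edge; or, for a pair of $\miss(\mathcal P)$, deleting an edge of $Q_e$) and then remove $f$ from $F$. The only cosmetic difference is that you split the budget check on whether $f\in F'$, whereas the paper handles it uniformly via $|F|\le|F'|+1$ and the fact that $G$ and $G\triangle\{f\}$ agree on all pairs other than $f$.
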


\begin{proof}
 Each packed instance produced by the rule has the form $(G',k-1,\ell',\mathcal P')$ with $G'\coloneqq G\triangle\{f\}$ for a single pair~$f$, where $\ell'=\ell-1$ if $f\in E(G)$ and $\ell'=\ell$ otherwise: $f=\{u,w\}\notin E(G)$ in \cref{item:rule-add}, and $f\in E(G)$ in \cref{item:rule-delete,item:rule-added-pair}.
 The graphs $G$ and $G'$ differ only in the pair~$f$.
 Hence every $F'\subseteq\binom{V(G)}{2}$ satisfies $G'\triangle F'=G\triangle(F'\triangle\{f\})$, and every pair $g\neq f$ lies in $E(G)$ if and only if it lies in $E(G')$.

 \looseness=-1
 Suppose first that one of the produced packed instances is a yes-instance, and let $F'$ be a solution of $(G',k-1,\ell')$.
 Then $F\coloneqq F'\triangle\{f\}$ satisfies that $G\triangle F=G'\triangle F'$ is a cluster graph and $|F|\leq|F'|+1\leq k$.
 Every pair in $F\cap E(G)$ other than~$f$ lies in $F'\cap E(G')$, so $|F\cap E(G)|\leq\ell'+1=\ell$ if $f\in E(G)$ and $|F\cap E(G)|\leq\ell'=\ell$ otherwise.
 Thus $(G,k,\ell)$ is a yes-instance.

 \looseness=-1
 Suppose conversely that $(G,k,\ell,\mathcal P)$ is a yes-instance and let $F$ be a solution of $(G,k,\ell)$.
 We claim that $F$ contains the pair $f$ of one of the produced packed instances.
 This finishes the proof: $F'\coloneqq F\setminus\{f\}$ satisfies that $G'\triangle F'=G\triangle F$ is a cluster graph, $|F'|=|F|-1\leq k-1$, and $F'\cap E(G')=(F\cap E(G))\setminus\{f\}$, which has size $|F\cap E(G)|-1\leq\ell-1=\ell'$ if $f\in E(G)$ and size $|F\cap E(G)|\leq\ell=\ell'$ otherwise; so the packed instance for~$f$ is a yes-instance.
 It remains to prove the claim.
 The graph $G\triangle F$ is a cluster graph and thus contains no induced $P_3$, so $\{u,w\}\in E(G\triangle F)$ or $\{u,v\}\notin E(G\triangle F)$ or $\{v,w\}\notin E(G\triangle F)$.
 In the first case $\{u,w\}\notin E(G^{\mathcal P})\supseteq E(G)$ gives $\{u,w\}\in F$, the pair of \cref{item:rule-add}.
 In the remaining cases let $e\in\{\{u,v\},\{v,w\}\}$ with $e\notin E(G\triangle F)$.
 If $e\in E(G)$, then $e\in F$, the pair of \cref{item:rule-delete} for $e$.
 If $e\notin E(G)$, then $e\in \miss(\mathcal P)$ and $e\notin F$; since $Q_e$ is an induced $P_3$ in $G$ we have $F\cap\mods(Q_e)\neq\emptyset$, so $F$ contains one of the two edges of $Q_e$, and these are the pairs of \cref{item:rule-added-pair} for $e$.
\end{proof}

From the safety of \cref{brule:destroy-p3} and the fact that it is a 5-way branching that decreases $\mu + \ell$ by at least one in each case we would now straightforwardly get a $O(5^{\mu+\ell}\cdot(n+m+\mu))$-time algorithm.
With the following observation we can reduce the exponential part to $4^{\mu + \ell}$:
If $u, v, w$ induce a $P_3$ in which at most one of $\{u, v\}$ and $\{v, w\}$ is not in $E(G)$, then the branching factor is 4 instead.
Such a $P_3$ always exists:

\begin{lemma}\label{lem:p3-with-original-edge}
 Let $\mathcal P$ be a modification-disjoint collection of induced $P_3$'s of a
 graph $G$ and suppose that $G^{\mathcal P}$ is not a cluster graph.  Then
 $G^{\mathcal P}$ contains three vertices that induce a $P_3$ in which at least
 one of the two edges is an edge of $G$.
\end{lemma}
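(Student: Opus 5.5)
Take an arbitrary induced $P_3$ of $G^{\mathcal P}$ with vertices $u,v,w$, edges $\{u,v\}$ and $\{v,w\}$, and non-edge $\{u,w\}$. If one of its two edges lies in $E(G)$ we are done. So assume both edges lie in $\miss(\mathcal P)$, and then exhibit a different induced $P_3$ of $G^{\mathcal P}$ that uses an original edge.

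Let $Q_1,Q_2\in\mathcal P$ be the members with $\miss(Q_1)=\{u,v\}$ and $\miss(Q_2)=\{v,w\}$. Let $x$ be the middle vertex of $Q_1$, so that $\{u,x\},\{x,v\}\in E(G)$. Both edges lie in $E(G^{\mathcal P})$, and $x\notin\{u,v,w\}$. We check $x\ne w$: otherwise $\{v,w\}=\{v,x\}$ would be an edge of $G$, but it is a missing pair, hence not in $E(G)$.

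Now look at the pair $\{x,w\}$ in $G^{\mathcal P}$ and split into two cases.
\begin{itemize}
\item If $\{x,w\}\notin E(G^{\mathcal P})$, then $x,v,w$ induce a $P_3$ in $G^{\mathcal P}$: the pairs $\{x,v\}$ and $\{v,w\}$ are edges and $\{x,w\}$ is not. Its edge $\{x,v\}$ belongs to $E(G)$, so we are done.
\item If $\{x,w\}\in E(G^{\mathcal P})$, then $u,x,w$ induce a $P_3$ in $G^{\mathcal P}$: the pairs $\{u,x\}$ and $\{x,w\}$ are edges and $\{u,w\}$ is not. Its edge $\{u,x\}$ belongs to $E(G)$, so we are done.
\end{itemize}

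The main obstacle is to make sure we need no further case and that the new triples really induce $P_3$'s. Distinctness of the vertices is the only point to check, and it follows from $x\ne w$ shown above. The non-adjacency needed in the first case is exactly the case hypothesis. In the second case the non-adjacency of $\{u,w\}$ is inherited from the original $P_3$. Modification disjointness of $\mathcal P$ is used only to guarantee that each missing pair has a unique witness; in particular, the argument never needs $Q_2$. The proof is therefore a direct two-case argument with no recursion.
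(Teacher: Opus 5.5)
Your proof is correct and is the paper's argument mirrored: the paper uses the middle vertex $y$ of $Q_2$ and cases on whether $\{u,y\}\in E(G^{\mathcal P})$, obtaining $u,v,y$ or $u,y,w$, whereas you use the middle vertex $x$ of $Q_1$ and case on $\{x,w\}$. Your version is a bit leaner because it never needs the two middle vertices to be distinct, which is the only place the paper invokes modification disjointness; so your argument does not need modification disjointness at all, not even for the uniqueness of $Q_1$.
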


\begin{proof}
 Let $u$, $v$, $w$ induce a $P_3$ in $G^{\mathcal P}$ with edges $\{u,v\}$ and
 $\{v,w\}$.  If one of these two pairs lies in $E(G)$, then we are done, so
 assume that both lie in $\miss(\mathcal P)$.  Let $Q_1\in\mathcal P$ be the member
 with $\miss(Q_1)=\{u,v\}$ and let $x$ be its third vertex, and let
 $Q_2\in\mathcal P$ be the member with $\miss(Q_2)=\{v,w\}$ and let $y$ be its
 third vertex.  Then $\{u,x\},\{x,v\},\{v,y\},\{y,w\}\in E(G)$.

 The vertices $u$, $v$, $w$, $x$, $y$ are pairwise distinct.  Indeed,
 $x\notin\{u,v\}$ and $y\notin\{v,w\}$ because $Q_1$ and $Q_2$ are $P_3$'s;
 $x\neq w$ because $\{x,v\}\in E(G)$ whereas $\{w,v\}\notin E(G)$; $y\neq u$ by
 the symmetric argument; and $x\neq y$ because otherwise $\{x,v\}\in\mods(Q_1)\cap\mods(Q_2)$, contradicting modification disjointness.

 For the required $P_3$, recall that $\{u,w\}\notin E(G^{\mathcal P})$.
 If $\{u,y\}\notin E(G^{\mathcal P})$, then $u$, $v$, $y$ induce a $P_3$ with $\{v,y\}\in E(G)$.
 If $\{u,y\}\in E(G^{\mathcal P})$, then $u$, $y$, $w$ induce a $P_3$ with $\{y,w\}\in E(G)$.
\end{proof}

Combining \cref{lem:p3-with-original-edge,lem:brule-safe,lem:brule-well-formed} now gives \cref{thm:fpt-mu-ell}.

\toappendix{
  \sv{\fptMuEll*}
\begin{proof}
  Consider the recursive algorithm that, given a packed instance $(G,k,\ell,\mathcal P)$, proceeds as follows.
  If $k<|\mathcal P|$ or $\ell<0$, then it returns ``no''.
  Otherwise, if $G^{\mathcal P}$ is a cluster graph, then it returns ``yes''.
  Otherwise it finds a $P_3$ with at least one edge in $G$ as specified in \cref{lem:p3-with-original-edge}.
  It applies \cref{brule:destroy-p3} and returns ``yes'' if and only if it returns ``yes'' on at least one of the produced packed instances.

  \proofparagraph{Correctness} We proceed by induction on $(k-|\mathcal P|)+\ell$.
  If the algorithm returns ``no'' in the first step, then either $k<|\mathcal P|$, where ``no'' clearly is correct, or $\ell<0$, in which case no $F$ satisfies $|F\cap E(G)|\leq\ell$.
  If the algorithm returns ``yes'' in the second step, then $F\coloneqq \miss(\mathcal P)$ satisfies $G\triangle F=G^{\mathcal P}$, which is a cluster graph, as well as $|F|=|\mathcal P|\leq k$ because the first step did not apply, and $F\cap E(G)=\emptyset\leq\ell$; so $(G,k,\ell)$ is a yes-instance.
  Otherwise the algorithm applies \cref{brule:destroy-p3}.
  By \cref{lem:brule-well-formed} the produced tuples are packed instances with a smaller value of $(k-|\mathcal P|)+\ell$, so the induction hypothesis applies to them and the recursive calls answer them correctly; by \cref{lem:brule-safe} the answer for $(G,k,\ell,\mathcal P)$ agrees with the disjunction of these answers, which is what the algorithm returns.

 \proofparagraph{Running time}
 By \cref{lem:brule-well-formed} each application of \cref{brule:destroy-p3}
 produces decreases $(k-|\mathcal P|)+\ell$ by 
 at least one in each case and by choosing the $P_3$ according to \cref{lem:p3-with-original-edge} there are at most four recursive calls.
 The algorithm returns without recursing as soon as $k-|\mathcal P|<0$ or $\ell<0$.
 Hence the recursion tree has depth at most
 $\mu+\ell+1$ and $O(4^{\mu+\ell})$ nodes.
 The graph in a node arises from the
 input graph by at most $\mu+\ell$ edits and thus has at most $m+\mu+\ell$
 edges.
 Within a node we build $G^{\mathcal P}$ in $O(n+m+\mu)$ time, and we
 test whether $G^{\mathcal P}$ is a cluster graph and, if not, find an induced
 $P_3$ within the same time bound: compute the connected components, find a
 component $C$ containing a vertex $u$ of degree less than $|C|-1$, which exists
 if and only if some component is not a clique, and take a vertex $w\in C$ at
 distance two from $u$ together with a common neighbor $v$ of $u$ and $w$.
 Once we have found a $P_3$ we perform the case-distinction in \cref{lem:p3-with-original-edge} which can be done in $O(n + m + \mu)$ time and thus we obtain a $P_3$ satisfying \cref{lem:p3-with-original-edge} in the same time.
 This gives the claimed running time.
\end{proof}
}

\subsection{Generalization to\boldmath~$H$-free Editing}

Let $h \ge 3$ and $H=K_h - e$, that is, the graph on $h$ vertices with $\binom{h}{2}-1$ edges.
Given a modification-disjoint collection~$\mathcal H$ of induced copies of~$H$ in the input graph, with missing pairs and modification sets as defined in \cref{sec:preliminaries}, let $\mu\coloneqq k - |\mathcal{H}|$.
(Note that we overload $\mu$ here; in the context of \CE\ it instead means the excess over a packing of $P_3$s and it will always be clear what is meant from the context.)

\begin{restatable}[\appmark]{theorem}{hfreeFptMuEll}\label{thm:hfree-fpt-mu-ell}
 Let $h\geq3$ and $H=K_h-e$.
 An instance of \HFEwFD{} given together with a modification-disjoint collection~$\mathcal H$ of induced copies of~$H$ in the input graph can be decided in $O(h^{4(\mu+\ell)}\cdot h^2n^{h})$ time, where $\mu=k-|\mathcal H|$.
\end{restatable}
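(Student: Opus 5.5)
We lift the branching algorithm behind \cref{thm:fpt-mu-ell} to $H=K_h-e$. Define $G^{\mathcal H}\coloneqq(V(G),E(G)\cup\miss(\mathcal H))$. Since $\mathcal H$ is modification disjoint, the missing pairs are distinct, so adding them costs exactly $|\mathcal H|$ additions and no deletions. This is again the default solution. The recursive algorithm is as follows. If $k<|\mathcal H|$ or $\ell<0$, it rejects. If $G^{\mathcal H}$ is $H$-free, it accepts with $F=\miss(\mathcal H)$. Otherwise it takes an induced copy $R$ of $H$ in $G^{\mathcal H}$, found by brute force over all $h$-subsets in $O(h^2n^h)$ time, and branches on how a solution destroys~$R$.

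The branching rule generalizes \cref{brule:destroy-p3}. A solution $F$ must toggle some pair of $\mods(R)$, that is, either add $\miss(R)$ or keep some edge $e$ of $G^{\mathcal H}[V(R)]$ out of the final graph. The branches are:
\begin{itemize}
\item add $\miss(R)$, which is not in $E(G)$; this gives $(G+\miss(R),k-1,\ell,\mathcal H)$;
\item for each edge $e\in E(G)$ of $R$, delete it; this gives $(G-e,k-1,\ell-1,\mathcal H-e)$;
\item for each edge $e$ of $R$ with $e\in\miss(\mathcal H)$, let $Q_e$ be the unique member with $\miss(Q_e)=e$; since $e\notin F$, the solution must delete one of the $\binom h2-1$ edges $f$ of $Q_e$, giving $(G-f,k-1,\ell-1,\mathcal H-f)$.
\end{itemize}

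Well-formedness follows \cref{lem:brule-well-formed} verbatim. In the addition branch, $\miss(R)\notin E(G^{\mathcal H})\supseteq\mods(Q)$ for all $Q\in\mathcal H$, so every member of $\mathcal H$ stays an induced copy of $H$. In each deletion branch, $\ell$ drops by one and $|\mathcal H|$ drops by at most one. Hence $(k-|\mathcal H|)+\ell$ decreases by at least one in every branch. Safety is word for word the proof of \cref{lem:brule-safe}: $G'$ differs from $G$ in a single pair $f$, and the claim that some solution contains one of the branched pairs uses only that $G\triangle F$ has no induced $R$ and that $Q_e$ is an induced copy of $H$ in~$G$.

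The main point to check is the running time. The number of branches is at most
\[
1+\Bigl(\tbinom h2-1\Bigr)\Bigl(\tbinom h2-1\Bigr)\le h^4,
\]
namely one addition branch plus, for each of the $\binom h2-1$ edges of $R$, either one direct deletion or $\binom h2-1$ deletions inside $Q_e$. Since we are allowed $h^4$ branches per step, we need no analogue of \cref{lem:p3-with-original-edge}. The recursion depth is at most $\mu+\ell+1$, which gives $O(h^{4(\mu+\ell)})$ nodes. Each node builds $G^{\mathcal H}$ and searches for a copy of $H$ by checking all $\binom nh$ vertex sets with $O(h^2)$ adjacency tests each, for $O(h^2n^h)$ time per node. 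Correctness then follows by induction on $(k-|\mathcal H|)+\ell$, exactly as in the proof of \cref{thm:fpt-mu-ell}.
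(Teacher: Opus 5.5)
Your proposal is correct and follows the paper's proof essentially step by step. It uses the same default graph $G^{\mathcal H}$, the same three kinds of branches with $\mathcal H-f$ as the new collection, the same count $1+(\binom h2-1)^2\le h^4$ (so no analogue of \cref{lem:p3-with-original-edge} is needed), and the same $O(h^2n^h)$ brute-force search per node. One small wording slip: a solution need not toggle a pair of $\mods(R)$ when an edge of $R$ lies in $\miss(\mathcal H)$, since it may simply not add that pair. Your ``that is'' reformulation (add $\miss(R)$ or keep some edge of $R$ out of $G\triangle F$) is the correct statement, and it is what your branching actually uses.
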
%
\looseness=-1 The algorithm adapts the one of \cref{thm:fpt-mu-ell}.
It takes adding the missing pairs of all members of~$\mathcal H$ as the default and branches on an induced copy~$R$ of~$H$ in the resulting graph: it either adds the missing pair of~$R$, which uses up excess, or deletes an edge of~$R$, or, if that edge is the missing pair of a member of~$\mathcal H$, deletes one of the edges of that member, which uses up a deletion.
\toappendix{
  \sv{\hfreeFptMuEll*}
\begin{proof}[Proof sketch]
 We adapt the algorithm of \cref{thm:fpt-mu-ell}.
 Every solution has to toggle a pair of each of the pairwise disjoint sets $\mods(Q)$, $Q\in\mathcal H$, and hence has at least $|\mathcal H|$ edits; adding all missing pairs, which yields $G^{\mathcal H}\coloneqq(V(G),E(G)\cup\{\miss(Q)\mid Q\in\mathcal H\})$, attains this bound without any deletion.
 Thus we answer ``no'' if $\mu<0$ or $\ell<0$, and ``yes'' if $G^{\mathcal H}$ is $H$-free.

 Otherwise, we find an induced copy~$R$ of~$H$ in~$G^{\mathcal H}$ in $O(h^2n^h)$ time, let $\{a,b\}$ be its missing pair, and branch on how a solution~$F$ destroys~$R$ as follows.
 Since $G\triangle F$ is $H$-free, the vertices of~$R$ do not induce the same copy of~$H$ in it, so $F$ adds $\{a,b\}$ or some edge~$e$ of~$R$ is not an edge of $G\triangle F$.
 Each of the following branches records one edit~$f$, decreases~$k$ by one, and keeps the collection $\mathcal H-f\coloneqq\{Q\in\mathcal H\mid f\notin\mods(Q)\}$:
 (i)~add $f=\{a,b\}$; since $\{a,b\}\notin E(G^{\mathcal H})$ and $\mods(Q)\subseteq E(G^{\mathcal H})$ for every $Q\in\mathcal H$, we have $\mathcal H-f=\mathcal H$, so this edit decreases~$\mu$;
 (ii)~for each edge~$e$ of~$R$ with $e\in E(G)$, delete $f=e$;
 (iii)~for each edge~$e$ of~$R$ with $e\notin E(G)$, we have $e=\miss(Q_e)$ for a unique $Q_e\in\mathcal H$; a solution that does not add~$e$ still has to destroy the induced copy~$Q_e$ of~$H$ in~$G$, and all pairs in $\mods(Q_e)\setminus\{e\}$ are edges of~$G$, so for each of these $\binom{h}{2}-1$ edges~$f$ we delete~$f$.
 In (ii) and (iii), $\ell$ decreases by one and $|\mathcal H|$ by at most one, by modification disjointness, so $\mu+\ell$ decreases by at least one in each of the at most $1+(\binom{h}{2}-1)^2\leq h^4$ branches.
 As in \cref{lem:brule-well-formed,lem:brule-safe}, the members of $\mathcal H-f$ remain induced copies of~$H$ after toggling~$f$; a solution~$F'$ of a produced instance yields the solution $F'\triangle\{f\}$ of the original one; and, conversely, the case distinction above shows that every solution~$F$ contains the pair~$f$ of some branch, and $F\setminus\{f\}$ solves the corresponding instance.
 This yields the claimed search-tree size and running time; for $h=3$, this is the five-way branching of \cref{brule:destroy-p3}, before its improvement via \cref{lem:p3-with-original-edge}.
\end{proof}
}

\subsection{\text{CEwFD} with Modification-disjoint Star Packings}
\looseness=-1
We consider a parameterization above a guarantee given by a packing of other induced subgraphs.
For a star $S$ on at least $3$ vertices, let $\lb(S) \coloneqq |V(S)|-2$.
On the one hand, note that any star $S$ can be turned into a cluster graph by modifying $\lb(S)$ edges, namely by deleting this many edges.
On the other hand, any edge modification of $S$ to a cluster graph, requires at least $\lb(S)$ modifications.

Let $\mathcal{S}$ be a collection of modification-disjoint induced stars in $G$, each with at least 3 vertices, and let $\sigma\coloneqq k - \sum_{S \in \mathcal{S}}(\lb(S))$.

\begin{theorem}\label{thm:fpt-sigma-ell}
\textsc{CEwFD} is FPT with respect to the combination of $\ell$ and $\sigma$.
\end{theorem}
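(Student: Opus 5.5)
The plan is to reduce to \cref{thm:fpt-mu-ell}: every star with at least four vertices is costly in terms of $\ell+\sigma$, so we guess how a solution treats these few bounded-size stars, and the remaining packing then consists only of $P_3$'s. Throughout, a star $S$ with $s\ge2$ leaves has $|V(S)|=s+1$ and $\lb(S)=s-1$.

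\emph{Step 1: a local lower bound.} Fix a solution $F$ and a star $S\in\mathcal S$ with center $c$ and $s$ leaves. Let $F_S\coloneqq F\cap\mods(S)$. The graph $(G\triangle F)[V(S)]=G[V(S)]\triangle F_S$ is a cluster graph, since induced subgraphs of cluster graphs are cluster graphs. Let $j$ be the number of leaves in the cluster of $c$. Then $F_S$ deletes the $s-j$ center edges to the other leaves and adds the $\binom{j}{2}$ pairs among the leaves in $c$'s cluster. Hence $|F_S\cap E(G)|\ge s-j$ and $|F_S|\ge s-j+\binom j2$, so the local excess $|F_S|-\lb(S)$ is at least $\binom{j}{2}-j+1=\frac{(j-1)(j-2)}{2}$. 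Checking $j=0,1,2$ and $j\ge3$ separately gives
\[
 |F_S\cap E(G)|+\bigl(|F_S|-\lb(S)\bigr)\ \ge\ s-2 .
\]
Because the sets $\mods(S)$ are pairwise disjoint, we have $\sum_S |F_S|\le |F|\le k$, so the local excesses sum to at most $\sigma$, and the local deletion counts sum to at most $\ell$. Therefore $\sum_{S\in\mathcal S}(s_S-2)\le\ell+\sigma$, where $s_S$ is the number of leaves of~$S$. Consequently, there are at most $\ell+\sigma$ \emph{big} stars, that is, stars with $s\ge3$, and each of them has at most $\ell+\sigma+3$ vertices. If this bound fails for the given $\mathcal S$, we answer ``no''.

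\emph{Step 2: guessing on big stars.} For each big star $S$, guess the cluster partition of $(G\triangle F)[V(S)]$. This determines $F_S$ uniquely, since $F_S$ is exactly the set of pairs in $\mods(S)$ whose adjacency in $G$ disagrees with the partition. There are at most $(\ell+\sigma+3)^{\ell+\sigma+3}$ choices per star, so $f(\ell+\sigma)$ choices overall. For each guess, toggle all the guessed pairs to obtain $G'$. Set $k'=k-\sum_{S\text{ big}}|F_S|$ and $\ell'=\ell-\sum_{S\text{ big}}|F_S\cap E(G)|$, and discard branches where $\ell'<0$ or where a big star's guessed deletions plus local excess violate the remaining budgets. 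The small stars are $P_3$'s whose modification sets avoid all toggled pairs, so they remain a modification-disjoint collection $\mathcal P$ of induced $P_3$'s of $G'$. Moreover,
\[
 \mu' = k'-|\mathcal P| = \sigma-\sum_{S\text{ big}}\bigl(|F_S|-\lb(S)\bigr)\le\sigma .
\]
Now run the algorithm of \cref{thm:fpt-mu-ell} on $(G',k',\ell',\mathcal P)$, in $O(4^{\sigma+\ell})$ times polynomial time.

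\emph{Step 3: correctness.} For completeness, the guess matching $F$ gives $F\setminus\bigcup_S F_S$ as a solution for $(G',k',\ell')$: it has the right size, and its deletions in $G'$ are deletions in $G$ because those pairs were not toggled. For soundness, take a solution $F'$ for $G'$ and set $F=F'\triangle\bigcup_S F_S$. Then $G\triangle F=G'\triangle F'$ and $|F|\le|F'|+\sum_S|F_S|\le k$. A pair of $F\cap E(G)$ is either a guessed deletion or a pair of $F'$ that is an edge of $G'$ not toggled in the guess. A pair in both $F'$ and the guess cancels out and is not counted. Hence $|F\cap E(G)|\le\ell'+\sum_S|F_S\cap E(G)|=\ell$.

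I expect the main obstacle to be the local inequality of Step~1, specifically the fact that it must combine deletions and excess rather than bound either alone. A star can be made a cluster graph at cost $\lb(S)$ only by deletions, so neither $\ell$ nor $\sigma$ alone bounds the star sizes. I would verify the case analysis over $j$ carefully, in particular the isolated-center case $j=0$, and confirm that leaves placed in other clusters only add edits.
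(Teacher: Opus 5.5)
Your proof is correct and follows the same plan as the paper. You bound the stars with at least four vertices in terms of $\ell+\sigma$, guess how a solution acts on them, and pass the remaining all-$P_3$ packing, whose excess is at most $\sigma$, to \cref{thm:fpt-mu-ell}. The two routes differ in the bounding step and in what is guessed.

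The paper bounds the big stars with two separate observations. \Cref{obs:star-size} bounds each star by $\ell+2\sqrt{\sigma}+3$ vertices, and \cref{obs:few-large-stars} bounds the number of big stars by $\ell+\sigma$. Together these give $O((\ell+\sigma)^2)$ vertices in big stars, hence $2^{O((\ell+\sigma)^2\log(\ell+\sigma))}$ guesses. Your single local inequality $|F_S\cap E(G)|+(|F_S|-\lb(S))\ge s-2$ subsumes both. Summed over the packing, it shows that all big stars together have at most $4(\ell+\sigma)$ vertices, since $s_S+1\le 4(s_S-2)$ for $s_S\ge 3$. So the number of guesses is only $2^{O((\ell+\sigma)\log(\ell+\sigma))}$, which is stronger than the per-star count you stated.

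You also guess a partition per star, whereas the paper guesses a cluster graph on the union of all big-star vertices. Your choice keeps every toggled pair inside $\bigcup_S\mods(S)$. Hence the packed $P_3$'s stay induced and modification-disjoint in $G'$ automatically, and $\mu'=\sigma-\sum_S(|F_S|-\lb(S))$ holds exactly. In the paper's version, a guessed pair joining two different big stars may lie in $\mods(Q)$ for some packed $P_3$~$Q$. One must then drop such $Q$ and charge the toggle to it, a step the paper leaves implicit. In return, the paper's version is shorter, and its two observations are reused unchanged for the star-packing kernel.
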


We first bound the size and the number of large stars in~$\mathcal S$.

\begin{observation}\label{obs:star-size}
Let $(G,k,\ell,\mathcal S)$ be a packed instance. If there is a star $S\in\mathcal S$ with more than $s\coloneqq\ell+2\sqrt{\sigma}+3$ vertices, then there is no solution. 
\end{observation}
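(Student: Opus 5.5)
Suppose a solution $F$ exists and fix a star $S\in\mathcal S$ with $t\coloneqq|V(S)|$ vertices, center $c$, and $t-1$ leaves. The idea is to compare the cost $F$ must spend inside $\mods(S)$ with the cost $\lb(S)=t-2$ that the packing already charges. By modification disjointness, the sets $\mods(S')$, $S'\in\mathcal S$, are pairwise disjoint. Every $S'$ needs at least $\lb(S')$ edits of $F$ inside $\mods(S')$. Hence $F$ can spend at most $\lb(S)+\sigma$ edits inside $\mods(S)$, that is, $|F\cap\mods(S)|\le t-2+\sigma$.

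Next, look at how $G\triangle F$ restricted to $V(S)$ splits into clusters. Since $F$ has at most $\ell$ deletions, at most $\ell$ of the star edges are deleted. So the cluster containing $c$ contains at least $t-1-\ell$ leaves. Call this set of leaves $L$, with $|L|=a\ge t-1-\ell$. The leaves are pairwise nonadjacent in $S$, so $F$ must add all $\binom{a}{2}$ pairs inside $L$, and all of these lie in $\mods(S)$. The remaining $t-1-a$ leaves are separated from $c$, which costs $t-1-a$ deletions in $\mods(S)$. Therefore
\[
|F\cap\mods(S)|\ \ge\ \binom{a}{2}+(t-1-a).
\]
Combined with the upper bound, this gives $\binom{a}{2}-a+1\le\sigma$, i.e. $(a-1)(a-2)\le 2\sigma$. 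In particular $(a-2)^2\le 2\sigma$, so $a\le 2+\sqrt{2\sigma}$. With $a\ge t-1-\ell$, we obtain $t\le \ell+3+\sqrt{2\sigma}\le \ell+2\sqrt\sigma+3=s$, contradicting $t>s$.

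The only delicate point is the accounting. We must make sure that the additions inside $L$ and the deletions of leaf edges are counted only within $\mods(S)$, so that the disjointness argument against the budget $k=\sigma+\sum\lb(S')$ is valid. We also need to handle the degenerate case $a\le 1$. That case is harmless: there the bound $t\le \ell+2$ already gives $t\le s$.
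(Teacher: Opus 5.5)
Your proof is correct and uses the same counting argument as the paper. The center's cluster keeps at least $t-1-\ell$ pairwise nonadjacent leaves, completing them costs quadratically many additions inside $\mods(S)$, and the packing only pays for $\lb(S)=t-2$ of these edits plus the slack $\sigma$.

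Your version is more careful than the paper's sketch. You derive the per-star budget $|F\cap\mods(S)|\le\lb(S)+\sigma$ explicitly from modification disjointness. You also solve $(a-1)(a-2)\le 2\sigma$ exactly, which avoids the paper's implicit rounding to ``at least $2\sqrt{\sigma}+4$ vertices''.
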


\begin{proof}
    Indeed, even if we delete $\ell$ edges from~$S$, there are still at least $2\sqrt{\sigma} + 4$ vertices left, which end up in the same connected component.
 Hence, the corresponding clique needs at least $\binom{2\sqrt{\sigma}+4}2 = \frac12(2\sqrt{\sigma}+4)(2\sqrt{\sigma}+3)=2\sigma + 7\sqrt{\sigma}+6$ edges.
 But at most $2\sqrt{\sigma}+3$ of them are present and at most another $2\sqrt{\sigma}+2$ was counted for in the lower bound $\lb(S)$.
 So, we would have to add at least $2\sigma+1$ extra edges, which exceeds $\sigma$.
\end{proof}

\begin{observation}\label{obs:few-large-stars}
    Let $(G,k,\ell,\mathcal S)$ be a packed instance. If there are more than $\ell+\sigma$ stars of size at least 4 in $\mathcal S$, then there is no solution. 
\end{observation}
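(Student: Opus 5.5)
Fix an arbitrary solution $F$ with $|F|\le k$ and $|F\cap E(G)|\le\ell$. I will charge its edits to the stars of $\mathcal S$ and show that every star with at least $4$ vertices receives at least one unit of \emph{extra} cost beyond $\lb(S)$ from an edit that is a deletion or that contributes to the excess $\sigma$. Since the stars are modification disjoint, the sets $F\cap\mods(S)$ for $S\in\mathcal S$ are pairwise disjoint. Every $S$ is induced, so $G\triangle F$ restricted to $V(S)$ is a cluster graph obtained from $S$ by toggling exactly the pairs in $F\cap\mods(S)$. Hence $|F\cap\mods(S)|\ge\lb(S)$ for every $S$, and $\sum_S\bigl(|F\cap\mods(S)|-\lb(S)\bigr)\le k-\sum_S\lb(S)=\sigma$.

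The key step is a local claim: for a star $S$ on $t\ge4$ vertices, every edit set within $\mods(S)$ turning $S$ into a cluster graph either has size at least $\lb(S)+1=t-1$ or contains at least one deletion. The second alternative is actually automatic when no edge is deleted. If no edge of $S$ is deleted, then $V(S)$ stays connected and must become a clique, which costs $\binom{t-1}{2}$ additions. Since $t\ge4$, this is at least $t-1>\lb(S)$. Otherwise some deletion happens inside $\mods(S)$. So every large star is charged either a deletion of $F$ lying in $\mods(S)$ or a unit of excess $|F\cap\mods(S)|-\lb(S)\ge1$.

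Now sum over the stars of size at least $4$. Let $a$ be the number of such stars containing a deletion of $F$ in their modification set. Disjointness of the modification sets gives $a\le|F\cap E(G)|\le\ell$. Every other large star contributes at least $1$ to the excess sum, so there are at most $\sigma$ of them. Hence the number of large stars is at most $\ell+\sigma$, and an instance with more than $\ell+\sigma$ such stars has no solution.

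The main point to get right is the claim itself. Wrongly, one might think a large star can be solved with exactly $\lb(S)$ edits only by deletions, and then try to charge all $t-2$ deletions. That is unnecessary: a single deletion suffices for the charge, and the no-deletion case is handled by the clique-completion count $\binom{t-1}{2}\ge t-1$ for $t\ge4$. The only other subtlety is noting that edits outside $\mods(S)$ do not affect $G[V(S)]$, which is exactly the definition of the modification set.
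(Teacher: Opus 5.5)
Your proof is correct and is essentially the paper's argument: a star on $s\ge4$ vertices needs either a deletion inside its modification set or $\binom{s}{2}-(s-1)\ge s-1>\lb(S)$ additions, and modification-disjointness lets you charge the former to $\ell$ and the latter to $\sigma$ — you just spell out this accounting more explicitly than the paper does. One small wording slip: in your second paragraph it is the \emph{first} alternative (size at least $t-1$) that is automatic when no edge is deleted, not the second.
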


\begin{proof}
    Each star~$S$ with $s \ge 4$ vertices requires at least one edge deletion or at least 
    $\binom{s}{2} -(s-1) = \frac{s}{2}(s-1) - (s-1) = \frac{s-2}{2}(s-1)$
    edge additions, which is at least $s-1 > \lb(S)$ since $s \ge 4$.
 Thus, if there are more than $\ell +\sigma$ such stars, then any solution requires more than~$k$ edge modifications or more than $\ell$ deletions.
\end{proof}

\begin{proof}[Proof of \cref{thm:fpt-sigma-ell}]
 By \cref{obs:star-size,obs:few-large-stars},
 we know that each star in~$\mathcal S$ contains at most
 $\ell+2\sqrt{\sigma}+3$ vertices and there are at most $\ell+\sigma$ stars with more than 3 vertices (non-$P_3$s).
 Therefore, the total number of vertices contained in non-$P_3$ stars in~$\mathcal S$ is in $O((\ell+\sigma)^2)$ and we can guess the cluster graph on these vertices (the number of possible partitions of these vertices is clearly upper-bounded by some function of~$\ell+\sigma$), remove the non-$P_3$ stars from~$\mathcal S$, and compute the remaining budget $k'\le k$ and $\ell'\le \ell$ of edge modifications. In the remaining packed instance $(G',k',\ell',\mathcal S')$ the packing~$\mathcal S'$ contains only~$P_3$s and we have an above-$P_3$ parameter $\mu\coloneqq k'-|\mathcal S'|\le k - \sum_{S\in\mathcal S\setminus\mathcal S'}\lb(S)-|\mathcal S'|=\sigma$. We can therefore solve this instance in FPT-time wrt.\ $\ell+\sigma$ using the algorithm from \cref{thm:fpt-mu-ell} for the parameter~$\ell+\mu$.
\end{proof}


\section{Kernelization wrt. Deletions \boldmath$\ell$ Combined with Above-guarantees}
\toappendix{\sv{\section{Missing Details for the Kernelization Results}\label{app:kernel}}}

\subsection{Cluster Editing with Few Deletions and Above Packing of \boldmath$P_3$s}

\begin{theorem}\label{thm:kernel-mu-ell}
\textsc{CEwFD} has a kernel with $O((\ell+\mu)^2)$ vertices and $O((\ell+\mu)^3)$ edges. 
\end{theorem}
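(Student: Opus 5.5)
The plan is to work with the packed instance $(G,k,\ell,\mathcal P)$, where $\mathcal P$ is inclusion-maximal (we may assume this, as argued before \cref{thm:fpt-mu-ell}), and to look at the graph $G^{\mathcal P}$. Since every solution must spend one edit in each $\mods(Q)$, at most $\mu$ edits fall outside $\bigcup_{Q\in\mathcal P}\mods(Q)$. At most $\ell$ of the $P_3$s in $\mathcal P$ are destroyed by a deletion; every other member is destroyed by adding its missing pair. Hence a solution differs from the default edit set $\miss(\mathcal P)$ in at most $\ell$ packed pairs plus $\mu$ unpacked pairs. So $G\triangle F$ differs from $G^{\mathcal P}$ in at most $2\ell+\mu$ pairs. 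I would therefore call a vertex \emph{affected} if it lies in an induced $P_3$ of $G^{\mathcal P}$, and show that the affected part is small and the rest can be compressed.

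The first step: the vertices incident to pairs in $(G\triangle F)\triangle G^{\mathcal P}$ number at most $2(2\ell+\mu)$. Every induced $P_3$ of $G^{\mathcal P}$ must contain such a vertex. So if $G^{\mathcal P}$ contains more than $2\ell+\mu$ modification-disjoint induced $P_3$s, we answer no. Here I would use a greedy packing in $G^{\mathcal P}$ rather than a vertex-hitting bound, since each such $P_3$ needs its own toggled pair.

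The second step mimics the classical $O(k^2)$ kernel of Gramm et al. Components of $G^{\mathcal P}$ that are cliques and contain no vertex of a member of $\mathcal P$ are deleted; they are already isolated clusters of $G$. For the remaining structure I would use a ``high-degree'' rule relative to the budget $2\ell+\mu$. If a pair $\{u,v\}$ of $G^{\mathcal P}$ has more than $2\ell+\mu$ common neighbors, it can be fixed as kept. If it has more than $2\ell+\mu$ vertices adjacent to exactly one of $u,v$, it can be fixed as non-adjacent. Translated back to $G$, this means recording the forced status and adjusting $\mathcal P$, $k$, $\ell$.

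After these rules, each nonclique component of $G^{\mathcal P}$ containing affected vertices should be bounded as follows. There are $O(\ell+\mu)$ $P_3$s in a maximal disjoint family. Each vertex of such a $P_3$ has closed neighborhood bounded by $O(\ell+\mu)$ via the common-neighbor argument, which gives $O((\ell+\mu)^2)$ vertices. Vertices of packed $P_3$s that lie in clique components of $G^{\mathcal P}$ number at most $3|\mathcal P|$, but $|\mathcal P|\le k$ is unbounded. I must argue that a clique component of $G^{\mathcal P}$ which is untouched by any solution can be removed: we remove it, decrease $k$ by the number of packed $P_3$s inside it, and keep $\mu$ unchanged. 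Large clique components that are touched are handled by a rule analogous to the critical-clique rule: a clique with more than $2\ell+\mu+1$ twin vertices is never split. The edge bound $O((\ell+\mu)^3)$ follows because each of the $O((\ell+\mu)^2)$ vertices has degree $O(\ell+\mu)$ outside cliques of bounded size.

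The main obstacle is the safety of removing or shrinking clique components of $G^{\mathcal P}$ that contain members of $\mathcal P$. A solution might delete a packed edge there instead of adding the missing pair, and it would pay a deletion to save something elsewhere. I would first try to show by an exchange argument that inside a clique component $C$ of $G^{\mathcal P}$ with no affected vertex, replacing $F$ on $\binom{C}{2}$ and the cut pairs by the default $\miss(\mathcal P)\cap\binom{C}{2}$ never increases edits or deletions. The remaining worry is components containing affected vertices of unbounded size, and there the twin-reduction bound is the step most likely to need refinement.
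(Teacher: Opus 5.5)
\textbf{There is a genuine gap: the size bound for the non-clique components of $G^{\mathcal P}$ is left open, and the tools you propose for it cannot be carried out in \CEwFD{}.} Your starting bound (every solution $F$ gives a cluster editing set $F\triangle\miss(\mathcal P)$ of $G^{\mathcal P}$ with at most $2\ell+\mu$ pairs) matches the paper. Your counting skeleton via a maximal packing in $G^{\mathcal P}$ would also work if a degree bound held.

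The Gramm-style pair rules are valid as statements about every solution. However, the output must be a plain instance $(G',k',\ell')$, which has no forbidden or permanent pairs. The problematic case is a pair $\{u,v\}=\miss(Q)$ with more than $2\ell+\mu$ vertices adjacent to exactly one of $u,v$ in $G^{\mathcal P}$. Every solution must then delete one of the two edges of $Q$, but you cannot tell which, and dropping $Q$ from $\mathcal P$ increases $\mu$.

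This rule is exactly what your degree bound needs. Common neighbors alone bound nothing: a huge clique with one pendant vertex has a $P_3$ center of unbounded degree. To bound $N(u)$ for the center of an induced $P_3$ $(a,u,b)$ of $G^{\mathcal P}$, you need small symmetric neighborhood differences on both edges $\{u,a\}$ and $\{u,b\}$, which may be missing pairs, together with few common neighbors of $a$ and $b$. The critical-clique shrinking is also not specified. So neither the $O((\ell+\mu)^2)$ vertex bound nor the edge bound is established.

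\textbf{How the paper avoids this.} It uses no pair-level rules.
\begin{itemize}
\item It computes a constant-factor approximate cluster editing set $A$ of $G^{\mathcal P}$ and answers no if $|A|>\alpha(2\ell+\mu)$.
\item For every solution $F$, the cluster graphs $G\triangle F$ and $G''=G^{\mathcal P}\triangle A$ then differ in at most $(\alpha+1)(2\ell+\mu)$ pairs.
\item Hence a cluster $C$ of $G''$ with at least $(\alpha+1)(2\ell+\mu)+2$ vertices is a cluster in every solution: splitting it costs at least $|C|-1$ differing pairs, and enlarging it costs at least $|C|$.
\item All edits incident to $C$ are therefore forced, and $C$ is deleted as a whole, with $k$, $\ell$ and $\mathcal P$ updated. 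This needs no markers.
\item Once the clique components of $G^{\mathcal P}$ are removed, every cluster of $G''$ contains an endpoint of $A$. This gives at most $2|A|$ clusters, each of size $O(\ell+\mu)$.
\end{itemize}

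\textbf{The step you call the main obstacle is easy.} A clique component $C$ of $G^{\mathcal P}$ is a component of $G$, because each missing pair is bridged by its packed $P_3$. Its non-edges in $G$ are exactly the missing pairs of the packed $P_3$s inside it. Any solution spends at least one edit in each of their pairwise disjoint modification sets. So restricting a solution to $V\setminus C$ saves at least that many edits and adds no deletions, and no exchange argument is needed.
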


Let $\mathcal{P}$ be the packing associated with $\mu$, graph $G$ be the input graph, and recall that $G^{\mathcal{P}}$ is obtained from $G$ by adding all missing edges for $P_3$s in $\mathcal{P}$.
To prove the theorem, we start with the following crucial lemma
which connects solutions for $G$ to solutions of $G^\mathcal P$.

\begin{lemma}\label{lem:ker_pathpack_small_solution}
 Let $(G,k,\ell,\mathcal P)$ be a packed instance.
 For every solution $F$ for $(G,k,\ell)$, we have $|E(G^{\mathcal P})\triangle(E(G) \triangle F)| = |F \triangle \miss(\mathcal P)| \le 2\ell + \mu$.
 That is, there is an edge-modification set $F'\subseteq \binom{V}{2}$ of size at most $2\ell + \mu$ such that $G^{\mathcal P}\triangle F'$ is a cluster graph.
\end{lemma}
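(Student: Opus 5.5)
First I verify the identity. The edge set of $G^{\mathcal P}$ is $E(G)\triangle\miss(\mathcal P)$, because the missing pairs are non-edges of $G$. Hence $E(G^{\mathcal P})\triangle(E(G)\triangle F)=\miss(\mathcal P)\triangle F$. Setting $F'\coloneqq F\triangle\miss(\mathcal P)$ then gives $G^{\mathcal P}\triangle F'=G\triangle F$, which is a cluster graph. So the second sentence follows from the first, and only the bound $|F\triangle\miss(\mathcal P)|\le 2\ell+\mu$ needs proof.

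I would write
\[
|F\triangle\miss(\mathcal P)|=|F\setminus\miss(\mathcal P)|+|\miss(\mathcal P)\setminus F|
\]
and bound the two terms separately.

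For the second term, call a member $Q\in\mathcal P$ \emph{unfixed} if $\miss(Q)\notin F$. Since $Q$ is an induced $P_3$ of $G$ and $G\triangle F$ is a cluster graph, $F$ must contain some pair of $\mods(Q)$. Because $\miss(Q)\notin F$, that pair is one of the two edges of $Q$, so it is a deletion in $F\cap E(G)$. By modification disjointness, distinct unfixed members receive distinct such deletions. Therefore $|\miss(\mathcal P)\setminus F|\le|F\cap E(G)|\le\ell$.

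For the first term, I use $|F|\le k=|\mathcal P|+\mu$ together with $|F\cap\miss(\mathcal P)|=|\mathcal P|-|\miss(\mathcal P)\setminus F|$. This gives
\[
|F\setminus\miss(\mathcal P)|=|F|-|F\cap\miss(\mathcal P)|\le\mu+|\miss(\mathcal P)\setminus F|\le\mu+\ell.
\]

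Summing the two bounds yields $|F\triangle\miss(\mathcal P)|\le 2\ell+\mu$.

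The only subtle step is the injectivity in the second term: each unfixed member must be charged to its own deletion. This holds because the modification sets are pairwise disjoint, so the charged edges lie in disjoint sets. Everything else is arithmetic.
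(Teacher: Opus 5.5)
Your proof is correct and follows essentially the paper's argument. In both, each member of $\mathcal P$ whose missing pair is not added by $F$ is charged to its own deleted edge of $G$ via modification disjointness, and $|F|\le|\mathcal P|+\mu$ bounds the remaining edits. The paper splits the count into $F\cap E(G)$, $\miss(\mathcal P)\setminus F$, and $(F\setminus E(G))\setminus\miss(\mathcal P)$, bounded by $\ell$, $\ell$, and $\mu$, whereas you bound $F\setminus\miss(\mathcal P)$ as a whole by $\mu+|\miss(\mathcal P)\setminus F|$; this implicitly uses $|\miss(\mathcal P)|=|\mathcal P|$, which holds by modification disjointness.
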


\begin{proof}
 Let $F$ be a solution for $(G,k,\ell)$, that is, $G \triangle F$ is a cluster graph, $|F| \le k$, and $|F \cap E(G)| \le \ell$.
 Let~$F^+\coloneqq F \setminus E(G)$ be the added edges and~$F^- \coloneqq F \cap E(G)$ be the deleted edges of~$F$.
 We let $F' \coloneqq F \triangle \miss(\mathcal P)$.
 For every deleted edge $e\in F^-$, also $F'$ simply deletes $e$ and if~$e$ is in a~$P_3$ of~$\mathcal P$, then~$F'$ also deletes from~$G^\mathcal P$ the corresponding edge in~$\miss(\mathcal P)$ of the $P_3$.
 So, as $|F^-| \le \ell$, $F'$ deletes at most $2\ell$ edges from~$G^\mathcal P$.
 Moreover, $F'$ adds all edges~$e\in (F^+\setminus \miss(\mathcal P))$.

 As for every $P_3$ $Q$ of~$\mathcal P$, we have $\mods(Q) \cap F^- \neq \emptyset$ or $\miss(Q) \in F^+$, we have $|\mathcal P| \le |F^-|+|F^+\cap \miss(\mathcal P)|$.
 Therefore, $|F^+\setminus \miss(\mathcal P)| \le |F| - |\mathcal P| \le k - |\mathcal P| = \mu$.
\end{proof}

If $k<|\mathcal P|$ or $\ell<0$, then we return no (or a trivial no-instance). 
The kernelization proceeds by computing the graph $G^{\mathcal P}$ and then first applying the following reduction rule, which simply removes isolated cliques.

\begin{reductionrule}\label{rr:isolated_clique}
 If $(G,k,\ell,\mathcal P)$ is a packed instance and $C$ is an isolated clique in $G^{\mathcal P}$, then remove $C$ from $G$, decrease $k$ by $|\binom{C}{2} \setminus E(G)|$ and remove from $\mathcal{P}$ every $P_3$ that contains a vertex of $C$.
\end{reductionrule}

\begin{restatable}[\appmark]{lemma}{isolatedCliqueSafe}\label{lem:isolated_clique_safe}
 \Cref{rr:isolated_clique} is safe, that is, the resulting instance is a yes-instance if and only if the original instance is.
 Furthermore, it does not increase the parameter $\mu+\ell$.
\end{restatable}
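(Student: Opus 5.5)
Write $C$ for the isolated clique of $G^{\mathcal P}$ and $(G',k',\ell,\mathcal P')$ for the reduced instance. Here $G'=G-C$, $k'=k-|\binom{C}{2}\setminus E(G)|$, and $\mathcal P'$ consists of the members of $\mathcal P$ avoiding $C$. The proof has three parts: where the members of $\mathcal P$ sit relative to $C$, the two directions of safeness, and the parameter bound.

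\textbf{Location of the packing members.} Every $Q\in\mathcal P$ has $\mods(Q)\subseteq E(G^{\mathcal P})$, so $V(Q)$ lies inside one connected component of $G^{\mathcal P}$. Hence each member of $\mathcal P$ lies entirely inside $C$ or entirely outside it. Let $\mathcal P_C$ be the members inside $C$; then $\mathcal P'=\mathcal P\setminus\mathcal P_C$. The members of $\mathcal P'$ are still induced $P_3$s of $G'$, since that graph is induced. Moreover $C$ has no $G$-edge leaving it, because $E(G)\subseteq E(G^{\mathcal P})$. Finally $|\binom{C}{2}\setminus E(G)|\ge|\mathcal P_C|$, because the missing pairs of $\mathcal P_C$ are distinct non-edges of $G$ inside $C$.

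\textbf{Safeness.} For the backward direction, take a solution $F'$ of $(G',k',\ell)$. Add all of $\binom{C}{2}\setminus E(G)$ to it. Since $C$ has no outgoing edges in $G$, the result is a cluster graph, and it uses at most $k$ edits and no extra deletions.

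For the forward direction, take a solution $F$ of $(G,k,\ell)$ and restrict it to $\binom{V\setminus C}{2}$, giving $F_{\mathrm{out}}$. This restriction turns $G'$ into a cluster graph, since induced subgraphs of cluster graphs are cluster graphs, and it has at most $\ell$ deletions. The main obstacle is the budget: I must show that $F$ spends at least $|\binom{C}{2}\setminus E(G)|$ edits on pairs touching $C$. The approach is a cut-versus-complete argument.

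Let $\mathcal C$ be the partition that $F$ induces on $C$.
\begin{itemize}
\item If $\mathcal C$ has one part, then $F$ must add every non-edge inside $C$, which is exactly the required amount.
\item Otherwise, $F$ deletes every $G$-edge between different parts of $\mathcal C$, and adds every non-edge inside each part.
\end{itemize}
In the second case I compare this to $|\binom{C}{2}\setminus E(G)|$, which counts the non-edges inside parts plus the non-edges between parts. So it suffices to show that $F$ spends at least as much between parts as there are non-edges between parts.

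The key fact is that every $G$-non-edge inside $C$ is the missing pair of some member of $\mathcal P_C$. This holds because $C$ is a clique in $G^{\mathcal P}$, so every pair in $C$ lies in $E(G)\cup\miss(\mathcal P)$. Take such a member $Q$ whose missing pair $\{a,b\}$ goes between two parts. The third vertex of $Q$ lies in some part, so at least one of the two edges of $Q$ also crosses parts and is deleted by $F$. By modification-disjointness these deleted edges are distinct for distinct $Q$. This gives an injection from crossing non-edges to crossing deletions of $F$.

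Therefore $|F\cap\binom{C}{2}|\ge|\binom{C}{2}\setminus E(G)|$, so $|F_{\mathrm{out}}|\le k'$, and $F_{\mathrm{out}}$ solves the reduced instance.

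\textbf{Parameter bound.} The value $\ell$ is unchanged. The new excess is
\[\mu'=k'-|\mathcal P'|=\mu-\bigl(|\tbinom{C}{2}\setminus E(G)|-|\mathcal P_C|\bigr)\le\mu,\]
using the inequality $|\binom{C}{2}\setminus E(G)|\ge|\mathcal P_C|$ from the first part. Hence $\mu'+\ell\le\mu+\ell$.
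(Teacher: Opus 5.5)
Your proof is correct and is essentially the paper's argument. Both rest on two facts: every $G$-non-edge inside $C$ is the missing pair of a member of $\mathcal P_C$, and modification-disjointness lets you charge a distinct edit of $F$ inside $\binom{C}{2}$ to each member of $\mathcal P_C$. Your cut-versus-complete case split only makes explicit which pair of $\mods(Q)$ is hit. The paper skips the split by noting that any solution must hit every $\mods(Q)$, and it records the equality $|\mathcal P_C|=|\binom{C}{2}\setminus E(G)|$, so that $\mu'=\mu$ exactly.
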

\toappendix{
  \sv{\isolatedCliqueSafe*}
\begin{proof}
 Denote the new instance $(G', k', \ell, \mathcal P')$.
 Note that $C$ must form a connected component of $G$, as $G^{\mathcal P}$ contains all edges of $G$ and if two vertices are connected by an edge $\{u,v\} \in \miss{\mathcal{P}}$, then $u$ and $v$ are connected by the corresponding $P_3$ of $\mathcal P$ in $G$. 
 Also each $P_3$ in $\mathcal{P}$ is either fully contained in $C$ or disjoint with $C$, as otherwise there would be an edge between the vertices in $C$ and vertices outside $C$ in $G^{\mathcal P}$.
 Let us denote $\mathcal{P}_C \coloneqq \mathcal{P} \setminus \mathcal{P}'$ the set of those $P_3$'s from $\mathcal{P}$ that are fully contained in~$C$.
 
 If $F'$ is a solution for the new instance, then $F \coloneqq F' \cup (\binom{C}{2} \setminus E(G))$ is a solution to the original instance.
 Indeed, $G'\triangle F'$ is a cluster graph, $G[C] \triangle (\binom{C}{2} \setminus E(G))$ is a clique, $F$ contains only edge deletions already contained in $F'$ and as $|F'|\le k'$, we also have $|F| = |F'|+|\binom{C}{2} \setminus E(G)| \le k' + |\binom{C}{2} \setminus E(G)|=k$.
 
 Now suppose that $F$ is a solution for $(G,k,\ell)$.
 For each $P_3$ $Q$ in $\mathcal{P}$, we know that $F \cap \mods(Q) \neq \emptyset$ and that these sets are disjoint for different $Q$'s.
 Therefore, $|F \cap \{\mods(Q) \mid  Q \in \mathcal{P}_C\}| \ge |\mathcal{P}_C| = |\binom{C}{2} \setminus E(G)|$.
 Let $F' \coloneqq F \cap \binom{V(G) \setminus C}{2}$.
 By the above argument, we have $|F'| \le |F| \setminus |F \cap \{\mods(Q) \mid  Q \in \mathcal{P}_C\}| \le k -|\binom{C}{2} \setminus E(G)| = k'$.
 Also $|F' \cap E(G)| \le |F \cap E(G)| \le \ell$.
 Finally, $G'\triangle F' = (G \triangle F) \setminus C$ is a cluster graph.
 Therefore, $F'$ is a solution for $(G',k', \ell)$. 
 
 As for the parameter, $\ell$ remains unchanged, while $\mu' \coloneqq k' \setminus |\mathcal{P}'|= k - |\binom{C}{2} \setminus E(G)| -(|\mathcal{P}| - |\mathcal{P}_C|) = k - |\mathcal{P}|=\mu$.
\end{proof}
}

Now we use the algorithm of Cao et al.~\cite{CaoCL0NV24} on $G^{\mathcal P}$ to obtain a $1.437$-approximate cluster editing solution~$A$ in polynomial time.
Let $\alpha \coloneqq 1.437$ be the approximation factor.
By \Cref{lem:ker_pathpack_small_solution}, we know that if $(G,k,\ell)$ is a yes-instance, then there is a cluster editing solution for $G^{\mathcal P}$ of size at most $2\ell + \mu$.
Therefore, if $A$ is of size more than $\alpha(2\ell + \mu)$, we can immediately answer no (or return a trivial no-instance).
Otherwise, let $G''\coloneqq G^{\mathcal P}\triangle A$.
Note that $G''$ is a cluster graph.
The following rule essentially states that large cliques in $G''$ are contained in every optimal solution and can thus be commited.

\begin{figure}[t]
\centering
\begin{tikzpicture}[
  vtx/.style={draw, circle, fill=white, minimum size=2.2mm, inner sep=0pt},
  xvtx/.style={draw, circle, fill=red!70!black, minimum size=2.2mm, inner sep=0pt},
  gedge/.style={black, line width=.8pt},
  added/.style={blue!70!black, densely dashed, line width=.9pt},
  gone/.style={black!35, line width=.8pt},
  xmark/.style={midway, inner sep=.4pt, fill=white, font=\scriptsize},
  lbl/.style={font=\scriptsize},
  clus/.style={draw=black!30, fill=black!6, rounded corners=6pt}
]
  \newcommand{\kpos}{%
    \coordinate (a) at (0,0);   \coordinate (b) at (.5,.75); \coordinate (c) at (1,0);
    \coordinate (d) at (1.5,.75); \coordinate (e) at (2,0);
    \coordinate (f) at (2.7,0); \coordinate (g) at (3.1,.75); \coordinate (h) at (3.5,0);}
  \newcommand{\knodes}[1]{\foreach \v in {#1} \node[vtx] at (\v) {};}

  \begin{scope}
    \kpos
    \draw[gedge] (a) -- (b) -- (c) -- (d) -- (e) (f) -- (g) -- (h) -- (f);
    \knodes{a,b,c,d,e,f,g,h}
    \node[lbl] at (.5,-.35) {$Q_1$};  \node[lbl] at (1.5,-.35) {$Q_2$};
    \node[lbl] at (1.75,-.85) {(a) $G$ and $\mathcal P=\{Q_1,Q_2\}$};
  \end{scope}

  \begin{scope}[shift={(4.6,0)}]
    \kpos
    \draw[gedge] (a) -- (b) -- (c) -- (d) -- (e);
    \draw[added] (a) -- (c) (c) -- (e);
    \draw[gone] (f) -- (g) -- (h) -- (f);
    \knodes{a,b,c,d,e}
    \foreach \v in {f,g,h} \node[vtx, draw=black!35] at (\v) {};
    \node[lbl, black!60] at (3.1,-.35) {Rule~\ref*{rr:isolated_clique}};
    \node[lbl] at (1.75,-.85) {(b) $G^{\mathcal P}$};
  \end{scope}

  \begin{scope}[shift={(9.2,0)}]
    \kpos
    \coordinate (d) at (2.1,.75); \coordinate (e) at (2.6,0);
    \fill[black!6, rounded corners=5pt] (-.25,-.25) rectangle (1.25,1.0);
    \fill[black!6, rounded corners=5pt] (1.85,-.25) rectangle (2.85,1.0);
    \draw[gedge] (a) -- (b) -- (c) (d) -- (e);
    \draw[added] (a) -- (c);
    \draw[gone] (c) -- (d) node[xmark] {$\times$};
    \draw[gone] (c) -- (e) node[xmark, pos=.65] {$\times$};
    \knodes{a,b}
    \foreach \v in {c,d,e} \node[xvtx] at (\v) {};
    \node[lbl] at (1.3,-.85) {(c) $G''=G^{\mathcal P}\triangle A$};
  \end{scope}

  \begin{scope}[shift={(2.4,-2.5)}]
    \fill[black!6] (0,0) ellipse (1.45 and .6);
    \draw (0,0) ellipse (1.45 and .6);
    \draw[dashed] (1.3,0) ellipse (1.25 and .78);
    \foreach \ya/\yb in {.3/.25, .3/-.3, -.3/.25, -.3/-.3}
      \draw[red!70!black] (-.45,\ya) -- (.75,\yb);
    \node[lbl, fill=black!6, inner sep=1pt] at (-.85,0) {$C\setminus D$};
    \node[lbl, fill=black!6, inner sep=1pt] at (1.0,0) {$C\cap D$};
    \node[lbl] at (-1.25,.6) {$C$};  \node[lbl] at (2.4,.65) {$D$};
    \node[lbl, align=center] at (.6,-1.3) {(d) $D$ splits $C$: $\geq|C\cap D|\cdot|C\setminus D|$\\$\geq|C|-1$ differing pairs};
  \end{scope}

  \begin{scope}[shift={(8.6,-2.5)}]
    \draw[dashed] (.35,0) ellipse (2.0 and .78);
    \fill[black!6] (0,0) ellipse (1.45 and .55);
    \draw (0,0) ellipse (1.45 and .55);
    \node[lbl] at (-1,-.1) {$C$};  \node[lbl] at (2.05,.65) {$D$};
    \coordinate (x) at (1.95,0);
    \foreach \pt in {(-1.05,.15), (-.55,-.3), (-.1,.3), (.4,-.25), (.85,.2)} \draw[red!70!black] (x) -- \pt;
    \node[xvtx, fill=white] at (x) {};  \node[lbl, right] at (2.0,0) {$x$};
    \node[lbl, align=center] at (.35,-1.3) {(e) $D\supsetneq C$: $\geq|C|$\\differing pairs};
  \end{scope}
\end{tikzpicture}
\caption{Top: the kernelization of \cref{thm:kernel-mu-ell} on a small example.
(a)~A graph~$G$ with a modification-disjoint packing of two induced $P_3$'s sharing the vertex in the middle.
(b)~The graph~$G^{\mathcal P}$ adds their missing pairs (dashed); the isolated clique on the right is removed by \cref{rr:isolated_clique}.
(c)~The cluster graph~$G''$ obtained from an approximate solution~$A$ that removes the two crossed-out pairs; every cluster of~$G''$ contains an endpoint of~$A$ (filled), which bounds the number of clusters.
Bottom: why \cref{rr:large_clique} may commit a clique~$C$ of~$G''$ with at least $(\alpha+1)(2\ell+\mu)+2$ vertices.
If a solution's cluster~$D$ splits~$C$~(d), or contains~$C$ and a further vertex~$x$~(e), then the solution differs from~$G''$ in at least $|C|-1$ vertex pairs (red), more than the $(\alpha+1)(2\ell+\mu)$ allowed by \cref{eq:edge_diff}.}
\label{fig:kernel}
\end{figure}

\begin{reductionrule}\label{rr:large_clique} 
Let $A$ be a cluster-editing set for $G^{\mathcal P}$ of size at most $\alpha(2\ell + \mu)$ and let $G''\coloneqq G^{\mathcal P}\triangle A$.
 Suppose that $C$ is a clique in $G''$ of size at least $(\alpha+1)(2\ell+\mu)+2$ and let $F_C\coloneqq(E(G'') \triangle E(G)) \setminus \binom{V(G) \setminus C}{2}$.
 If $|F_C| > k$ or $|F_C \cap E(G)| > \ell$, then answer no (or return a trivial no-instance).
 Otherwise, let $G' \coloneqq G \setminus C$, $k' \coloneqq k - |F_C|$, $\ell' \coloneqq \ell - |F_C \cap E(G)|$, and $\mathcal{P}' \coloneqq \mathcal{P} \setminus \{ Q \mid F_C \cap \mods(Q) \neq \emptyset\}$ and return instance $(G', k', \ell', \mathcal{P}')$.
\end{reductionrule}

\begin{lemma}\label{lem:large_clique_safe}
 \Cref{rr:large_clique} is safe.
 Furthermore, it does not increase 
 $\mu+\ell$.
\end{lemma}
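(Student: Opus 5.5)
The core claim to establish is that every solution $F$ of $(G,k,\ell)$ produces a cluster graph $G\triangle F$ in which $C$ is exactly one cluster. Once that holds, the solution restricted to pairs meeting $C$ is forced to equal $F_C$, and safeness together with the parameter bound follow by bookkeeping.

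\textbf{Step 1: $C$ is a cluster of every solution.} Fix a solution $F$. By \cref{lem:ker_pathpack_small_solution}, $G^{\mathcal P}$ and $G\triangle F$ differ in at most $2\ell+\mu$ pairs. Also $G''$ and $G^{\mathcal P}$ differ in $|A|\le\alpha(2\ell+\mu)$ pairs. Hence $G''$ and $G\triangle F$ differ in at most $(\alpha+1)(2\ell+\mu)$ pairs; this is the bound I would record as \cref{eq:edge_diff}.

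Suppose no cluster $D$ of $G\triangle F$ equals $C$. Then there are two cases, as in \cref{fig:kernel}(d),(e).
\begin{itemize}
\item Some cluster $D$ splits $C$. Partition $C$ according to the clusters of $G\triangle F$; every pair of $C$ lying in different parts differs from $G''$. If the parts have sizes $s_1,\dots,s_t$ with $t\ge2$, the number of such pairs is at least $|C|-1$.
\item The cluster $D\supseteq C$ contains an extra vertex $x$. Then all $|C|$ pairs between $x$ and $C$ are edges of $G\triangle F$ but non-edges of $G''$, because $C$ is a cluster of $G''$.
\end{itemize}
Either way there are at least $|C|-1\ge(\alpha+1)(2\ell+\mu)+1$ differing pairs, contradicting \cref{eq:edge_diff}.

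\textbf{Step 2: the solution is forced on pairs meeting $C$.} By Step 1, for every pair meeting $C$, its status in $G\triangle F$ is that of $G''$: pairs inside $C$ are edges and pairs from $C$ to outside are non-edges. Therefore $F\setminus\binom{V\setminus C}{2}=F_C$.

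This immediately gives the "no" answer: if $|F_C|>k$ or $|F_C\cap E(G)|>\ell$, no solution exists.

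For the forward direction, $F':=F\cap\binom{V\setminus C}{2}$ has $|F'|=|F|-|F_C|\le k'$ and $|F'\cap E(G)|\le\ell'$. Moreover $G'\triangle F'=(G\triangle F)-C$ is a cluster graph. Each $Q\in\mathcal P'$ avoids $F_C$ in its modification set. If $Q$ contained a vertex of $C$, some pair of $\mods(Q)$ would meet $C$, and every pair meeting $C$ that is not in $F_C$ agrees between $G$ and $G''$. One checks that such a $Q$ cannot be an induced $P_3$ of $G$ while $G''$ has $C$ as an isolated clique, since its pairs meeting $C$ would all be edges or all be non-edges. Hence $\mathcal P'$ is a packing in $G'$.

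For the converse, given a solution $F'$ of the reduced instance, $F'\cup F_C$ makes $C$ an isolated clique and the rest a cluster graph, and it respects both budgets.

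\textbf{Step 3: the parameter does not increase.} First, $\ell'=\ell-|F_C\cap E(G)|\le\ell$. For $\mu$, note $\mu'=k-|F_C|-|\mathcal P|+|\mathcal P\setminus\mathcal P'|$, so it suffices to show that at least as many pairs of $F_C$ hit members of $\mathcal P$ as members are removed. This holds because each removed $Q$ has a distinct pair of $F_C$ in $\mods(Q)$, by modification disjointness. Thus $|\mathcal P\setminus\mathcal P'|\le|F_C|$ and $\mu'\le\mu$.

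\textbf{Expected obstacle.} The delicate part is the well-definedness of $\mathcal P'$: members of $\mathcal P$ that touch $C$ but whose modification set avoids $F_C$. I would handle this through the agreement argument in Step 2. If that fails, I would instead remove every $Q$ meeting $C$, charging each one to a pair of $F_C$ or showing such $Q$ forces a pair in $F_C$.
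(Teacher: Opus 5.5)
Your proposal is correct and follows the paper's proof. The same $(\alpha+1)(2\ell+\mu)$ distance bound from \cref{lem:ker_pathpack_small_solution} forces $C$ to be a cluster of every solution, and safeness and $|\mathcal P\setminus\mathcal P'|\le|F_C|$ then follow by bookkeeping. Your extra check that every member of $\mathcal P'$ avoids $C$, which the paper leaves implicit, is valid. However, the phrase ``all edges or all non-edges'' is not literally right when $Q$ has two vertices in $C$ (one edge inside $C$, two non-edges to the outside vertex). The conclusion still holds there, because the outside vertex would be isolated in $Q$.
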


\begin{proof}
  \looseness=-1
 We first show that each solution for $(G,k,\ell)$ contains $F_C$.
 Suppose that there is a solution $\widehat{F}$ for $(G,k,\ell)$ that does not contain some edge of $F_C$.
 By \Cref{lem:ker_pathpack_small_solution}, $|\widehat{F} \triangle \miss(\mathcal{P})| \le 2\ell+\mu$.
 Therefore, the two cluster graphs $G \triangle \widehat{F}$ and $G''$ differ in at most $(\alpha+1)(2\ell+\mu)$ edges, since
 \begin{equation}
 |(\widehat{F} \triangle E(G)) \triangle (A \triangle E(G^{\mathcal{P}}))| \le |\widehat{F} \triangle \miss(\mathcal{P})| + |A| \le (2\ell+\mu) + \alpha(2\ell+\mu) = (\alpha+1)(2\ell+\mu). \label{eq:edge_diff}
 \end{equation} 
  As $F_C$ contains a vertex pair which is not in~$\widehat{F}$  and all modifications in $F_C$ are incident to~$C$, this means that in $G \triangle \widehat{F}$, either $C$ is a part of a larger cluster or it is split into several connected components.
  Suppose the latter, that is, there is a cluster $D$ of $G \triangle \widehat{F}$ such that $C \cap D \neq \emptyset$ and $C \nsubseteq D$.
  Then, $G''$ and $G\triangle\widehat{F}$ differ at least in all edges between $C \cap D$ and $C \setminus D$. 
  This is at least $|C \cap D|(|C|-|C \cap D|)$ edges for some $|C \cap D| \in \{1, \ldots, |C|-1\}$.
  As this is a concave function in~$|C\cap D|$, its minimum on the range is achieved either by $|C \cap D| = 1$ or $|C \cap D| =|C|-1$.
  Therefore, the minimum is $|C|-1 = (\alpha+1)(2\ell+\mu)+1$, i.e., the cluster graphs differ by at least $(\alpha+1)(2\ell+\mu)+1$ edges, contradicting \cref{eq:edge_diff}.
  Hence, there is a cluster $D$ of $G \triangle \widehat{F}$ such that $C \subseteq D$.
  If it contains also a vertex $x \in (D \setminus C)$, then $G''$ and $G\triangle\widehat{F}$ would differ in at least the edges between $x$ and $C$, which is at least $|C| \ge (\alpha+1)(2\ell+\mu)+2$ edges, again contradicting \cref{eq:edge_diff}.
  Therefore, every solution for $(G,k,\ell)$ contains $F_C$.
  Thus, if the rule returned ``no'', this is correct.

 Now, assume that $F'$ is a solution for the instance $(G', k', \ell')$. We claim that $F = F' \cup F_C$ is a solution for the original instance.
 Indeed, $G'\triangle F'$ is a cluster graph, $C$ is an isolated clique in $G \triangle F_C$ and therefore $G'' \triangle F = (G'' \triangle F_C) \triangle F'$ is a cluster graph, as $F_C \cap F' = \emptyset$.
 Since $|F'| \le k'$, we have $|F| = |F'| +|F_C| \le k' + |F_C| =k$ and as $|F' \cap E(G)| \le \ell'$, we have $|F \cap E(G)| \le \ell' + |F_C \cap E(G)|= \ell$.
 
  Conversely, let $F$ be a solution for $(G,k,\ell)$. 
  We claim that $F' \coloneqq F \setminus F_C$ is a solution for the instance $(G', k', \ell')$.
  As $F_C \subseteq F$, we have that $|F'| = |F| -|F_C| \le k-|F_C| = k'$ and $|F'\cap E(G)| = |F \cap E(G)| - |F_C \cap E(G)| \le \ell - |F_C \cap E(G)| = \ell'$.
  Furthermore, as~$F$ and~$F'$ only differ in edges incident to $C$, it follows that $G' \triangle F' = (G \triangle F) \setminus C$ is a cluster graph.
  Therefore, $(G', k', \ell')$ is indeed a yes-instance.
  
  Since the set $\mathcal{P}_C \coloneqq \{ Q \mid F_C \cap \mods(Q) \neq \emptyset\}$ is not larger than $F_C$, we have $\mu' \coloneqq k' - |\mathcal{P}'| = k - |F_C| -(|\mathcal{P}| - |\mathcal{P}_C|) \le k - |\mathcal{P}|=\mu$.
  The bound on the parameter follows, as also $\ell' \le \ell$.
\end{proof}

\begin{proof}[Proof of \Cref{thm:kernel-mu-ell}]
The kernelization answers ``no'' if $k<|\mathcal P|$ or $\ell<0$.
Otherwise, it applies \cref{rr:isolated_clique} as long as possible, computes an $\alpha$-approximate solution~$A$ for~$G^{\mathcal P}$, answers ``no'' if $|A|>\alpha(2\ell+\mu)$, and otherwise applies \cref{rr:large_clique} if possible and starts over.
The answers are correct by \cref{lem:ker_pathpack_small_solution}, the rules are safe and do not increase $\mu+\ell$ by \cref{lem:isolated_clique_safe,lem:large_clique_safe}, and each application of a rule removes vertices, so the kernelization runs in polynomial time.

\looseness=-1
It remains to show the size of the kernel. 
It is enough to bound the size of an instance to which no rule applies and for which $|A|\leq\alpha(2\ell+\mu)$, as the recursive calls do not increase the parameter by \Cref{lem:isolated_clique_safe} and \Cref{lem:large_clique_safe}. 
Let $X$ be the set of vertices incident to~$A$.  
Clearly, $|X| \le 2|A|\le 2\alpha(2\ell +\mu)$.
Let again $G''\coloneqq G^{\mathcal P}\Delta A$ be the cluster graph.
If a clique in $G''$ does not contain any vertex of~$X$, then it is also a clique in $G^\mathcal P$ and \Cref{rr:isolated_clique} would apply.
Therefore, each clique of~$G''$ contains a vertex of~$X$, that is, there are at most $2\alpha(2\ell +\mu)$ cliques.
Furthermore, each clique is of size at most $(\alpha+1)(2\ell+\mu)+1$, as otherwise \Cref{rr:large_clique} would apply.
Therefore, $G''$ contains $O((\ell+\mu)^2)$ vertices and $O((\ell+\mu)^3)$ edges.
As $G^{\mathcal{P}}$ and $G''$ only differ in edges of $A$ and $|A| \in O(\ell+\mu)$, also $G^{\mathcal{P}}$ has $O((\ell+\mu)^3)$ edges.
Finally, $G$ has the same vertices as $G''$ and at most as many edges as $G^{\mathcal{P}}$, therefore it has $O((\ell+\mu)^2)$ vertices and $O((\ell+\mu)^3)$ edges.
\end{proof}

\subsection{Cluster Editing with Few Deletions and Above Packing of Stars}

\begin{restatable}[\appmark]{theorem}{kernelSigmaEll}\label{thm:kernel-sigma-ell}
\textsc{CEwFD} has a kernel with $O((\ell+\sigma)^2(\ell^2+\sigma)^2)$ vertices and $O((\ell+\sigma)^3(\ell^2+\sigma)^3)$ edges.

\end{restatable}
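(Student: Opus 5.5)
We reduce to \cref{thm:kernel-mu-ell}. The shape of the bound, $O((\ell+\sigma)^2(\ell^2+\sigma)^2)$, suggests a $P_3$-packing instance with $\ell+\mu\in O((\ell+\sigma)(\ell^2+\sigma))$. First we reject if $\sigma<0$ or $\ell<0$. We then apply \cref{obs:star-size,obs:few-large-stars}: if a star has more than $s=\ell+2\sqrt\sigma+3$ vertices, or if more than $\ell+\sigma$ stars have at least four vertices, we return a trivial no-instance. Otherwise the set $V_L$ of vertices in non-$P_3$ stars satisfies $|V_L|\le(\ell+\sigma)s\in O((\ell+\sigma)(\ell+\sqrt\sigma))$. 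Guessing the cluster graph on $V_L$ is not allowed in a kernel, so we instead replace each large star by a modification-disjoint packing of $P_3$s.

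\textbf{Replacing large stars.} A star $S$ with center $c$ and leaves $x_1,\dots,x_t$ ($t\ge3$) contains the modification-disjoint $P_3$s $x_{2i-1}cx_{2i}$ for $i\in[\lfloor t/2\rfloor]$. These share only $c$, and their modification sets are disjoint. They also have disjoint modification sets from every other star of $\mathcal S$, since $\mods$ of a sub-$P_3$ lies inside $\mods(S)$. Replacing $S$ by these $\lfloor t/2\rfloor$ $P_3$s loses $\lb(S)-\lfloor t/2\rfloor=t-1-\lfloor t/2\rfloor\le t/2\le s/2$ from the lower bound. We obtain a $P_3$-packing $\mathcal P$ with
\[
\mu=k-|\mathcal P|\le\sigma+(\ell+\sigma)\cdot s/2\in O((\ell+\sigma)(\ell+\sqrt\sigma)).
\]
Running \cref{thm:kernel-mu-ell} on $(G,k,\ell,\mathcal P)$ gives $O((\ell+\mu)^2)=O((\ell+\sigma)^2(\ell+\sqrt\sigma)^2)$ vertices. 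This is at most $O((\ell+\sigma)^2(\ell^2+\sigma)^2)$, since $(\ell+\sqrt\sigma)^2\le2(\ell^2+\sigma)$. The edge bound follows in the same way from the cube. The packed instance output by \cref{thm:kernel-mu-ell} is an instance of \textsc{CEwFD}, which is what a kernel must return. If the packing must be recorded as a star packing, $P_3$s are stars on three vertices with $\lb=1$, so $\sigma'=\mu$.

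\textbf{Expected obstacle.} The main point to verify is that \cref{thm:kernel-mu-ell} takes the packing as part of the input, exactly as \cref{lem:ker_pathpack_small_solution} and the rules use $\mathcal P$, so feeding it our constructed $\mathcal P$ is legitimate. We also need the kernel to be polynomial time, which holds since the replacement step is linear.

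A weaker fallback, if the authors' intended bound arises differently, is to keep only half of each star, which still gives a loss of $O(s)$ per star. In either case the parameter inflation $O((\ell+\sigma)s)$ is what produces the stated bound, with slack.
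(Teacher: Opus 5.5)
Your proof is correct, and it takes a different route from the paper. The paper does not use \cref{thm:kernel-mu-ell} as a black box. It keeps only the $P_3$s of $\mathcal S$ as $\mathcal P$ and proves a star analogue of \cref{lem:ker_pathpack_small_solution}. In that analogue, every vertex pair inside a star with at least four vertices is charged as a potential difference (the set $Y$, with $|Y|\le(\ell+\sigma)\binom{s}{2}$), which gives the bound $(\ell+\sigma)(\ell+2\sqrt\sigma+4)^2$ in place of $2\ell+\mu$. The paper then restates both reduction rules for star-packed instances. This includes a separate case analysis showing that committing a large clique $C$ does not increase $\sigma$ when $C$ cuts through stars.

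You instead split each large star into $\lfloor t/2\rfloor$ modification-disjoint induced sub-$P_3$s through its center. This loses only $\lceil t/2\rceil-1$ from the lower bound, so after \cref{obs:star-size,obs:few-large-stars} you get $\ell+\mu\in O((\ell+\sigma)(\ell+\sqrt\sigma))$. Then \cref{thm:kernel-mu-ell} applies verbatim, since it and \cref{lem:ker_pathpack_small_solution} work for any given modification-disjoint $P_3$-packing.

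Your route has two advantages. First, no new rules or safety proofs are needed. Second, the bound is stronger: $O((\ell+\sigma)^2(\ell^2+\sigma))$ vertices and $O((\ell+\sigma)^3(\ell^2+\sigma)^{3/2})$ edges. That is a factor of $\ell^2+\sigma$ (respectively $(\ell^2+\sigma)^{3/2}$) below the stated bound. The gain comes from charging the pairs inside large stars through the $P_3$ argument (at most $2\ell+\mu$ differences) instead of charging all $\Theta(s^2)$ of them.

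What the paper's route keeps is that its output is again a star-packed instance whose parameter $\ell'+\sigma'$ does not exceed $\ell+\sigma$. Your output parameter $\ell'+\mu'$ may grow polynomially. This is harmless under the kernel definition used in the paper, and as you note, the $P_3$s can be reread as stars with $\sigma'=\mu'$ if a star packing must be returned.
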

\todo{maybe just say polynomial problem kernel?}\todo{F: If the proof is not part of the short version, I would keep it, and otherwise I am fine with dropping it}
\looseness=-1
The kernel follows the one of \cref{thm:kernel-mu-ell}, with $\mathcal P$ being the set of $P_3$'s in a star packing~$\mathcal S$.
By \cref{obs:star-size,obs:few-large-stars}, there are at most $\ell+\sigma$ larger stars in~$\mathcal S$, each with at most $\ell+2\sqrt{\sigma}+3$ vertices, so they span only few vertex pairs.
Together with the argument of \cref{lem:ker_pathpack_small_solution} for the $P_3$'s, this shows that every solution differs from $G^{\mathcal P}$ in at most $(\ell+\sigma)(\ell+2\sqrt{\sigma}+4)^2$ vertex pairs, and the two reduction rules and the size bound carry over with this bound in place of $2\ell+\mu$.
\toappendix{
  \sv{\kernelSigmaEll*}

If $(G,k,\ell)$ is an instance of \textsc{CEwFD} and $\mathcal{S}$ is a collection of modification-disjoint induced stars in $G$, each with at least 3 vertices, then we call $(G,k,\ell,\mathcal S)$ \emph{a star-packed instance}.
We start by proving the following analogue of \Cref{lem:ker_pathpack_small_solution}.
\begin{lemma}\label{lem:ker_starpack_small_solution}
 Let $(G,k,\ell, \mathcal{S})$ be a star-packed instance. 
 Let $\mathcal{P} \subseteq \mathcal{S}$ be the set of $P_3$s in $\mathcal{S}$.
 Let $G^{\mathcal P} \coloneqq G \triangle \miss(\mathcal P)$.
 For every solution $F$ for $(G,k,\ell)$, we have $|E(G^{\mathcal P})\triangle(E(G) \triangle F)| = |F \triangle \miss(\mathcal P)| \le (\ell+\sigma)(\ell+2\sqrt{\sigma}+4)^2$.
 That is, there is an edge-modification set $F'\subseteq \binom{V}{2}$ of size at most $(\ell+\sigma)(\ell+2\sqrt{\sigma}+4)^2$ such that $G^{\mathcal P}\triangle F'$ is a cluster graph.
\end{lemma}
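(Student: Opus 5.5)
Let $F$ be a solution for $(G,k,\ell)$, with $F^+\coloneqq F\setminus E(G)$ and $F^-\coloneqq F\cap E(G)$. Write $\mathcal S_{\ge4}\coloneqq\mathcal S\setminus\mathcal P$ for the stars with at least four vertices. We may assume that no-instance conditions do not hold, since otherwise there is no solution and the statement is vacuous. By \cref{obs:star-size,obs:few-large-stars}, $|\mathcal S_{\ge4}|\le\ell+\sigma$ and each $S\in\mathcal S_{\ge4}$ has at most $\ell+2\sqrt\sigma+3$ vertices. We split $F\triangle\miss(\mathcal P)$ into three parts, following the proof of \cref{lem:ker_pathpack_small_solution}.
\begin{enumerate}
\item $F^-$, which has at most $\ell$ pairs.
\item The pairs $\miss(Q)\notin F$ for $Q\in\mathcal P$. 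Each such $Q$ must contain a deletion of $F$, and by modification disjointness distinct $Q$ use distinct deletions. So there are at most $\ell$ such pairs.
\item $F^+\setminus\miss(\mathcal P)$. This is the part that needs the counting.
\end{enumerate}

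For the third part, we count how many edits of $F$ are forced by the packing. Every $Q\in\mathcal P$ needs an edit in $\mods(Q)$. Every $S\in\mathcal S_{\ge4}$ needs at least $\lb(S)$ edits inside $\mods(S)$, since $G[V(S)]\triangle(F\cap\mods(S))$ is a cluster graph. These sets $\mods(\cdot)$ are pairwise disjoint. Hence
\[
|F|\ \ge\ |\{Q\in\mathcal P : \miss(Q)\in F^+\}| + |F^-\cap\textstyle\bigcup_{Q}\mods(Q)| + \sum_{S\in\mathcal S_{\ge4}} |F\cap\mods(S)| + |F^+\setminus(\miss(\mathcal P)\cup\textstyle\bigcup_S\mods(S))|.
\]
Subtracting $\sum_{S\in\mathcal S}\lb(S)\ge k-\sigma\ge |F|-\sigma$ bounds the additions outside all modification sets. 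Each $Q$ contributes at least $\lb(Q)=1$ and each $S$ at least $\lb(S)$. Therefore the number of additions in $F^+\setminus\miss(\mathcal P)$ that lie outside $\bigcup_S\mods(S)$ is at most $\sigma$ plus the overcount. The overcount is the number of $Q$ that contain both a deletion and their added missing pair, which is at most $\ell$. Additions inside $\bigcup_{S}\mods(S)$ are bounded crudely by $\sum_S\binom{|V(S)|}{2}\le(\ell+\sigma)\binom{\ell+2\sqrt\sigma+3}{2}$.

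Summing the three parts gives
\[
|F\triangle\miss(\mathcal P)|\le 3\ell+\sigma+(\ell+\sigma)\tbinom{\ell+2\sqrt\sigma+3}{2}.
\]
This is at most $(\ell+\sigma)(\ell+2\sqrt\sigma+4)^2$, because $(s+1)^2\ge\binom{s}{2}+3$ for $s\ge1$, and $(\ell+\sigma)\cdot 3\ge 3\ell+\sigma$ absorbs the linear terms. The degenerate case $\ell+\sigma=0$ is handled separately: then $\mathcal S_{\ge4}=\emptyset$ and the argument of \cref{lem:ker_pathpack_small_solution} gives the bound $0$. Finally, $F'\coloneqq F\triangle\miss(\mathcal P)$ satisfies $G^{\mathcal P}\triangle F'=G\triangle F$, which is a cluster graph.

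The main obstacle is getting this accounting right. The edits charged to $\lb$ of the large stars must not be double-counted with additions outside the modification sets, and the overlap between $Q\in\mathcal P$ and deletions must be handled correctly. The crude $\binom{|V(S)|}{2}$ bound for the large stars gives enough slack that I would not optimize further.
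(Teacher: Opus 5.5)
Your proof is correct and follows essentially the paper's argument: the pairs inside the stars on at least four vertices are bounded crudely using \cref{obs:star-size,obs:few-large-stars}, and all other pairs are handled as in \cref{lem:ker_pathpack_small_solution}. Your accounting for the additions outside all modification sets is slightly sharper than the paper's, which only uses $|F|-|\mathcal P|\le\sigma+\sum_{S\in\mathcal S\setminus\mathcal P}\lb(S)$. In fact your inequality already gives at most $\sigma$ such additions, so the extra ``overcount'' of $\ell$ is harmless but unnecessary slack.
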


\begin{proof}
 Let $F$ be a solution for $(G,k,\ell)$, that is, $G \triangle F$ is a cluster graph, $|F| \le k$, and $|F \cap E(G)| \le \ell$.
 By \Cref{obs:star-size}, each star in $\mathcal{S}$ has at most $\ell+2\sqrt{\sigma}+3$ vertices and, by \Cref{obs:few-large-stars}, there are at most $\ell+\sigma$ stars on at least 4 vertices in $\mathcal{S}$, i.e., denoting $\widehat{\mathcal{S}} \coloneqq \mathcal{S} \setminus \mathcal{P}$, we have $|\widehat{\mathcal{S}}| \le (\ell+\sigma)$.
 Let $Y \coloneqq \bigcup_{S \in \widehat{\mathcal{S}}} \binom{V(S)}{2}$ be the set of vertex pairs within the large stars.
 By the above bounds, we have $|Y| \le (\ell+\sigma) \binom{\ell+2\sqrt{\sigma}+3}{2} \le (\ell+\sigma)(\ell+2\sqrt{\sigma}+3)^2$.
 We also have $\lb(\widehat{\mathcal{S}}) \coloneqq \sum_{S \in \widehat{\mathcal{S}}}\lb(S) \le (\ell+\sigma)(\ell+2\sqrt{\sigma}+1)$.
 Note that, since the stars in $\mathcal{S}$ are modification-disjoint, we have $Y \cap \miss(\mathcal P) = \emptyset$.
 
 Now we proceed to bound $|(F\setminus Y) \triangle \miss(\mathcal P)|$ in a similar way as in \Cref{lem:ker_pathpack_small_solution}.
 Let~$F^+\coloneqq (F\setminus Y) \setminus E(G)$ be the added edges and~$F^- \coloneqq (F \setminus Y) \cap E(G)$ be the deleted edges of~$F \setminus Y$.
 Note that we clearly have $|F^-| \le |F \cap E(G)| \le \ell$ and also $|F^+|+|F^-| \le |F| \le k$.
 We let $F'' \coloneqq (F\setminus Y)\triangle\miss(\mathcal P)$.
 For every deleted edge $e\in F^-$, also $F''$ simply deletes $e$ and if~$e$ is in a~$P_3$ of~$\mathcal P$, then~$F''$ also deletes from~$G^\mathcal P$ the corresponding edge in~$\miss(\mathcal P)$ of the $P_3$.
 So, as $|F^-| \le \ell$, $F''$ deletes at most $2\ell$ edges from~$G^\mathcal P$.
 Moreover, $F''$ adds all edges~$e\in (F^+\setminus \miss(\mathcal P))$.

 As for every $P_3$ $Q$ of~$\mathcal P$, we have $\mods(Q) \cap F^- \neq \emptyset$ or $\miss(Q) \in F^+$, we have $|\mathcal P| \le |F^-|+|F^+\cap \miss(\mathcal P)|$.
 Therefore, \[|F^+\setminus \miss(\mathcal P)| \le |F| - |\mathcal P| \le k - |\mathcal P| = \sigma+\lb(\widehat{\mathcal{S}}) \le  \sigma+ (\ell+\sigma)(\ell+2\sqrt{\sigma}+1).\]
 Together, we have $|F''| \le 2\ell+\sigma+(\ell+\sigma)(\ell+2\sqrt{\sigma}+1) \le (\ell+\sigma)(\ell+2\sqrt{\sigma}+3)$.
 
 Finally, we let $F' \coloneqq F \triangle \miss(\mathcal P)$.
 Note that, since $Y \cap \miss(\mathcal P) = \emptyset$, we have \[F' = (F \cap Y) \cup ((F \setminus Y) \Delta \miss(\mathcal P)) = (F \cap Y) \cup F''.\]
 Thus, \[|F'| \le |Y| + |F''| \le (\ell+\sigma)(\ell+2\sqrt{\sigma}+3)^2 + (\ell+\sigma)(\ell+2\sqrt{\sigma}+3) \le (\ell+\sigma)(\ell+2\sqrt{\sigma}+4)^2.\qedhere\]
\end{proof}

Now we adapt the two rules from the previous section.

\begin{reductionrule}\label{rr:star_isolated_clique}
 If $(G,k,\ell,\mathcal S)$ is a star-packed instance, $\mathcal{P}$ is the set of $P_3$s in $\mathcal{S}$, $G^{\mathcal P} \coloneqq G \triangle \miss(\mathcal P)$, and $C$ is an isolated clique in $G^{\mathcal P}$, then remove $C$ from $G$, decrease $k$ by $|\binom{C}{2} \setminus E(G)|$ and remove from $\mathcal{S}$ every $P_3$ that contains a vertex of $C$.
\end{reductionrule}

\begin{lemma}\label{lem:star_isolated_clique_safe}
 \Cref{rr:star_isolated_clique} is safe and does not increase the parameter $\ell+\sigma$.
\end{lemma}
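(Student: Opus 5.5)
The plan follows \cref{lem:isolated_clique_safe} almost verbatim. The only new point is to handle the larger stars in $\mathcal S$. Write the new instance as $(G',k',\ell,\mathcal S')$. Here $k'=k-|\binom{C}{2}\setminus E(G)|$, and $\mathcal S'$ arises from $\mathcal S$ by removing the set $\mathcal P_C$ of $P_3$'s meeting $C$.

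First, the structural facts. Since $G^{\mathcal P}\supseteq G$ and every pair of $\miss(\mathcal P)$ is joined by a path in $G$, the clique $C$ is a connected component of $G$. Hence every star of $\mathcal S$ (each is connected in $G$) lies entirely inside $C$ or entirely outside it. Every $P_3$ meeting $C$ lies in $C$. Since $G[C]$ becomes a clique after adding only missing pairs of $P_3$'s in $\mathcal P_C$, we get $\binom{C}{2}\setminus E(G)\subseteq\miss(\mathcal P_C)$. Conversely, every member of $\mathcal P_C$ has its missing pair in $\binom{C}{2}\setminus E(G)$, so $|\mathcal P_C|=|\binom{C}{2}\setminus E(G)|$.

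Next, I must rule out a star $S\in\mathcal S$ with at least $4$ vertices inside $C$. This is the one spot needing care. Such an $S$ has at least two leaves, and these leaves are nonadjacent in $G$. Their pair lies in $\mods(S)$, so by modification disjointness it is not the missing pair of any $P_3$ in $\mathcal P$. It is therefore a nonedge of $G^{\mathcal P}$ inside $C$, contradicting that $C$ is a clique. Thus only $P_3$'s of $\mathcal S$ meet $C$, and $\mathcal S'$ is again a modification-disjoint packing of induced stars of $G'$. The rule, as stated, removes exactly $\mathcal P_C$.

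Safety then proceeds exactly as before. Given a solution $F'$ of $(G',k',\ell)$, the set $F'\cup(\binom{C}{2}\setminus E(G))$ solves the original instance: it adds no deletions, and the cost increases by exactly $k-k'$. Conversely, let $F$ be a solution of $(G,k,\ell)$. Then $F$ hits each of the pairwise disjoint sets $\mods(Q)$ with $Q\in\mathcal P_C$, and all of these pairs lie inside $C$. So $F'\coloneqq F\cap\binom{V(G)\setminus C}{2}$ has $|F'|\le k-|\mathcal P_C|=k'$ and at most $\ell$ deletions. Moreover, $G'\triangle F'=(G\triangle F)-C$ is a cluster graph. For the parameter, $\ell$ is unchanged and
\[
\sigma'=k'-\sum_{S\in\mathcal S'}\lb(S)=k-|\mathcal P_C|-\Bigl(\sum_{S\in\mathcal S}\lb(S)-|\mathcal P_C|\Bigr)=\sigma,
\]
since $\lb(Q)=1$ for every $P_3$ $Q$. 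I expect no real obstacle beyond the short leaf-pair argument excluding large stars inside $C$.
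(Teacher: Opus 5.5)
Your proof is correct and follows the paper's route. The paper omits this proof as "completely analogous" to \cref{lem:isolated_clique_safe}, noting only that stars with at least four vertices are disjoint from $C$. You rerun that argument verbatim (obtaining $\sigma'=\sigma$ because $\lb(Q)=1$ for each removed $P_3$) and also justify the note: the nonadjacent leaf pair of such a star cannot be a missing pair, so it would be a nonedge of $G^{\mathcal P}$ inside $C$.
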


The proof of \Cref{lem:star_isolated_clique_safe} is completely analogous to the proof of \Cref{lem:isolated_clique_safe} (note that we have $V(S) \cap C = \emptyset$ for every $S \in \mathcal{S} \setminus \mathcal{P}$) and we therefore omit it.

\begin{reductionrule}\label{rr:star_large_clique} 
Let $(G,k,\ell,\mathcal S)$ be a star-packed instance, $\mathcal{P}$ be the set of $P_3$s in $\mathcal{S}$, and $G^{\mathcal P} \coloneqq G \triangle \miss(\mathcal P)$.
Let $A$ be a cluster-editing set for $G^{\mathcal P}$ of size at most $\alpha(\ell+\sigma)(\ell+2\sqrt{\sigma}+4)^2$ for some constant $\alpha$ and let $G''\coloneqq G^{\mathcal P}\triangle A$.
 Suppose that $C$ is a clique in $G''$ of size at least $(\alpha+1)(\ell+\sigma)(\ell+2\sqrt{\sigma}+4)^2+2$ and let $F_C\coloneqq(E(G'') \triangle E(G)) \setminus \binom{V(G)-C}{2}$.
 If $|F_C| > k$ or $|F_C \cap E(G)| > \ell$, then answer no.
 Otherwise, let $G' \coloneqq G \setminus C$, $k' \coloneqq k - |F_C|$, $\ell' \coloneqq \ell - |F_C \cap E(G)|$, and $\mathcal{S}' \coloneqq \{S \setminus C \mid |V(S) \setminus C| \ge 3 \wedge S \setminus C \text{ is a star}\}$ and return instance $(G', k', \ell', \mathcal{S}')$.
\end{reductionrule}

\begin{lemma}\label{lem:star_large_clique_safe}
 \Cref{rr:star_large_clique} is safe. It also does not increase the parameter $\ell+\sigma$.
\end{lemma}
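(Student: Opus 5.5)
I will mirror the proof of \cref{lem:large_clique_safe}, replacing the bound $2\ell+\mu$ by $\beta\coloneqq(\ell+\sigma)(\ell+2\sqrt{\sigma}+4)^2$ throughout.

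\emph{Every solution contains $F_C$.} Let $\widehat F$ be a solution for $(G,k,\ell)$. By \cref{lem:ker_starpack_small_solution}, $|\widehat F\triangle\miss(\mathcal P)|\le\beta$. Combining this with $|A|\le\alpha\beta$ as in \cref{eq:edge_diff}, the cluster graphs $G\triangle\widehat F$ and $G''$ differ in at most $(\alpha+1)\beta$ pairs. Suppose some cluster $D$ of $G\triangle\widehat F$ splits $C$. Then the two graphs differ in at least $|C\cap D|\cdot|C\setminus D|\ge|C|-1>(\alpha+1)\beta$ pairs, which is a contradiction. Suppose instead $D\supsetneq C$. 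Then they differ in at least $|C|$ pairs, again a contradiction. Hence $C$ is a cluster of $G\triangle\widehat F$. So $\widehat F$ agrees with $G''$ on every pair incident to $C$, that is, $F_C\subseteq\widehat F$. Consequently, answering no when $|F_C|>k$ or $|F_C\cap E(G)|>\ell$ is correct.

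\emph{Equivalence of the instances.} Both directions go exactly as before. A solution $F'$ of the reduced instance gives the solution $F'\cup F_C$, because $C$ becomes an isolated clique and the budgets add up. Conversely, $F\setminus F_C$ solves $(G',k',\ell')$, since $F$ contains $F_C$ and only pairs incident to $C$ are removed. For the reduced instance we also need $\mathcal S'$ to be a valid packing of $G'$. Each $S\setminus C$ with $S\in\mathcal S$ is an induced subgraph of $G'=G-C$. We keep it only if it is a star on at least $3$ vertices, so it is an induced star of $G'$. Modification-disjointness is inherited, since vertex-intersections only shrink.

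\emph{Parameter does not increase.} I expect this to be the main obstacle, and it must be checked carefully. We need $k'-\sum_{S'\in\mathcal S'}\lb(S')\le k-\sum_{S\in\mathcal S}\lb(S)$, that is,
\[
\sum_{S\in\mathcal S}\bigl(\lb(S)-\lb'(S)\bigr)\le|F_C|,
\]
where $\lb'(S)\coloneqq\lb(S\setminus C)$ if $S\setminus C$ is kept and $0$ otherwise. I would charge each $S$ to distinct pairs of $F_C$ lying in $\mods(S)$; distinctness across stars follows from modification-disjointness. Three cases arise.
\begin{itemize}
\item If $S$ meets $C$ in at most one leaf, the remainder is still a star, or it dropped below $3$ vertices. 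In both cases the decrease is at most $|V(S)\cap C|$. Since $C$ is a cluster of $G''$, every $G$-edge between $V(S)\cap C$ and $V(S)\setminus C$ lies in $F_C$, which pays for this.
\item If $S$ contains the center and the center lies in $C$, then all its leaves outside $C$ have their star edges in $F_C$.
\item If $S$ contains leaves inside $C$, those leaves' edges to the center (when the center is outside $C$), or the pairs between leaves inside $C$, lie in $F_C$. Here $\lb$ drops by at most the number of such leaves.
\end{itemize}
In each case $\mods(S)\cap F_C$ has at least $\lb(S)-\lb'(S)$ elements, and I expect this counting to be routine. Summing the per-star inequalities and using $\ell'\le\ell$ gives that $\ell+\sigma$ does not increase.
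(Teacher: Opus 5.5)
Your proposal is correct and follows the paper's proof. It transfers the split/merge argument of \cref{lem:large_clique_safe} via \cref{lem:ker_starpack_small_solution}, gives the same two-direction equivalence, and uses the same per-star charging of $\lb(S)-\lb'(S)$ to $F_C\cap\binom{V(S)}{2}$; you also check that $\mathcal S'$ is a valid packing, which the paper leaves implicit. One correction to your third bullet: if the center lies in $C$, the drop is $\lb(S)=|V(S)|-2$, not at most the number $t$ of leaves inside $C$, so you must add the $|V(S)\setminus C|$ star edges to outside leaves and the $\binom{t}{2}\ge t-1$ added pairs among the leaves inside $C$ (the paper's $\lb(S[V(S)\cap C])$ term).
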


\begin{proof}
 The proof that any solution for $(G,k,\ell)$ contains $F_C$ is completely analogous to the proof in \Cref{lem:large_clique_safe}, using \Cref{lem:ker_starpack_small_solution} instead of \Cref{lem:ker_pathpack_small_solution}.
 Based on that, the proof of safeness is exactly the same.
 
 To show that the parameter does not increase, we need to bound $\lb(\mathcal{S}) - \lb(\mathcal{S}')$ by $|F_C|$.
 We show that for every $S \in \mathcal{S}$, the difference related to the star $S$ can be bounded by $|F_C \cap \binom{V(S)}{2}|$.
 This is sufficient, as the stars are modification disjoint.
 \begin{itemize}
 \item If $(S \setminus C) \notin \mathcal{S}'$, then we distinguish two cases.
 \begin{itemize}
 \item If $V(S) \setminus C$ contains the center $c$ of $S$, then $|V(S) \setminus C| \le 2$. 
 All edges from $c$ to $|V(S) \cap C| \ge |V(S)|-2$ vertices of $S$ in $V(S) \cap C$ were deleted by $F_C$. 
 This is at least $\lb(S)$ edges.
 \item If the center $c$ of $S$ is in $C$, then all edges from $c$ to vertices in $V(S) \setminus C$ must be deleted. 
 Furthermore, if $|V(S) \cap C| \ge 3$, then at least $\lb(S[V(S) \cap C]) = |V(S) \cap C|-2$ modifications must be done by $F_C$ in $S[V(S) \cap C]$.
 Together this is at least $|V(S) \setminus C|+ |V(S) \cap C|-2 = \lb(S)$ edges.
 \end{itemize}
 \item If $(S \setminus C) \in \mathcal{S}'$, then all the edges between center of the star (which is in $V(S) \setminus C$) and vertices in $V(S) \cap C$ must be deleted. 
 This is at least $|V(S)\cap C| = |V(S)|-2-(|V(S) \setminus C|-2) =\lb(S) - \lb(S \setminus C)$ edges.
 \end{itemize}
 Summing over all $S \in \mathcal{S}$, we get $\lb(\mathcal{S}) - \lb (\mathcal{S}') \le |F_C|$, i.e., $\lb(\mathcal{S}) \le  |F_C| + \lb (\mathcal{S}')$.
 Therefore, we have $\sigma' \coloneqq k' - \lb(\mathcal{S}') = k - |F_C| -\lb(\mathcal{S}') \le k -\lb(\mathcal{S}')=\sigma$.
 The bound on the parameter follows, as also $\ell' \le \ell$.
\end{proof}

\begin{proof}[Proof of \Cref{thm:kernel-sigma-ell}]
The algorithm proceeds almost the same as the one of \Cref{thm:kernel-mu-ell}, just using the bound from \Cref{lem:ker_starpack_small_solution} and \Cref{rr:star_isolated_clique,rr:star_large_clique}. 
Thus, its correctness can be also proven completely analogously as for the aforementioned algorithm.

It remains to update the bound on the size of the kernel. 
It is enough to bound the size of the instance output in the last step, that is, if there is an approximate solution~$A$ of size at most $\alpha(\ell+\sigma)(\ell+2\sqrt{\sigma}+4)^2$, but none of the reduction rules applies.
Note that $(\ell+\sqrt{\sigma})^2 \in O(\ell^2+\sigma)$.
Let $X$ be the set of vertices incident to~$A$, we again have $|X| \le 2|A|\le 2\alpha(\ell+\sigma)(\ell+2\sqrt{\sigma}+4)^2$.
Let again $G''\coloneqq G^{\mathcal P}\Delta A$ be the cluster graph.
If a clique in $G''$ does not contain any vertex of~$X$, then it is also a clique in $G^\mathcal P$ and \Cref{rr:star_isolated_clique} would apply.
Therefore, each clique of~$G''$ contains a vertex of~$X$, that is, there are at most $2\alpha(\ell+\sigma)(\ell+2\sqrt{\sigma}+4)^2$ cliques.
Furthermore, each clique is of size at most $(\alpha+1)(\ell+\sigma)(\ell+2\sqrt{\sigma}+4)^2+1$, as otherwise \Cref{rr:star_large_clique} would apply.
Therefore, $G''$ contains $O((\ell+\sigma)^2(\ell+\sqrt{\sigma})^4)=O((\ell+\sigma)^2(\ell^2+\sigma)^2)$ vertices and $O((\ell+\sigma)^3(\ell+\sqrt{\sigma})^6)=O((\ell+\sigma)^3(\ell^2+\sigma)^3)$ edges.
As $G^{\mathcal{P}}$ and $G''$ only differ in edges of $A$ and $|A| \in O((\ell+\sigma)(\ell+\sqrt{\sigma})^2)=O((\ell+\sigma)(\ell^2+\sigma))$, also $G^{\mathcal{P}}$ has $O((\ell+\sigma)^3(\ell^2+\sigma)^3)$ edges.
Finally, $G$ has the same vertices as $G''$ and at most as many edges as $G^{\mathcal{P}}$, therefore it has $O((\ell+\sigma)^2(\ell^2+\sigma)^2)$ vertices and $O((\ell+\sigma)^3(\ell^2+\sigma)^3)$ edges.
\end{proof}
}

\section{Discussion}\label{sec:discussion}

We answered the question of Komusiewicz and Niedermeier~\cite{KomusiewiczN12} in the negative: \CEwFD{} is W[1]-hard with respect to the number~$\ell$ of deletions, even combined with the pathwidth and the feedback edge number.
In contrast, combining~$\ell$ with the excess~$\mu$ above a modification-disjoint $P_3$-packing yields an $O(4^{\mu+\ell}\cdot(n+m+\mu))$-time algorithm and a polynomial problem kernel, although each of the two parameters alone is intractable. 
The FPT-algorithm and the polynomial kernelization also extend to lower bounds by modification-disjoint star packings.
Moreover, most of our results also transfer to $H$-free editing for $H=K_\eta-e$.

Several questions remain open:
(1) Does \textsc{$H$-Free Editing with Few Deletions} admit a polynomial kernel for $\mu+\ell$?
(2) Can $\ell$ be replaced by smaller parameters, and how far below~$\ell$ can we go while keeping fixed-parameter tractability together with~$\mu$?
(3) Which other above-guarantee parameters yield tractability together with~$\ell$? 
Potential candidates are half-integral $P_3$-packings, i.e., where we consider a packing $\mathcal{P}$ of induced $P_3$s such that each vertex pair is in up to two $P_3$ and we parameterize by $k - 2|\mathcal{P}|$, or parameterizing above the LP relaxation of the corresponding $P_3$-packing ILP.
(4) 
One might also study \textsc{Cluster Editing} with a bounded number~$q$ of additions. 
Clearly, one obtains NP-hardness for~$q=0$ due to the NP-hardness of \textsc{Cluster Deletion}.
It is, however, open whether~$q+\mu$ is FPT.
(5) Which reduction rules for \CE{} transfer to \CEwFD{}? Candidates are generic rules that apply to every solution.

\subparagraph*{Declaration on the use of AI.}
Discussions with AI were utilized while obtaining the hardness results.
ChatGPT suggested a convoluted W[1]-hardness proof for \CEwFD{} parameterized by the number~$\ell$ of deletions; the authors subsequently constructed the present more elegant and comprehensible version, for which an initial writeup was provided by Claude Opus 5.5.
Discussions with Fable 5 and 5.1 were utilized to generalize it to \textsc{$H$-Free Editing with Few Deletions} (\Cref{thm:hfree-hardness}).
Claude Opus 5.5 was also used when editing the paper and to write the code for the experiments.
All AI-generated artifacts were checked, corrected, and rewritten by the authors, who take full responsibility for the content of this paper.

\clearpage

\appendix

\appendixText

\section{Experiments}\label{sec:exp}
\begin{table}
  \caption{Absolute magnitudes of the parameters over the 109 instances solved to proven optimality (of 189 attempted; the remaining 80 exceeded the time limit).}
  \label{tab:magnitudes}
  \centering
  \begin{tabular}{lrrrrr}
    \toprule
    & min & 25\% & median & 75\% & max \\
    \midrule
    $k$ & 3 & 128 & 322 & 672 & 1835 \\
    $\ell$ & 0 & 70 & 206 & 374 & 1564 \\
    $\mu$ & 0 & 0 & 1 & 5 & 44 \\
    $\sigma$ & 0 & 0 & 0 & 4 & 44 \\
    $\ell+\mu$ & 0 & 75 & 212 & 374 & 1564 \\
    $\ell+\sigma$ & 0 & 75 & 211 & 374 & 1564 \\
    $\mu_{vd}$ & 1 & 105 & 276 & 646 & 1768 \\
    \bottomrule
  \end{tabular}
\end{table}

\begin{table}
  \caption{Per data set, median values over the instances solved to proven optimality.  The packing columns give the lower bound as a fraction of~$k$: a value of~$1$ means the bound is tight.}
  \label{tab:families}
  \centering
  \begin{tabular}{lrrrrrrr}
    \toprule
    data set & \# & $n$ & $k$ & $\ell/k$ & $|\mathcal P|/k$ & $(\ell+\mu)/k$ & $(\ell+\sigma)/k$ \\
    \midrule
    Jena & 55 & 113--347 & 369 & 0.67 & 1.00 & 0.68 & 0.68 \\
    action sequences & 43 & 10--180 & 281 & 0.86 & 1.00 & 0.89 & 0.89 \\
    strings & 6 & 250--250 & 506 & 0.50 & 1.00 & 0.50 & 0.50 \\
    genes & 4 & 232--232 & 426 & 0.48 & 1.00 & 0.48 & 0.48 \\
    colours & 1 & 330--330 & 200 & 0.62 & 0.96 & 0.67 & 0.67 \\
    \bottomrule
  \end{tabular}
\end{table}

With our small-scale experimental study we wanted to understand how the parameters under consideration relate to the standard parameter $k$, the number of allowed edits.
To this end, we took the graph data from the Parameterized Algorithms and Computational Experiments Challenge 2021~\cite{KellerhalsKNZ21} about exact algorithms for \CE.
We used the unweighted graphs of the exact track, as published in the
challenge's instance repository~\cite{PACEinstances21}.
For each of the instances we then computed:
\begin{itemize}
\item A modification-disjoint $P_3$-packing computed as in the KaPoCE
  solver~\cite{BlasiusFGHHSWW22}, which in turn follows Gottesb\"uren et al.~\cite{GottesburenHSSW20}: We enumerate induced $P_3$s and take them as vertices in a conflict graph where the edges model shared modifications.
  Then we greedily compute an independent set in the conflict graph and then improve it by $2$-improvement local search.
\item A vertex-disjoint $P_3$-packing, computed in the same way except that two $P_3$s are already in conflict when they share a single vertex.
  Note that such a packing consists of at most $\lfloor n/3\rfloor$ many $P_3$s, which caps the lower bound it can give.
\item A modification-disjoint packing of induced stars, obtained by growing each
  $P_3$ of the modification-disjoint packing above into a maximal induced star,
  using only vertex pairs that that packing leaves free.
\item A cluster editing set that among all those minimizing the number $k$ of edits also minimizes the number of deletions. To do this, we  first compute a minimum-edit solution via an ILP over variables that model whether pairs of vertices are merged and row-generated constraints that ensure the partition property. Afterwards, we use a modification of this ILP that fixes the edits to be minimum-possible and minimizes the number of deletions.
\end{itemize}
In the results, $\mu$ is $k$ minus the size of the computed modification-disjoint $P_3$-packing, $\mu_{vd}$ is $k$ minus the size of the computed vertex-disjoint $P_3$-packing, and $\sigma$ is $k$ minus the lower bound given by the computed star packing.

We ran the experiments on an AMD Ryzen 5 5600X (6 cores, 12 threads) with 16 GB RAM under Ubuntu 24.04.5 LTS.
ILPs were solved with SCIP 10.0.2 (SoPlex 8.0.2) driven from Python 3.12.3. Four instances were processed in parallel, each SCIP call capped at 1200 MB.
Per-instance limits were 300\,s for the editing ILP and 120\,s for the packing bounds together.

Of the 189 instances, both stages of the editing ILP finished within the time
limit on 109, and these are the instances we report on.
\Cref{tab:magnitudes} reports the absolute magnitudes of the parameters and
\cref{tab:families} the solved instances and medians per source data set.

Compared to $k$, the deletion parameter $\ell$ was usually half to four fifths as
large (quartiles $0.50$ and $0.83$, median $0.73$).  The above-guarantee
parameters, by contrast, are nearly always negligible: $\mu$ has median~$1$ and
is zero on 54 of the 109 instances, with maximum~$44$, so the packing bound alone
certifies optimality of $k$ on half of them.  This is consistent with the
effectiveness that the PACE participants report for packing-based lower
bounds~\cite{BlasiusFGHHSWW22}.  Because $\mu$ is so small, $\ell+\mu$ is
essentially~$\ell$: its median is $0.74\,k$, so the combined parameter is smaller
than $k$ on 103 of the 109 instances, but only by about a quarter, and it exceeds
$k$ on two.  Growing the $P_3$s into stars improves the bound on 29 instances and
reduces the total excess from $490$ to $449$, which leaves the picture unchanged:
with $\mu$ and $\sigma$ this close to zero, there is almost nothing for a
stronger lower bound to recover.

Requiring the packed $P_3$s to be vertex-disjoint instead changes the picture
completely.  Such a packing captured only a tenth of $k$ in the median, so
$\mu_{vd}$ has median~$276$, amounts to $90\%$ of $k$ in the median, and is
nonzero on every one of the 109 instances.  The cause is the ceiling of
$\lfloor n/3\rfloor$ on the size of a vertex-disjoint packing, which 36 of the
computed packings attain exactly: since $k>n/3$ on 103 of the instances, an
excess of at least $k-\lfloor n/3\rfloor$ is unavoidable there no matter how the
packing is chosen.

\end{document}